\DeclareMathOperator*{\argmin}{arg\,min}
\newtheorem{theorem}{Theorem}
\newtheorem{proposition}{Proposition}
\newtheorem{assumption}{Assumption}
\title{Estimation and inference for the indirect effect in high-dimensional linear mediation models}
\author{Ruixuan Rachel Zhou}
\affil{Department of Statistics, University of Illinois at Urbana-Champaign}
\author{Liewei Wang}
\affil{Division of Clinical Pharmacology, Department of Molecular Pharmacology and Experimental Therapeutics, Mayo Clinic}
\author{Sihai Dave Zhao}
\affil{Department of Statistics and Carl R. Woese Institute for Genomic Biology, University of Illinois at Urbana-Champaign}
\begin{document}
\date{}
\maketitle

\begin{abstract}
Mediation analysis is difficult when the number of potential mediators is larger than the sample size. In this paper we propose new inference procedures for the indirect effect in the presence of high-dimensional mediators for linear mediation models. We develop methods for both incomplete mediation, where a direct effect may exist, as well as complete mediation, where the direct effect is known to be absent. We prove consistency and asymptotic normality of our indirect effect estimators. Under complete mediation, where the indirect effect is equivalent to the total effect, we further prove that our approach gives a more powerful test compared to directly testing for the total effect. We confirm our theoretical results in simulations, as well as in an integrative analysis of gene expression and genotype data from a pharmacogenomic study of drug response. We present a novel analysis of gene sets to understand the molecular mechanisms of drug response, and also identify a genome-wide significant noncoding genetic variant that cannot be detected using standard analysis methods.
\end{abstract}

\section{\label{sec:intro}Introduction}
Mediation analysis is of great interest in many areas of research, such as psychology, epidemiology, and genomics \citep{mackinnon2008introduction, hayes2013introduction, huang2014joint}. A major goal is to understand the direct and indirect effects of an exposure variable on an outcome variable, potentially mediated through several intervening variables. Statistical methods for estimating and testing direct and indirect effects are well-developed when the number of mediator variables is relatively small \citep{hayes2013introduction, vanderweele2014mediation, vanderweele2015explanation}, but problems arise when the number of potential mediators exceeds the sample size. This high-dimensional scenario is common in genomics applications. For example, the effects of genetic variants may be mediated through the regulation of gene expression, but it is usually not known \textit{a priori} which genes are regulated, so the total number of potential mediators can be very large. 

General methods for high-dimensional inference are currently the subject of intense research. Techniques based on debiasing penalized regression estimators have been shown to provide asymptotically normal and unbiased estimators for certain parametric sparse regression models \citep{van2014asymptotically, zhang2014confidence, javanmard2014confidence, javanmard2018debiasing}. The sparsity level of the regression parameter is not typically known. \citet{cai2017confidence} discussed the construction of confidence intervals that can adapt to this unknown sparsity, and \citet{zhu2018linear} proposed a test that avoids the sparsity requirement by instead assuming that the precision matrix is known or has certain sparsity properties. While these methods can be used for testing direct effects, they cannot be directly applied to perform inference on indirect effects. One approach is to use them to extend low-dimensional mediation analysis methods such as \citet{vanderweele2014mediation}, but it may be difficult to achieve valid inference, for reasons that will be explained in Section~\ref{sec:model}.

Several semiparametric high-dimensional methods have recently been proposed in the causal inference literature, for the purposes of doing inference on causal effects in the presence of high-dimensional controls \citep{belloni2017program, athey2018approximate}. In particular, the procedure of \citet{athey2018approximate} is closely related to the method proposed here, and is discussed in detail in Section~\ref{sec:connections}. However, these approaches do not directly apply to estimating indirect effects in high-dimensional mediation models. \citet{chen2015high} and \citet{huang2016hypothesis} use principal components analysis to reduce the dimensionality of the mediators, and employ the bootstrap for inference. \citet{hanson2016computational} and \citet{zhang2016estimating} first screen the mediators according to their marginal correlations with the response.

In this paper, we propose, and provide asymptotic guarantees for, two new inferential procedures for the indirect effect in high-dimensional linear mediation analysis models. We first consider the incomplete mediation setting, where both direct and indirect effects might exist. This is a common scenario, for example in genome-wide methylation studies that investigate whether environmental exposures exert their effects on phenotype by altering DNA methylation patterns. The exposures may also act through a non-methylation pathway, giving rise to potential direct effects. We illustrate another application in Section~\ref{sec:data}, where we identify gene sets that may mediate the effect of a gene of interest on a drug response phenotype.

We then consider the complete mediation setting, when it is known that a direct effect does not exist. This setting is common when studying genetic variants located in noncoding regions of the genome, which typically can only exert their effects on a phenotype by regulating gene expression. Recent work has shown that in the low dimensional case, testing for the indirect effect can be much more powerful than directly testing the total effect, even though both are equal under complete mediation \citep{kenny2014power, zhao2014more, loeys2015cautionary}. We show theoretically and in simulations that this is also true for our proposed high dimensional method. Our work can thus be useful in genome-wide association studies where powerful tests are required to detect important variants. In an analysis of the genetics of drug response in Section~\ref{sec:data}, our method was able to identify a genome-wide significant noncoding genetic variant that could not be detected by the standard approach.
\section{\label{sec:proposed}Proposed methods}
\subsection{\label{sec:model}Mediation model and notation}

For the $i$th subject, $i=1, \dots ,n$, let $Y_i$ be the outcome, $G_i$ be a vector of $p$ mediators, and $S_i$ be a vector of $q$ exposures. We allow $p$ to be larger than the sample size $n$, but we assume that $S_i$ is low-dimensional. Finally, assume that the $Y_i$, $G_i$, and $S_i$ have all been centered to have zero mean. We consider the following linear mediation model:
\begin{equation}
  \label{eq:model}
  Y_i = G_i^\top \alpha_0 + S_i^\top \alpha_1 + \epsilon_{1i},
  \quad
  G_i = \gamma S_i + E_{i},
\end{equation}
where $\epsilon_{1i}$ are mean-zero random variables and $E_i$ are mean-zero random vectors that are independent of $G_i$ and $S_i$. Model \eqref{eq:model} implies $G_i^\top \alpha_0 = S_i^\top \gamma^\top \alpha_0 + \epsilon_{2i}$, where $\epsilon_{2i} = E_i^\top \alpha_0$. Let $\sigma_1^2$ denote the variance of $\epsilon_{1i}$ and $\sigma_2^2$ denote the variance of $\epsilon_{2i}$.

We are interested in performing inference on the indirect effect
\begin{equation}
  \label{eq:beta0}
  \gamma^\top \alpha_0 \equiv \beta_0
\end{equation}
of $S_i$ on $Y_i$ when the dimension of $G_i$ exceeds the sample size. We will describe separate methods for the incomplete mediation setting, where $S_i$ may have a direct effect on $Y_i$ through $\alpha_1$, and the complete mediation setting, where $\alpha_1$ is assumed to equal zero. We will assume throughout that $\alpha_0$ is sparse, so that only a small number of variables mediate the effect of $S_i$ on $Y_i$.

Assuming that model \eqref{eq:model} is correctly specified with no unmeasured confounders, $\beta_0$ and the direct effect $\alpha_1$ admit causal interpretations under a counterfactual framework, analogous to low-dimensional mediation models \citep{huang2014joint, vanderweele2014mediation}. See Section \ref{sec:causal_inter} of the Supplementary Materials for a detailed discussion. Our method can also accommodate measured confounders or covariates. If $Z_i$ is a low-dimensional vector of potential confounders, we could write
\[
Y_i = G_i^\top \alpha_0 + S_i^\top \alpha_1 + Z_i^\top \alpha_z + \epsilon_{1i},
\quad
G_i = \gamma S_i + \gamma_z Z_i + E_{i}.
\]
For example, in our data analysis in Section~\ref{sec:data}, we let $Z_i$ be a set of principal components to adjust for population stratification. Section \ref{sec:covs} of the Supplementary Materials describes how our proposed procedures can be modified for this setting. In this paper we do not consider more complicated models, such as interactions between the $Z_i$ and $G_i$ or between $S_i$ and $G_i$. These may require additional methodological development, which we leave for future work. 

The remainder of the paper will use the following notation. Let $S$ be an $n \times q$ matrix of the $S_i$, $G$ be an $n \times p$ matrix of the $G_i$, $Y$ be an $n \times 1$ vector of the $Y_i$, $\epsilon_1$ be an $n \times 1$ vector of the $\epsilon_{1i}$, and $E_i$ be an $n \times p$ matrix of the $E_i$. Define the vector $X_i = (G_i^\top, S_i^\top)^\top$. Also define the sample matrices $\hat{\Sigma}_{SS} = n^{-1} \sum_i (S_i S_i^\top)$, $\hat{\Sigma}_{SG} = n^{-1} \sum_i (S_i G_i^\top)$, $\hat{\Sigma}_{GG} = n^{-1} \sum_i (G_i G_i^\top)$, $\hat{\Sigma}_{GY} = n^{-1} \sum_i (G_i Y_i)$, $\hat{\Sigma}_{XY} = n^{-1} \sum_i X_i Y_i$, and $\hat{\Sigma}_{XX} = n^{-1} \sum_i X_i X_i^\top$, as well as their population-level versions $\Sigma_{SS}$, $\Sigma_{SG}$, $\Sigma_{GG}$, $\Sigma_{GY}$, $\Sigma_{XX}$, and $\Sigma_{XY}$. Finally, for any matrix $A$, let $a_{ij}$ denote the $ij$th entry and let $\Vert A \Vert_{L_1} = \max_j \sum_i \vert a_{ij} \vert$, $\Vert \cdot \Vert_1$ denote the element-wise $\ell_1$ norm, and $\Vert \cdot \Vert_\infty$ denote the element-wise $\ell_\infty$ norm of either a vector or a matrix.

\subsection{\label{sec:intuitions}Intuitions}

This section provides an intuitive description of the challenges of performing inference on the indirect effect $\beta_0$ \eqref{eq:beta0} with high-dimensional mediators. For simplicity, in this subsection we assume that the direct effect $\alpha_1 = 0$. In the low-dimensional problem when $p < n$, $\gamma^\top$ and $\alpha_0$ can be estimated using the ordinary least squares estimates $\tilde{\gamma}^\top = \hat{\Sigma}_{SS}^{-1} \hat{\Sigma}_{SG}$ and $\tilde{\alpha}_0 = \hat{\Sigma}_{GG}^{-1} \hat{\Sigma}_{GY}$, respectively. Then $\beta_0$ can be estimated using $\tilde{\gamma}^\top \tilde{\alpha}_0$. Inference is straightforward because this product estimator typically has an asymptotically normal distribution \citep{sobel1982asymptotic, zhao2014more}, though see the last paragraph of Section \ref{subsec:incomplete}. In high-dimensions, when $p$ exceeds $n$, the challenge is that the ordinary least squares estimator of $\alpha_0$ does not exist. Since $\alpha_0$ is sparse, one solution would be to use penalized regression, such as the lasso, to estimate $\alpha_0$. However these do not have tractable limiting distributions, so inference on $\beta_0$ using this approach is difficult. We illustrate this in simulations in Section \ref{sec:numerical}.

An alternative might be to instead use a debiased lasso estimator $\check \alpha_0$ of $\alpha_0$, whose components do have nice asymptotic distributions \citep{javanmard2014confidence, van2014asymptotically, zhang2014confidence}. We first briefly introduce $\check{\alpha}_0$ following \citet{javanmard2014confidence}. In high-dimensions, the ordinary least squares estimator $\hat{\Sigma}_{GG}^{-1} \hat{\Sigma}_{GY}$ is not feasible because $\hat{\Sigma}_{GG}$ is no longer invertible, but we can still consider estimators of the form $\hat{\Omega} \hat{\Sigma}_{GY}$ for a different data-dependent matrix $\hat{\Omega}$. By model \eqref{eq:model},
  \begin{equation}
    \label{eq:pilot}
    \hat{\Omega} \hat{\Sigma}_{GY} - \alpha_0=  (\hat{\Omega} \hat{\Sigma}_{GG} - I) \alpha_0 + \frac{1}{n} \hat{\Omega} G^\top \epsilon_1,
  \end{equation}
  where $I$ is the $p \times p$ identity matrix. In general, $\hat{\Omega} \hat{\Sigma}_{GY}$ will therefore be a biased estimator, with bias equal to $(\hat{\Omega} \hat{\Sigma}_{GG} - I) \alpha_0$. When $\alpha_0$ is sparse, it turns out that this bias can be well-estimated by $(\hat{\Omega} \hat{\Sigma}_{GG} - I) \tilde{\alpha}_0$, if we carefully construct $\hat\Omega$ so that $\Vert \hat\Omega \Sigma_{GG} - I \Vert_\infty$ is small and $\tilde{\alpha}_0$ is a lasso estimate of $\alpha_0$ so that $\Vert\tilde\alpha_0 - \alpha_0\Vert_1$ is small; for more details see \citet{javanmard2014confidence}. The debiased lasso estimator is then constructed by substracting the estimated bias from $\hat{\Omega} \hat{\Sigma}_{GY}$:
  \begin{equation}
    \label{eq:debiased}
    \check \alpha_0
    =
    \hat{\Omega} \hat{\Sigma}_{GY} - (\hat{\Omega} \hat{\Sigma}_{GG} - I ) \tilde{\alpha}_0
    =
    \tilde \alpha_0 + \frac{1}{n} \hat\Omega G^\top (Y - G \tilde\alpha)
    =
    \alpha_0 + \frac{1}{n} \hat{\Omega} G^\top \epsilon_1 + \Delta,
  \end{equation}
  where $\Delta = (\hat{\Omega} \hat{\Sigma}_{GG} - I) (\alpha_0 - \tilde \alpha)$. It can be shown for suitably constructed $\hat{\Omega}$ that each component of $\Delta$ is $o_P(n^{-1/2})$, so that each component of $n^{1/2} (\check{\alpha}_0 - \alpha_0)$ is asymptotically normal. \citet{javanmard2014confidence} chose $\hat{\Omega}$ to minimize the variance of $\check{\alpha}_0$ while \citet{van2014asymptotically} and \cite{zhang2014confidence} chose $\hat{\Omega}$ to estimate the precision matrix $\Sigma_{GG}^{-1}$.

Despite these encouraging properties, inference using the corresponding estimator $\tilde{\gamma}^\top \check{\alpha}_0$ for $\beta_0$ is still not always possible. Using \eqref{eq:debiased},
  \[
  \tilde{\gamma}^\top \check{\alpha}_0
  =
  \tilde{\gamma}^\top \hat{\Omega} \hat{\Sigma}_{GY} - \tilde{\gamma}^\top (\hat{\Omega} \hat{\Sigma}_{GG} - I) \tilde{\alpha}_0
  =
  \beta_0 + (\tilde{\gamma}^\top - \gamma) \alpha_0 + \frac{1}{n} \tilde{\gamma}^\top \hat{\Omega} G^\top \epsilon_1
  +
  \tilde{\gamma}^\top \Delta,
  \]
which can be interpreted as a debiased version of $\tilde{\gamma}^\top \hat{\Omega} \hat{\Sigma}_{GY}$ for $\beta_0$. 

However, the error $\tilde{\gamma}^\top \Delta$ is no longer negligible: even though each component of $\Delta$ is $o_P(n^{-1/2})$, the linear combination $\tilde{\gamma}^\top \Delta$ may not be, so $n^{-1/2} (\tilde{\gamma}^\top \check{\alpha}_0 - \beta_0)$ may not have an easily characterized asymptotic distribution. We argue in Section \ref{sec:sparse_Sigma_SG} of the Supplementary Materials that we would need to at least assume either that $p \log(p) / n^{1/2} \rightarrow 0$ or that $\gamma$ is sparse in order for $\tilde{\gamma}^\top \Delta = o_P(n^{-1/2})$. However, these conditions are restrictive.

In this paper we propose an estimate of $\beta_0$ under the weaker assumption that $\log (p) / n^{1/2} \rightarrow 0$, and without assumptions on the sparsity of $\gamma$. Our central idea is to develop a debiased estimator not of $\alpha_0$ or $\beta_0$, but of $\Sigma_{SG} \alpha_0$. We will show that the bias of our initial estimator for this quantity can be estimated sufficiently accurately as long as we construct the matrix $\hat{\Omega}$ appropriately. By pre-multiplying our debaised estimate of $\Sigma_{SG} \alpha_0$ by the low-dimensional quantity $\hat{\Sigma}_{SS}^{-1}$, we will obtain an asymptotically normal estimate of $\beta_0$.

\subsection{\label{sec:incomplete}Inference for indirect effect under incomplete mediation}

We first estimate the indirect effect $\beta_0$ \eqref{eq:beta0} under incomplete mediation, where $\alpha_1$ is allowed to be non-zero. Let $X = (G, S)$ be the $n \times (p + q)$ design matrix and $\alpha = (\alpha_0^\top, \alpha_1^\top) ^\top$. As described in Section \ref{sec:intuitions}, our strategy is to first obtain a debiased estimator for $\Sigma_{SG} \alpha_0$, which we will then pre-multiply by $\hat{\Sigma}_{SS}^{-1}$. First define
\begin{equation}
  \label{eq:D}
  D =
  \begin{pmatrix}
    \Sigma_{SG} & 0 \\
    0 & \Sigma_{SS}
  \end{pmatrix},
  \quad
  \hat{D} =
  \begin{pmatrix}
    \hat{\Sigma}_{SG} & 0 \\
    0 & \hat{\Sigma}_{SS}
  \end{pmatrix}.
\end{equation}
Following \eqref{eq:pilot}, we first consider estimators of the form $\hat{\Omega}_I \hat{\Sigma}_{XY}$, for a matrix $\hat{\Omega}_I$ that we will construct later. By model \eqref{eq:model},
\[
\hat{\Omega}_I \hat{\Sigma}_{XY} - 
\begin{pmatrix}
  \Sigma_{SG} \alpha_0\\
  \Sigma_{SS} \alpha_1
\end{pmatrix}
=
(\hat{\Omega}_I \hat{\Sigma}_{XX} - D) \alpha  
+
\frac{1}{n} \hat{\Omega}_I X^\top \epsilon_1,
\]
Since $\alpha$ is sparse, we can estimate the bias term using $(\hat{\Omega}_I \hat{\Sigma}_{XX} - \hat{D}) \tilde{\alpha}$ as in \eqref{eq:pilot}, using a carefully construct $\hat\Omega$ and where $\tilde{\alpha}$ is a lasso estimate of $\alpha$. We will use the scaled lasso of \citet{sun2012scaled} because it also provides a consistent estimate of the variance of the $Y_i$, which will be useful later. We may also leave $\alpha_1$ unpenalized, which is further discussed in Section \ref{sec:dis}.

We can therefore construct a debiased estimate of $\Sigma_{SG} \alpha_0$ by subtracting the estimated bias from $\hat{\Omega}_I \hat{\Sigma}_{XY}$, analogous to \eqref{eq:debiased}. We then pre-multiply the debiased estimator by $I_2 \otimes \hat{\Sigma}_{SS}^{-1}$, where $I_2$ denotes the $2 \times 2$ identity matrix and $\otimes$ denotes the Kronecker product. This gives our proposed estimator $\hat{b}$ for the indirect effect $\beta_0$ under incomplete mediation, as well as an estimate $\hat{a}$ of the direct effect $\alpha_1$:
\begin{align}
  \begin{pmatrix}
    \hat{b} \\
    \hat{a}
  \end{pmatrix}
   =\,&
   (I_2 \otimes \hat{\Sigma}_{SS}^{-1}) \{ \hat{\Omega}_I \hat{\Sigma}_{XY} - (\hat{\Omega}_I \hat{\Sigma}_{XX} - \hat{D}) \tilde{\alpha} \}
   \nonumber \\
  =\,&
  \label{eq:incomplete}
  \begin{pmatrix}
    \hat{\Sigma}_{SS}^{-1} \hat{\Sigma}_{SG} \tilde{\alpha}_0 \\
    \tilde{\alpha}_1
  \end{pmatrix}
  +
  (I_2 \otimes \hat{\Sigma}_{SS}^{-1}) \frac{1}{n} \hat{\Omega}_I X^\top (Y - X \tilde{\alpha}),
\end{align}
where $\tilde{\alpha}_1$ is the component of $\tilde{\alpha}$ that estimates $\alpha_1$.

Analogous to \eqref{eq:debiased}, it remains to find a suitable matrix $\hat{\Omega}_I$ so that $(\hat{\Omega}_I \hat{\Sigma}_{XX} - \hat{D})$ is small. We propose to choose $\hat{\Omega}_I$ to estimate the matrix $D \Sigma_{XX}^{-1}$, for $D$ defined in \eqref{eq:D} and $\Sigma_{XX} = E(X_i X_i^\top)$.
Our estimator is based on constrained $\ell_1$ optimization, similar to the precision matrix estimation procedure of \citet{cai2011constrained}:
\begin{equation}
  \label{eq:M_I}
  \hat{\Omega}_I
  =
  \argmin_\Omega \Vert \Omega \Vert_1
  \text{ subject to }
  \Vert \Omega \hat\Sigma_{XX} - \hat{D} \Vert_\infty \leq \tau_n,
\end{equation}
where $\tau_n$ is a tuning parameter. We will show in Section~\ref{sec:theory} that $\hat{\Omega}_I$ will converge to $D \Sigma_{XX}^{-1}$ under the condition that $D \Sigma_{XX}^{-1}$ is sparse.

We show in Section~\ref{sec:theory} that under certain conditions, $(\hat{b}, \hat{a})$ is asymptotically normal and centered at the true $(\beta_0, \alpha_1)$. We also provide estimates of the asymptotic variance of $\hat{b}$, which will allow us to construct confidence intervals and conduct Wald tests for the indirect effects. Though this paper focuses on the indirect effect, \eqref{eq:incomplete} also gives an estimate $\hat{a}$ for the direct effect. As pointed out by a referee, the direct effect could also be estimated by subtracting $\hat b$ from the ordinary least squares estimate of the total effect of $S_i$ on $Y_i$. We show in Section \ref{compare_incomplete} in the Supplementary Materials that these two approaches are asymptotically equivalent.

\subsection{\label{sec:complete}Inference for indirect effect under complete mediation}

In some applications, for example in the analysis of noncoding genetic variants, it may be known that exposure does not act directly on the outcome, and only acts through mediators. We can make use of the extra information that $\alpha_1 = 0$ to develop a more efficient procedure for estimating the indirect effect $\beta_0$ \eqref{eq:beta0}. As above, we first obtain a debiased estimator for $\Sigma_{SG} \alpha_0$ and then pre-multiply by $\hat{\Sigma}_{SS}^{-1}$. We again first consider estimators of the form $\hat{\Omega}_C \hat{\Sigma}_{GY}$, which by model \eqref{eq:model} satisfy
\[
\hat{\Omega}_C \hat{\Sigma}_{GY} - \Sigma_{SG} \alpha_0
=
(\hat{\Omega}_C \hat{\Sigma}_{GG} - \Sigma_{SG}) \alpha_0 + \frac{1}{n} \hat{\Omega}_C G^\top \epsilon_1.
\]
We construct $\hat{\Omega}_C$ to estimate $\Sigma_{SG} \Sigma_{GG}^{-1}$, analogous to $\hat{\Omega}_I$ \eqref{eq:M_I} above:
\begin{equation}
  \label{eq:M_C}
  \hat{\Omega}_C = \argmin_\Omega \Vert \Omega \Vert_1
  \text{ subject to }
  \Vert \Omega \hat{\Sigma}_{GG} - \hat{\Sigma}_{SG} \Vert_\infty \leq \tau'_n,
\end{equation}
where $\tau'_n$ is a tuning parameter. We show in Section \ref{sec:theory} that $\hat{\Omega}_C$ will converge to $\Sigma_{SG} \Sigma_{GG}^{-1}$ if the latter is sparse. If $\tilde{\alpha}_0$ is the scaled lasso estimate of $\alpha_0$, we can estimate the bias of $\hat{\Omega}_C \hat{\Sigma}_{GY}$ using $(\hat{\Omega}_C \hat{\Sigma}_{GG} - \hat{\Sigma}_{SG}) \tilde{\alpha}_0$. Subtracting this from $\hat{\Omega}_C \hat{\Sigma}_{GY}$ and premultiplying by $\hat{\Sigma}_{SS}^{-1}$ gives our proposed estimate of $\beta_0$ under complete mediation:
\begin{equation}
  \label{eq:complete}
  \tilde{b}
  =
  \hat{\Sigma}_{SS}^{-1} \{ \hat{\Omega}_C \hat{\Sigma}_{GY} - (\hat{\Omega}_C \hat{\Sigma}_{GG} - \hat{\Sigma}_{SG}) \tilde{\alpha}_0 \}
  =
  \hat{\Sigma}_{SS}^{-1} \hat{\Sigma}_{SG} \tilde{\alpha}_0 + \hat{\Sigma}_{SS}^{-1} \frac{1}{n} \hat{\Omega}_C G^\top (Y - G^\top \tilde{\alpha}_0).
\end{equation}
We show in Section~\ref{sec:theory} that $\tilde{b}$ is asymptotically normal and centered at the true $\beta_0$, and provide estimates for its asymptotic variance.

This estimator has an interesting efficiency property. Under complete mediation, $\beta_0$ can also be estimated by directly regressing $Y_i$ on $S_i$ and ignoring the mediating gene expression information. We will show that the asymptotic variance of the ordinary least squares estimator of $\beta_0$ is always greater than or equal to the variance of our $\tilde{b}$. The same phenomenon has been observed in a low-dimensional mediation model \citep{kenny2014power, zhao2014more, loeys2015cautionary}. Intuitively, our procedure achieves this efficiency gain by denoising the outcome $Y_i$, replacing it with an estimate $G_i^\top \tilde \alpha_0$ of its conditional expectation $G_i^\top \alpha_0$ and thus removing much of the variation from the error term $\epsilon_{1i}$. 

\subsection{\label{sec:connections}Connections to existing work}

Estimating the indirect effect in high dimensions is challenging because $\beta_0$ \eqref{eq:beta0} is a linear combination of the high-dimensional vector $\alpha_0$. \citet{athey2018approximate} encountered a similar issue studying inference for a causal effect in the presence of high-dimensional controls and also took a debiasing approach. Both of our approaches can be viewed as debiasing a pilot estimator by subtracting a weighted sum of the residuals from a fitted penalized regression model for $Y_i$. \citet{athey2018approximate} chose the weights in this weighted sum to minimize the estimation error of the desired linear combination, while our weights are equal to $(I_2 \otimes \hat \Sigma_{SS}^{-1}) \hat{\Omega}_I$ in \eqref{eq:incomplete} and $\Sigma_{SS}^{-1} \hat{\Omega}_C$ in \eqref{eq:complete}. The coefficients of the desired linear combination are known in the setting of \citet{athey2018approximate}, while in our approach they are equal to $\Sigma_{SG}$ and must be estimated, so the method of \citet{athey2018approximate} is not directly applicable here. It would be interesting to apply their strategy to our mediation framework in the future.

There are alternative approaches to constructing the matrices $\hat{\Omega}_I$ \eqref{eq:M_I} and $\hat{\Omega}_C$ \eqref{eq:M_C}. One method might be to choose them to minimize the variances of the resulting estimators $\hat a$, $\hat b$, and $\tilde{b}$ while controlling their biases. In the standard linear regression setting with high-dimensional covariates, \citet{javanmard2014confidence} showed that this strategy can give asymptotically optimal inference without requiring the precision matrix of the covariates to be sparse. As pointed out by a referee, applying this strategy to the present mediation setting may obviate the need to assume sparsity of $D \Sigma_{XX}^{-1}$ and $\Sigma_{SG} \Sigma_{GG}^{-1}$. This is an important direction for future work, and Section \ref{sec:sparse_est} of the Supplementary Materials contains a detailed discussion and simulation study exploring the robustness of our procedure to the accuracy of estimating $D \Sigma_{XX}^{-1}$ and $\Sigma_{SG} \Sigma_{GG}^{-1}$. On the other hand, our current strategy of choosing $\hat \Omega_I$ and $\hat \Omega_C$ to estimate $D \Sigma_{XX}^{-1}$ and $\Sigma_{SG} \Sigma_{GG}^{-1}$ allows us to characterize the asymptotic variances of our proposed estimators in terms of population-level quantities, as well as to construct consistent estimates of those variances. \citet{hirshberg2017augmented} studied a similar approach for a more general class of debiased estimators.

\section{\label{sec:theory}Theoretical Results}

\subsection{\label{subsec:incomplete}Incomplete Mediation}

This section presents the theoretical properties of our proposed indirect effect inference procedure under incomplete mediation. We first require $G_{ij}$, $S_i$, and residual error $\epsilon_{1i}$ to have exponential-type tails and make several sparsity assumptions.

\begin{assumption}
  \label{a:tails}
  For each $j = 1, \ldots, p$, $G_{ij}$ has mean zero and $\text{E} \{ \exp(t G_{ij}^2) \} \leq K < \infty$ for some constant $K$ and all $\vert t \vert \leq \eta$, where $\eta \in (0, 1/4)$ and $\{ \log (p+q) \} / n \leq \eta$. The same tail conditions hold for $S_i$ and $\epsilon_{i1}$.
\end{assumption}

\begin{assumption}
  \label{a:sparse_matrices_xx}
  For $D$ defined in \eqref{eq:D}, there exist constants $M_X$ and $N_X$ such that $\Vert \Sigma_{XX}^{-1} \Vert_{L_1} \leq M_X$ and $\Vert (D \Sigma_{XX}^{-1})^\top \Vert_{L_1} \leq N_X$. Furthermore if $\omega_{ij}$ denotes the $ij$th entry of $D \Sigma_{XX}^{-1}$, then $\max_i \sum_j \vert \omega_{ij} \vert^\theta < s_0$ for some $s_0$ and $\theta \in [0, 1)$. 
\end{assumption}

The quantity $s_0$ in Assumption \ref{a:sparse_matrices_xx} measures the degree of sparsity of $D \Sigma_{XX}^{-1}$. The condition on $\Vert \Sigma_{XX}^{-1} \Vert_{L_1}$ requires that none of the rows contain too many large entries. This is reasonable, as precision matrices are frequently used to model conditional dependencies between genes in a gene network \citep{danaher2014joint, zhao2014direct}, and gene networks are typically thought to be sparse. The condition on $\Vert D \Sigma_{XX}^{-1} \Vert_{L_1}$ is related to the irrepresentable condition of \citet{zhao2006model}, and is similar to requiring that $S_i$ cannot be completely explained by $G_i$.

\begin{theorem}
  \label{thm:omegaxx}
  Let $\hat\Omega_{I}$ solve~\eqref{eq:M_I} with tuning parameter $\tau_n = (N_X+1) C_1 \{(\log (p + q) ) / n\}^{1/2}$ for $C_1 = 2 \eta^{-2} (2 + \tau + \eta^{-1} e^2 K^2)^2$, where $K$ and $\eta$ are from Assumption~\ref{a:sparse_matrices_xx} and $\tau>0$. Then under Assumptions~\ref{a:tails} and~\ref{a:sparse_matrices_xx}, with probability greater than $(1 - 4 p^{-\tau})$ and $D$ defined in \eqref{eq:D},
  \[
  \Vert\hat\Omega_{I} - D \Sigma_{XX}^{-1} \Vert_{\infty} \leq (4N_X+2) C_1 M_X \{(\log p) / n\}^{1/2}.
  \]
\end{theorem}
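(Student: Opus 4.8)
The plan is to follow the constrained-$\ell_1$ (CLIME) template of \citet{cai2011constrained}, adapted to the rectangular target $\Omega_0 \equiv D\Sigma_{XX}^{-1}$, which by construction satisfies the exact population identity $\Omega_0\Sigma_{XX} = D$. Writing $R = \hat\Omega_I - \Omega_0$, I would split the argument into a probabilistic step (an entrywise concentration bound for the sample second moments) and a deterministic step (a feasibility-plus-optimality comparison, followed by a conversion from the $\hat\Sigma_{XX}$-weighted error to $\Vert R\Vert_\infty$). Throughout I would work on the single high-probability event on which the concentration bound holds, so that everything after Step~1 is a chain of deterministic inequalities.

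First I would establish the concentration lemma: under Assumption~\ref{a:tails}, with probability at least $1-4p^{-\tau}$,
\[
\Vert\hat\Sigma_{XX}-\Sigma_{XX}\Vert_\infty \le C_1\{(\log(p+q))/n\}^{1/2},
\]
with $C_1 = 2\eta^{-2}(2+\tau+\eta^{-1}e^2K^2)^2$ and $K,\eta$ the constants from Assumption~\ref{a:tails}. This is where I expect the real work to be. Each entry of $\hat\Sigma_{XX}$ is an average of products of two coordinates of $X_i = (G_i^\top, S_i^\top)^\top$; the exponential-moment condition $\E\{\exp(tG_{ij}^2)\}\le K$ (and its analogues for $S_i$) makes the coordinates sub-Gaussian, hence each product sub-exponential. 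I would apply a Bernstein-type tail bound to each of the $(p+q)^2$ centered averages, tracking constants carefully so the moment-generating-function bound produces exactly the stated $C_1$, and then take a union bound over all entries; the factor $4p^{-\tau}$ and the $\{(\log(p+q))/n\}^{1/2}$ rate come from this union bound. Since $\hat D - D$ is a submatrix of $\hat\Sigma_{XX}-\Sigma_{XX}$, the same event also gives $\Vert\hat D-D\Vert_\infty \le C_1\{(\log(p+q))/n\}^{1/2}$ at no extra cost.

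Next I would verify that $\Omega_0$ is feasible for \eqref{eq:M_I}. Using $\Omega_0\Sigma_{XX}=D$,
\[
\Omega_0\hat\Sigma_{XX}-\hat D = \Omega_0(\hat\Sigma_{XX}-\Sigma_{XX}) - (\hat D - D),
\]
so by the elementwise bound $\Vert AB\Vert_\infty\le\Vert A^\top\Vert_{L_1}\Vert B\Vert_\infty$ together with $\Vert\Omega_0^\top\Vert_{L_1}\le N_X$ (Assumption~\ref{a:sparse_matrices_xx}) and Step~1, $\Vert\Omega_0\hat\Sigma_{XX}-\hat D\Vert_\infty\le(N_X+1)C_1\{(\log(p+q))/n\}^{1/2}=\tau_n$, so $\Omega_0$ lies in the feasible set. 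The crucial structural observation is that both the objective $\Vert\Omega\Vert_1=\sum_i\Vert\Omega_{i\cdot}\Vert_1$ and the constraint $\Vert\Omega\hat\Sigma_{XX}-\hat D\Vert_\infty\le\tau_n$ decouple across rows, so \eqref{eq:M_I} solves $2q$ independent row problems. Feasibility of $\Omega_0$ then gives the row-wise comparison $\Vert(\hat\Omega_I)_{i\cdot}\Vert_1\le\Vert(\Omega_0)_{i\cdot}\Vert_1\le N_X$ for every row $i$, whence $\Vert\hat\Omega_I^\top\Vert_{L_1}\le N_X$ and, by the triangle inequality, $\Vert R^\top\Vert_{L_1}\le 2N_X$.

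Finally I would assemble the deterministic conversion. Since $\hat\Omega_I$ and $\Omega_0$ both satisfy the constraint, the triangle inequality gives $\Vert R\hat\Sigma_{XX}\Vert_\infty\le 2\tau_n$. Plugging in the identity $R = R\hat\Sigma_{XX}\Sigma_{XX}^{-1} - R(\hat\Sigma_{XX}-\Sigma_{XX})\Sigma_{XX}^{-1}$ and applying the two forms of the elementwise bound, $\Vert AB\Vert_\infty\le\Vert A\Vert_\infty\Vert B\Vert_{L_1}$ and $\Vert AB\Vert_\infty\le\Vert A^\top\Vert_{L_1}\Vert B\Vert_\infty$, together with $\Vert\Sigma_{XX}^{-1}\Vert_{L_1}\le M_X$ and the two bounds $\Vert R\hat\Sigma_{XX}\Vert_\infty\le 2\tau_n$ and $\Vert R^\top\Vert_{L_1}\le 2N_X$,
\[
\Vert R\Vert_\infty \le 2\tau_n M_X + 2N_X\,C_1 M_X\{(\log(p+q))/n\}^{1/2} = (4N_X+2)C_1 M_X\{(\log(p+q))/n\}^{1/2},
\]
since $2\tau_n M_X=(2N_X+2)C_1 M_X\{(\log(p+q))/n\}^{1/2}$. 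Absorbing $\log(p+q)\asymp\log p$ (as $q$ is fixed) yields the stated bound. Note that the sparsity-exponent condition $\max_i\sum_j|\omega_{ij}|^\theta<s_0$ is not needed for this $\Vert\cdot\Vert_\infty$ rate; it would only enter a companion bound on $\Vert R\Vert_{L_1}$. The main obstacle is thus entirely concentrated in Step~1, namely producing the sub-exponential Bernstein bound with the precise constant $C_1$ and tail probability $4p^{-\tau}$.
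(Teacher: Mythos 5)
Your proposal is correct and follows essentially the same route as the paper's proof: feasibility of $D\Sigma_{XX}^{-1}$ via the entrywise concentration bound $\Vert\hat\Sigma_{XX}-\Sigma_{XX}\Vert_\infty\leq C_1\{\log(p+q)/n\}^{1/2}$ (which the paper imports from the proofs of Theorems~1 and~4 of \citet{cai2011constrained} rather than re-deriving), the row-decoupling argument yielding $\Vert\hat\Omega_I^\top\Vert_{L_1}\leq\Vert(D\Sigma_{XX}^{-1})^\top\Vert_{L_1}\leq N_X$, and the deterministic conversion through $\Vert\Sigma_{XX}^{-1}\Vert_{L_1}\leq M_X$ giving the constant $(4N_X+2)C_1M_X$. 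The only cosmetic difference is that you bound $\Vert(\hat\Omega_I-D\Sigma_{XX}^{-1})\hat\Sigma_{XX}\Vert_\infty\leq 2\tau_n$ first and then pass to the population Gram matrix, whereas the paper (proving the $\hat\Omega_C$ case and extending to $\hat\Omega_I$ by analogy) decomposes $\Vert(\hat\Omega_C-\Omega_C)\Sigma_{GG}\Vert_\infty$ directly; the two orderings use identical ingredients and produce identical constants, and your side observations (the $\theta$-sparsity condition being unused here, $\log(p+q)\asymp\log p$, and the typo attributing $K,\eta$ to the wrong assumption) are all accurate.
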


Theorem~\ref{thm:omegaxx} shows that our $\hat{\Omega}_{I}$~\eqref{eq:M_I} is a consistent estimate of the population-level matrix $D \Sigma_{XX}^{-1}$. As discussed in Section \ref{sec:incomplete}, in the standard linear regression setting, \citet{javanmard2014confidence} proposed a method for high-dimensional inference that does not require consistent estimation of precision matrices. Section \ref{sec:sparse_est} in the Supplementary Materials discusses whether their approach can be applied here as well, which would avoid the need for the sparsity conditions in Assumption \ref{a:sparse_matrices_xx}.

We can now characterize the asymptotic behavior of our incomplete mediation estimators $(\hat{b}, \hat{a})$ \eqref{eq:incomplete}. We require additional assumptions necessary for the good performance of the scaled lasso of \citet{sun2012scaled}.

\begin{theorem}
  \label{thm:incomplete}
  Let $\hat{b}$ and $\hat{a}$ be calculated such that both tuning parameters $\lambda_n$ and $\tau_n$ are $O\{(n^{-1} \log p)^{1/2}\}$. Assume the model for $Y_i$ \eqref{eq:model} satisfies the conditions of Theorem~2 of \citet{sun2012scaled} and that $\alpha_0$ has at most $s_0 = o(n^{1/2} / \log p)$ non-zero components. Under Assumptions~\ref{a:tails} and~\ref{a:sparse_matrices_xx}, if $(\log p) / n^{1/2} \rightarrow 0$ and $\alpha_0$ and $\Sigma_{SG}$ are not both zero, and if
  $
  \Gamma
  \equiv
  \Sigma_{SS}^{-1} \Sigma_{SG}
  (\Sigma_{GG} - \Sigma_{GS} \Sigma_{SS}^{-1} \Sigma_{SG})^{-1}
  \Sigma_{GS} \Sigma_{SS}^{-1}
  $
  converges to a positive-definite matrix, then
  \[
  n^{1/2}
  \begin{pmatrix}
    \hat{b} - \beta_0\\
    \hat{a} - \alpha_1
  \end{pmatrix}
  \rightarrow
  N(0, V),
  \mbox{ where }
  V = 
  \begin{pmatrix}
    \sigma_1^2 \Gamma + \sigma_2^2 \Sigma_{SS}^{-1} & -\sigma_1^2 \Gamma \\
    -\sigma_1^2 \Gamma & \sigma_1^2 (\Gamma + \Sigma_{SS}^{-1})
  \end{pmatrix}.
  \]
\end{theorem}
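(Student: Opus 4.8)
The plan is to expand $n^{1/2}(\hat b-\beta_0,\hat a-\alpha_1)^\top$ into a dominant Gaussian part plus a negligible remainder, exploiting the mediator-model identity $G_i^\top\alpha_0=S_i^\top\beta_0+\epsilon_{2i}$. Writing $Y=X\alpha+\epsilon_1$ with $\alpha=(\alpha_0^\top,\alpha_1^\top)^\top$, so that $\hat\Sigma_{XY}=\hat\Sigma_{XX}\alpha+n^{-1}X^\top\epsilon_1$, I would first rewrite the debiased estimator of $(\Sigma_{SG}\alpha_0,\Sigma_{SS}\alpha_1)^\top$ inside \eqref{eq:incomplete} as $(\hat\Omega_I\hat\Sigma_{XX}-\hat D)(\alpha-\tilde\alpha)+\hat D\alpha+n^{-1}\hat\Omega_I X^\top\epsilon_1$. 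Premultiplying by $I_2\otimes\hat\Sigma_{SS}^{-1}$ and using $\hat\Sigma_{SG}\alpha_0=n^{-1}\sum_i S_i(S_i^\top\beta_0+\epsilon_{2i})=\hat\Sigma_{SS}\beta_0+n^{-1}\sum_i S_i\epsilon_{2i}$, the term $\hat D\alpha$ produces exactly $\beta_0$ and $\alpha_1$ plus an $\epsilon_2$-piece, yielding
\[
\begin{pmatrix}\hat b-\beta_0\\ \hat a-\alpha_1\end{pmatrix}
= \begin{pmatrix}\hat\Sigma_{SS}^{-1}\,n^{-1}\sum_i S_i\epsilon_{2i}\\ 0\end{pmatrix}
+ (I_2\otimes\hat\Sigma_{SS}^{-1})(\hat\Omega_I\hat\Sigma_{XX}-\hat D)(\alpha-\tilde\alpha)
+ (I_2\otimes\hat\Sigma_{SS}^{-1})\,n^{-1}\hat\Omega_I X^\top\epsilon_1.
\]
Call these three terms $T_1,T_2,T_3$. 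The goal is to show $n^{1/2}T_2=o_P(1)$ and that $n^{1/2}(T_1+T_3)$ is jointly asymptotically normal with covariance $V$.

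For the remainder $T_2$, I would combine the feasibility constraint in \eqref{eq:M_I}, $\Vert\hat\Omega_I\hat\Sigma_{XX}-\hat D\Vert_\infty\le\tau_n=O\{(\log p/n)^{1/2}\}$, with the scaled-lasso $\ell_1$-rate $\Vert\tilde\alpha-\alpha\Vert_1=O_P\{s_0(\log p/n)^{1/2}\}$ granted by the hypotheses of Theorem~2 of \citet{sun2012scaled}. Hölder's inequality then gives $\Vert(\hat\Omega_I\hat\Sigma_{XX}-\hat D)(\alpha-\tilde\alpha)\Vert_\infty\le\tau_n\Vert\tilde\alpha-\alpha\Vert_1=O_P(s_0\log p/n)$. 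Since this is a fixed-dimension ($2q$) vector and $\hat\Sigma_{SS}^{-1}=O_P(1)$, we obtain $n^{1/2}T_2=O_P(s_0\log p/n^{1/2})=o_P(1)$ under the assumption $s_0=o(n^{1/2}/\log p)$.

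For the two noise terms I would argue joint normality by conditioning on $\mathcal X=\{X_i\}$. In $T_1$, the summands $S_i\epsilon_{2i}$ are i.i.d.\ mean zero with $\E(\epsilon_{2i}^2 S_iS_i^\top)=\sigma_2^2\Sigma_{SS}$ (since $\epsilon_{2i}=E_i^\top\alpha_0$ is independent of $S_i$), so by the central limit theorem and Slutsky, using $\hat\Sigma_{SS}^{-1}\to_P\Sigma_{SS}^{-1}$, the top block of $n^{1/2}T_1$ converges to $N(0,\sigma_2^2\Sigma_{SS}^{-1})$. For $T_3$, because $\epsilon_{1i}$ is independent of $X_i$ while $\hat\Omega_I$ and $\hat\Sigma_{SS}$ are $\mathcal X$-measurable, $n^{1/2}T_3=n^{-1/2}\sum_i(I_2\otimes\hat\Sigma_{SS}^{-1})\hat\Omega_I X_i\epsilon_{1i}$ is a conditionally independent sum with conditional covariance $\sigma_1^2(I_2\otimes\hat\Sigma_{SS}^{-1})\hat\Omega_I\hat\Sigma_{XX}\hat\Omega_I^\top(I_2\otimes\hat\Sigma_{SS}^{-1})$; the constraint together with Theorem~\ref{thm:omegaxx} shows this converges in probability to $\sigma_1^2(I_2\otimes\Sigma_{SS}^{-1})D\Sigma_{XX}^{-1}D^\top(I_2\otimes\Sigma_{SS}^{-1})$. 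A block-inverse computation of $\Sigma_{XX}^{-1}$ (with Schur complement $\Sigma_{GG}-\Sigma_{GS}\Sigma_{SS}^{-1}\Sigma_{SG}$) gives the key simplification
\[
(I_2\otimes\Sigma_{SS}^{-1})\,D\Sigma_{XX}^{-1}D^\top\,(I_2\otimes\Sigma_{SS}^{-1})
= \begin{pmatrix}\Gamma & -\Gamma\\ -\Gamma & \Gamma+\Sigma_{SS}^{-1}\end{pmatrix}.
\]
A conditional Lindeberg CLT yields conditional asymptotic normality of $n^{1/2}T_3$; since its conditional limit law is nonrandom and $T_1$ is $\mathcal X$-measurable, iterating expectations of characteristic functions shows $n^{1/2}(T_1+T_3)\to N(0,V)$ with the $\sigma_2^2$ and $\sigma_1^2$ covariances adding (as $\epsilon_{1i}\perp\epsilon_{2i}$), recovering exactly $V$.

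The main obstacle is the step replacing $\hat\Omega_I$ by its population limit inside $T_3$, equivalently verifying $\hat\Omega_I\hat\Sigma_{XX}\hat\Omega_I^\top\to_P D\Sigma_{XX}^{-1}D^\top$ (or directly $n^{-1/2}(\hat\Omega_I-D\Sigma_{XX}^{-1})X^\top\epsilon_1=o_P(1)$). The element-wise rate of Theorem~\ref{thm:omegaxx} alone does not control this linear functional; I would bound each coordinate by $n^{1/2}\max_k\Vert(\hat\Omega_I-D\Sigma_{XX}^{-1})_{k\cdot}\Vert_1\cdot\Vert n^{-1}X^\top\epsilon_1\Vert_\infty$, converting the element-wise bound into a row-$\ell_1$ bound of order $s_0(\log p/n)^{(1-\theta)/2}$ via the weak-$\ell_\theta$ sparsity of $D\Sigma_{XX}^{-1}$ in Assumption~\ref{a:sparse_matrices_xx} (a CLIME-type argument), and using $\Vert n^{-1}X^\top\epsilon_1\Vert_\infty=O_P\{(\log p/n)^{1/2}\}$ from the tail conditions of Assumption~\ref{a:tails}; the product is $o_P(1)$ under $(\log p)/n^{1/2}\to0$ and the stated sparsity. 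The remaining bookkeeping—consistency and $O_P(1)$-ness of $\hat\Sigma_{SS}^{-1}$, finiteness of $\sigma_1^2,\sigma_2^2$, and the Lindeberg conditions—is routine given Assumption~\ref{a:tails} and the positive-definiteness of the limiting $\Gamma$.
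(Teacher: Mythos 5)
Your proposal is correct and follows essentially the same route as the paper's own proof: your $T_1+T_2+T_3$ decomposition is exactly the paper's $W+\Delta_0+\Delta_1+\Delta_2$ (with your $T_3$ and $T_1$ each splitting into a population-weighted main term plus the remainders $\Delta_1$ and $\Delta_2$), you bound the bias term by the same H\"older estimate $\tau_n\Vert\tilde\alpha-\alpha\Vert_1=O_P(s_0\log p/n)$, you justify replacing $\hat\Omega_I$ by $D\Sigma_{XX}^{-1}$ via the same CLIME-type row-$\ell_1$ bound multiplied by $\Vert n^{-1}X^\top\epsilon_1\Vert_\infty=O_P\{(\log p/n)^{1/2}\}$ that the paper uses for $\Delta_1$, and you compute $V$ by the same block inversion of $\Sigma_{XX}$ with Schur complement $\Sigma_{GG}-\Sigma_{GS}\Sigma_{SS}^{-1}\Sigma_{SG}$. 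The only difference is cosmetic: the paper applies an unconditional i.i.d.\ CLT to $W_i=\Omega_I X_i^\top\epsilon_{1i}+(S_i^\top\epsilon_{2i},0)^\top$ with zero cross-covariance, whereas you condition on $\mathcal{X}$ and de-condition via characteristic functions---equivalent once the weights are population quantities, and your observation that the $\epsilon_2$-term is $\mathcal{X}$-measurable correctly navigates the conditioning subtlety the paper itself flags in its supplement.
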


The ultra-sparsity assumption on $s_0$ in Theorem \ref{thm:incomplete} is standard in the de-biased lasso literature \citep{javanmard2014confidence, van2014asymptotically, zhang2014confidence}. The choice of $\tau_n$ controls the coherence parameter $\Vert \hat\Omega_{I} \hat\Sigma_{XX} - \hat D\Vert_\infty$ at rate $(n^{-1} \log p)^{1/2}$, which is necessary for showing that the bias of our proposed estimator goes to 0 when $n$ and $p$ go to infinity. The proof of Theorem \ref{thm:incomplete} shows that the asymptotic variance $V$ can be consistently estimated using
\[
\hat{\sigma}_1^2 (I_2 \otimes\hat\Sigma_{SS}^{-1}) \hat\Omega_{I} \hat\Sigma_{XX} \hat\Omega_{I}^\top (I_2 \otimes\hat\Sigma_{SS}^{-1}) 
+
\begin{pmatrix}
  \hat{\sigma}_2^2 \hat\Sigma_{SS}^{-1} & 0\\
  0 & 0
\end{pmatrix}.
\]
Consistency of $\hat{\Sigma}_{XX}$ and $\hat{\Sigma}_{SS}$ is standard, and consistency of $\hat{\Omega}_{I}$ is given by Theorem~\ref{thm:omegaxx}. Estimation of $\hat{\sigma}_1^2$ and $\hat{\sigma}_2^2$ is discussed in Section~\ref{sec:implementation}. 


We caution that Theorem \ref{thm:incomplete} does not cover the setting where both $\Sigma_{SG} = 0$ and $\alpha_0 = 0$. This would cause $n^{1/2}(\hat b - \beta_0)$ to asymptotically equal zero, rather than be normally distributed. A related issue arises even for standard low-dimensional Wald-type tests for the indirect effect, such as Sobel's test \citep{sobel1982asymptotic, hayes2013introduction, barfield2017testing}. In practice, these tests can be conservative when the exposure, the mediator, and the outcome are only weakly associated. In this case, the true finite-sample distribution of the Wald test statistic has higher kurtosis than a normal distribution, so that critical values calculated assuming a normal distribution lead to a conservative test \citep{barfield2017testing}. This setting is different from the weak instrumental variable problem, which we discuss in Section \ref{compare_iv} of the Supplemetary Materials.

\subsection{\label{subsec:complete}Complete Mediation}

We now present the theoretical properties of our indirect effect inference procedure under complete mediation. Similar to Assumption~\ref{a:sparse_matrices_xx}, we first make several sparsity assumptions, under which we can show that $\hat{\Omega}_C$ \eqref{eq:M_C} is a consistent estimate of $\Sigma_{SG} \Sigma_{GG}^{-1}$.

\begin{assumption}
\label{a:sparse_matrices_gg}
There exist constants $M_G$ and $N_G$ such that $\Vert \Sigma_{GG}^{-1} \Vert_{L_1} \leq M_G$ and $\Vert (\Sigma_{SG} \Sigma_{GG}^{-1})^\top \Vert_{L_1} \leq N_G$. Furthermore, if $\omega_{ij}$ denotes the $ij$th entry of $\Sigma_{SG} \Sigma_{GG}^{-1}$, then $\max_i \sum_j \vert \omega_{ij} \vert^\theta < s_0$ for some $s_0$ and $\theta \in [0, 1)$.
\end{assumption}


\begin{theorem}
\label{thm:omegagg}
Let $\hat{\Omega}_{C}$ solves~\eqref{eq:M_C} with tuning parameter $\tau_n = (N_G+1) C_1 \{(\log p) / n\}^{1/2}$. Then under Assumptions~\ref{a:tails} and~\ref{a:sparse_matrices_gg}, with probability greater than $(1 - 4 p^{-\tau})$ and $C_1$ and $\tau$ as in Theorem~\ref{thm:omegaxx},
\[
\Vert \hat{\Omega}_{C} - \Sigma_{SG} \Sigma_{GG}^{-1} \Vert_\infty \leq (4N_G+2) C_1 M_G \{(\log p) / n\}^{1/2}
\]
\end{theorem}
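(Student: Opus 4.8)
The plan is to follow the template of the proof of Theorem~\ref{thm:omegaxx} verbatim, with the rectangular target $\Omega_0 := \Sigma_{SG}\Sigma_{GG}^{-1}$ playing the role of the block matrix $D\Sigma_{XX}^{-1}$ and the defining identity $\Omega_0 \Sigma_{GG} = \Sigma_{SG}$ replacing $(D\Sigma_{XX}^{-1})\Sigma_{XX}=D$. Because $\Omega_0$ is $q\times p$ rather than square, the CLIME-type estimator \eqref{eq:M_C} needs no symmetrization, so the argument is in fact a touch simpler than the one in \citet{cai2011constrained}. Throughout I would condition on a single high-probability ``good event'' and carry out the remaining steps deterministically.

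First I would establish that event. Under Assumption~\ref{a:tails} the exponential-type tails of the $G_{ij}$ and $S_i$ make every entry of $\hat{\Sigma}_{GG}-\Sigma_{GG}$ and of $\hat{\Sigma}_{SG}-\Sigma_{SG}$ a centered average of sub-exponential products ($G_{ij}G_{ik}$ and $S_{ij}G_{ik}$); a Bernstein inequality plus a union bound over the $O(p^2)$ entries gives, with probability at least $1-4p^{-\tau}$,
\[
\max\left\{\Vert\hat{\Sigma}_{GG}-\Sigma_{GG}\Vert_\infty,\ \Vert\hat{\Sigma}_{SG}-\Sigma_{SG}\Vert_\infty\right\}\ \le\ \delta,\qquad \delta := C_1\{(\log p)/n\}^{1/2}.
\]
This is the same deviation bound already invoked for Theorem~\ref{thm:omegaxx}, fixing the constant $C_1$ and the probability $1-4p^{-\tau}$. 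On this event I would check feasibility of $\Omega_0$ for \eqref{eq:M_C}: writing $\Omega_0\hat{\Sigma}_{GG}-\hat{\Sigma}_{SG}=\Omega_0(\hat{\Sigma}_{GG}-\Sigma_{GG})-(\hat{\Sigma}_{SG}-\Sigma_{SG})$ and using $\Vert AB\Vert_\infty\le\Vert A^\top\Vert_{L_1}\Vert B\Vert_\infty$ together with Assumption~\ref{a:sparse_matrices_gg} yields $\Vert\Omega_0\hat{\Sigma}_{GG}-\hat{\Sigma}_{SG}\Vert_\infty\le N_G\delta+\delta=(N_G+1)\delta=\tau_n$. Since both the objective $\Vert\Omega\Vert_1$ and the constraint in \eqref{eq:M_C} separate across the $q$ rows of $\Omega$, feasibility of $\Omega_0$ implies feasibility of each of its rows, and row-wise $\ell_1$-minimality of $\hat{\Omega}_C$ then gives $\Vert\hat{\Omega}_C^\top\Vert_{L_1}\le\Vert\Omega_0^\top\Vert_{L_1}\le N_G$.

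The final step converts the coherence bound into the max-norm error. Because $\hat{\Omega}_C$ and $\Omega_0$ both satisfy the constraint, $\Vert(\hat{\Omega}_C-\Omega_0)\hat{\Sigma}_{GG}\Vert_\infty\le 2\tau_n$. Using $\Omega_0\Sigma_{GG}=\Sigma_{SG}$ I would write $\hat{\Omega}_C-\Omega_0=\{(\hat{\Omega}_C-\Omega_0)\Sigma_{GG}\}\Sigma_{GG}^{-1}$ and bound $\Vert\hat{\Omega}_C-\Omega_0\Vert_\infty\le\Vert(\hat{\Omega}_C-\Omega_0)\Sigma_{GG}\Vert_\infty\,\Vert\Sigma_{GG}^{-1}\Vert_{L_1}\le M_G\Vert(\hat{\Omega}_C-\Omega_0)\Sigma_{GG}\Vert_\infty$. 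The remaining factor splits as $(\hat{\Omega}_C-\Omega_0)\Sigma_{GG}=(\hat{\Omega}_C-\Omega_0)\hat{\Sigma}_{GG}-(\hat{\Omega}_C-\Omega_0)(\hat{\Sigma}_{GG}-\Sigma_{GG})$, whose $\ell_\infty$ norm is at most $2\tau_n+\Vert(\hat{\Omega}_C-\Omega_0)^\top\Vert_{L_1}\delta\le 2\tau_n+2N_G\delta=(4N_G+2)\delta$, where $\Vert(\hat{\Omega}_C-\Omega_0)^\top\Vert_{L_1}\le\Vert\hat{\Omega}_C^\top\Vert_{L_1}+\Vert\Omega_0^\top\Vert_{L_1}\le 2N_G$ from the previous step. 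Multiplying through by $M_G$ produces exactly $(4N_G+2)C_1 M_G\{(\log p)/n\}^{1/2}$.

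The main obstacle is the concentration step: obtaining the $\{(\log p)/n\}^{1/2}$ max-norm rate for both $\hat{\Sigma}_{GG}-\Sigma_{GG}$ and $\hat{\Sigma}_{SG}-\Sigma_{SG}$ at the claimed $1-4p^{-\tau}$ probability is where the sub-exponential moment control of Assumption~\ref{a:tails} and a careful union bound are genuinely needed, and it is what pins down $C_1$. Everything downstream is deterministic linear-algebra bookkeeping; the only delicate point is the cross term $(\hat{\Omega}_C-\Omega_0)(\hat{\Sigma}_{GG}-\Sigma_{GG})$, which is kept under control precisely because the row-separable $\ell_1$ minimization bounds $\Vert\hat{\Omega}_C^\top\Vert_{L_1}$ by $N_G$. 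Since the relevant concentration lemma is shared with Theorem~\ref{thm:omegaxx}, in practice I would simply cite it and devote the write-up to the feasibility and error-conversion computations above.
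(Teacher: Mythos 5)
Your proposal is correct and takes essentially the same approach as the paper's proof: the same \citet{cai2011constrained}-type concentration bound defines the good event with probability $1-4p^{-\tau}$, feasibility of $\Sigma_{SG}\Sigma_{GG}^{-1}$ together with row-wise $\ell_1$-minimality gives $\Vert\hat{\Omega}_C^\top\Vert_{L_1}\le N_G$, and the identical coherence-plus-cross-term decomposition, multiplied through by $\Vert\Sigma_{GG}^{-1}\Vert_{L_1}\le M_G$, yields the rate $(4N_G+2)C_1M_G\{(\log p)/n\}^{1/2}$. The only difference is presentational: you obtain the $2\tau_n$ term directly from feasibility of both matrices, whereas the paper expands $(\hat{\Omega}_C-\Omega_C)\Sigma_{GG}$ into the same four terms explicitly.
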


We can now characterize the asymptotic behavior of our complete mediation indirect effect estimator $\tilde{b}$~\eqref{eq:complete}. The proof of Theorem \ref{thm:complete} indicates that the asymptotic variance of $\tilde{b}$ can be consistently estimated by $\hat\sigma_1^2\hat\Sigma_{SS}^{-1}\hat\Omega_{C}\hat\Sigma_{GG}\hat\Omega_{C}\hat\Sigma_{SS}^{-1}+\hat\sigma_2^2\hat\Sigma_{SS}^{-1}$. As with incomplete mediation case, the requirement that $\alpha_0$ and $\Sigma_{SG}$ are not both zero arises here as well.

\begin{theorem}
  \label{thm:complete}
  Let $\tilde{b}$ be calculated such that both tuning parameters $\lambda_n$ and $\tau_n$ are of order $O\{(n^{-1} \log p)^{1/2}\}$. Assume the model for $Y_i$ in mediation model~\eqref{eq:model} has $\alpha_1 = 0$ but otherwise satisfies the conditions of Theorem~2 of \citet{sun2012scaled} and that $\alpha_0$ has at most $s_0 = o(n^{1/2} / \log p)$ non-zero components. Under Assumptions~\ref{a:tails} and~\ref{a:sparse_matrices_gg}, if $(\log p) / n^{1/2} \rightarrow 0$, $\alpha_0$ and $\Sigma_{SG}$ are not both zero, and if $\Sigma_{SG} \Sigma_{GG}^{-1} \Sigma_{GS}$ converges to a positive-definite matrix, then
  \[
  n^{1/2}(\tilde{b} - \beta_0)
  \rightarrow
  N(0, \sigma_1^2\Sigma_{SS}^{-1}\Sigma_{SG} \Sigma_{GG}^{-1} \Sigma_{GS}\Sigma_{SS}^{-1}+\sigma_2^2\Sigma_{SS}^{-1}).
  \]
\end{theorem}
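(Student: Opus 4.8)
The plan is to expand $\tilde b$ from \eqref{eq:complete} using the complete-mediation model $Y = G\alpha_0 + \epsilon_1$, write $n^{1/2}(\tilde b - \beta_0)$ as a sum of two leading stochastic terms plus an asymptotically negligible remainder, and then apply a multivariate central limit theorem. Writing $\Omega_C = \Sigma_{SG}\Sigma_{GG}^{-1}$ for the population target of $\hat\Omega_C$ and noting that $\beta_0 = \Sigma_{SS}^{-1}\Sigma_{SG}\alpha_0$ (since $\Sigma_{SG} = \Sigma_{SS}\gamma^\top$), I would substitute $Y - G\tilde\alpha_0 = \epsilon_1 - G(\tilde\alpha_0 - \alpha_0)$ into \eqref{eq:complete} and collect terms. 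The residual-regression piece $\hat\Sigma_{SS}^{-1}\hat\Sigma_{SG}\alpha_0 - \beta_0$ simplifies, via $\beta_0 = \hat\Sigma_{SS}^{-1}\hat\Sigma_{SS}\beta_0$ and the identity $G_i^\top\alpha_0 - S_i^\top\beta_0 = \epsilon_{2i}$, to $\hat\Sigma_{SS}^{-1} n^{-1}\sum_i S_i\epsilon_{2i}$. This yields the decomposition
\begin{align}
n^{1/2}(\tilde b - \beta_0)
&= \hat\Sigma_{SS}^{-1}\, n^{-1/2}\sum_{i=1}^n S_i\epsilon_{2i}
+ \hat\Sigma_{SS}^{-1}\hat\Omega_C\, n^{-1/2}\sum_{i=1}^n G_i\epsilon_{1i} \nonumber\\
&\quad + n^{1/2}\hat\Sigma_{SS}^{-1}(\hat\Omega_C\hat\Sigma_{GG} - \hat\Sigma_{SG})(\alpha_0 - \tilde\alpha_0). \nonumber
\end{align}
The crucial observation is that the sampling error of the low-dimensional blocks $\hat\Sigma_{SG}$ and $\hat\Sigma_{SS}$ does not vanish but contributes the $n^{-1/2}\sum_i S_i\epsilon_{2i}$ term, which ultimately produces the $\sigma_2^2\Sigma_{SS}^{-1}$ part of the asymptotic variance.

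The remainder $R_n = n^{1/2}\hat\Sigma_{SS}^{-1}(\hat\Omega_C\hat\Sigma_{GG} - \hat\Sigma_{SG})(\alpha_0 - \tilde\alpha_0)$ I would bound by H\"older's inequality: each component of the bracketed product is at most $\Vert\hat\Omega_C\hat\Sigma_{GG} - \hat\Sigma_{SG}\Vert_\infty\Vert\alpha_0 - \tilde\alpha_0\Vert_1$. The first factor is at most the constraint level $\tau'_n = O\{(n^{-1}\log p)^{1/2}\}$ from \eqref{eq:M_C}, and Theorem~2 of \citet{sun2012scaled} gives $\Vert\tilde\alpha_0 - \alpha_0\Vert_1 = O_P(s_0\{(n^{-1}\log p)^{1/2}\})$; since $q$ is fixed and $\hat\Sigma_{SS}^{-1} = O_P(1)$, this makes $R_n = O_P(s_0 n^{-1/2}\log p) = o_P(1)$ under $s_0 = o(n^{1/2}/\log p)$. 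I would then replace $\hat\Omega_C$ by $\Omega_C$ in the second leading term, incurring an error $\hat\Sigma_{SS}^{-1}(\hat\Omega_C - \Omega_C)n^{-1/2}\sum_i G_i\epsilon_{1i}$; bounding its components by $\Vert(\hat\Omega_C - \Omega_C)_{k\cdot}\Vert_1\Vert n^{-1/2}\sum_i G_i\epsilon_{1i}\Vert_\infty$, the row-wise $\ell_1$ error is controlled by combining the $\ell_\infty$ rate of Theorem~\ref{thm:omegagg} with the $\ell_\theta$-sparsity of $\Omega_C$ in Assumption~\ref{a:sparse_matrices_gg}, while a maximal inequality under Assumption~\ref{a:tails} gives $\Vert n^{-1/2}\sum_i G_i\epsilon_{1i}\Vert_\infty = O_P\{(\log p)^{1/2}\}$, so that the product is $o_P(1)$ under the stated sparsity scaling.

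It then remains to establish joint asymptotic normality of the two leading terms. Setting $T_i = S_i\epsilon_{2i} + \Omega_C G_i\epsilon_{1i}$, these are i.i.d.\ mean-zero $q$-vectors, and because $\epsilon_{1i}$ is independent of $(G_i, S_i, E_i)$ with $\epsilon_{2i} = E_i^\top\alpha_0$, the cross term vanishes and $\mathrm{Cov}(T_i) = \sigma_2^2\Sigma_{SS} + \sigma_1^2\Omega_C\Sigma_{GG}\Omega_C^\top = \sigma_2^2\Sigma_{SS} + \sigma_1^2\Sigma_{SG}\Sigma_{GG}^{-1}\Sigma_{GS}$. The exponential tails in Assumption~\ref{a:tails} supply the Lyapunov moments for the multivariate CLT, so $n^{-1/2}\sum_i T_i \to N\{0, \mathrm{Cov}(T_i)\}$, and the assumption that $\Sigma_{SG}\Sigma_{GG}^{-1}\Sigma_{GS}$ converge to a positive-definite limit (with $\alpha_0, \Sigma_{SG}$ not both zero) ensures nondegeneracy. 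Applying Slutsky's theorem with $\hat\Sigma_{SS}^{-1} \to \Sigma_{SS}^{-1}$ and pre- and post-multiplying the covariance by $\Sigma_{SS}^{-1}$ yields the claimed limit $N\{0, \sigma_1^2\Sigma_{SS}^{-1}\Sigma_{SG}\Sigma_{GG}^{-1}\Sigma_{GS}\Sigma_{SS}^{-1} + \sigma_2^2\Sigma_{SS}^{-1}\}$.

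The main obstacle, and the point that distinguishes this argument from standard debiased-lasso inference, is the $\sigma_2^2\Sigma_{SS}^{-1}$ contribution: it originates not from the residual noise $\epsilon_1$ but from the sampling error of the low-dimensional moment estimates, and the key step is recognizing that these errors collapse into $n^{-1/2}\sum_i S_i\epsilon_{2i}$ rather than being negligible. The second delicate point is justifying the replacement of $\hat\Omega_C$ by $\Omega_C$ in the $p$-dimensional noise term, which is precisely where the consistency rate of Theorem~\ref{thm:omegagg} and the sparsity of $\Sigma_{SG}\Sigma_{GG}^{-1}$ are needed.
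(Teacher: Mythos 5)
Your proposal is correct and takes essentially the same route as the paper's own proof: the identical decomposition of $n^{1/2}(\tilde b-\beta_0)$ into the $n^{-1/2}\sum_i S_i\epsilon_{2i}$ term, the $\Omega_C$-weighted noise term, and a bias remainder bounded by H\"older's inequality using the constraint level $\tau_n'$ together with the scaled-lasso $\ell_1$ rate, with the substitution of $\Omega_C$ for $\hat\Omega_C$ justified by Theorem~\ref{thm:omegagg} and a maximal inequality for $\Vert G^\top\epsilon_1\Vert_\infty$, followed by the same CLT computation for $W_i = S_i\epsilon_{2i}+\Omega_C G_i^\top\epsilon_{1i}$ with vanishing cross-covariance. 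The only cosmetic difference is that you keep $\hat\Sigma_{SS}^{-1}$ as a prefactor and apply Slutsky once at the end, whereas the paper peels off the $\hat\Sigma_{SS}^{-1}-\Sigma_{SS}^{-1}$ sampling errors as separate $o_P(1)$ terms ($\Delta_1$, $\Delta_2$); your version is, if anything, slightly cleaner and makes the $s_0$ factor in the remainder bound explicit.
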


As mentioned in Section~\ref{sec:complete}, under complete mediation the indirect effect $\beta_0$ can also be consistently estimated by directly regressing $Y_i$ on $S_i$. The expression for the asymptotic variance of $\tilde{b}$ from Theorem~\ref{thm:complete} now allows us to analytically compare our estimator with the ordinary least squares estimate of $\beta_0$. 

\begin{proposition}
  \label{p:var}
  In model~\eqref{eq:model}, assume that $\alpha_1 = 0$, so that $\tilde{b}_{OLS} = (S^\top S)^{-1} S^\top Y$ is a consistent estimator of $\beta_0$. Then under the conditions of Theorem~\ref{thm:complete},
  $
  \text{var}\{ n^{1/2}(\tilde{b}_{OLS} - \beta_0) \}
  -
  \text{var}\{ n^{1/2} (\tilde{b} - \beta_0) \}
  $
  converges to a positive semi-definite matrix.
\end{proposition}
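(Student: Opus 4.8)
The plan is to compute the asymptotic variance of $\tilde{b}_{OLS}$ explicitly, subtract the asymptotic variance of $\tilde{b}$ supplied by Theorem~\ref{thm:complete}, and show that the resulting difference reduces to a Schur complement that is manifestly positive semi-definite. First I would exploit complete mediation to rewrite the outcome model: setting $\alpha_1 = 0$ in \eqref{eq:model} and substituting $G_i^\top \alpha_0 = S_i^\top \gamma^\top \alpha_0 + \epsilon_{2i} = S_i^\top \beta_0 + \epsilon_{2i}$ gives $Y_i = S_i^\top \beta_0 + \epsilon_{1i} + \epsilon_{2i}$. Consequently
\[
n^{1/2}(\tilde{b}_{OLS} - \beta_0) = (n^{-1} S^\top S)^{-1} \, n^{-1/2} S^\top (\epsilon_1 + \epsilon_2).
\]
Since $q$ is fixed and $n^{-1} S^\top S \to \Sigma_{SS}$, a central limit theorem for $n^{-1/2} S^\top (\epsilon_1 + \epsilon_2)$, combined with the structural assumptions underlying Theorem~\ref{thm:complete} (that $\epsilon_{1i}$ and $\epsilon_{2i} = E_i^\top \alpha_0$ are mean zero, mutually uncorrelated, and uncorrelated with $S_i$), yields $\mathrm{E}\{S_i S_i^\top (\epsilon_{1i}+\epsilon_{2i})^2\} = (\sigma_1^2 + \sigma_2^2)\Sigma_{SS}$ and hence
\[
\text{var}\{ n^{1/2}(\tilde{b}_{OLS} - \beta_0) \} \rightarrow (\sigma_1^2 + \sigma_2^2)\, \Sigma_{SS}^{-1}.
\]

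Next I would compare this with the limiting variance from Theorem~\ref{thm:complete}, which I denote $V_{\tilde{b}} = \sigma_1^2 \Sigma_{SS}^{-1}\Sigma_{SG}\Sigma_{GG}^{-1}\Sigma_{GS}\Sigma_{SS}^{-1} + \sigma_2^2 \Sigma_{SS}^{-1}$. The $\sigma_2^2 \Sigma_{SS}^{-1}$ terms cancel, so the difference collapses to
\[
(\sigma_1^2 + \sigma_2^2) \Sigma_{SS}^{-1} - V_{\tilde{b}} = \sigma_1^2 \, \Sigma_{SS}^{-1} \left( \Sigma_{SS} - \Sigma_{SG} \Sigma_{GG}^{-1} \Sigma_{GS} \right) \Sigma_{SS}^{-1}.
\]

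Finally I would argue positive semi-definiteness. The bracketed matrix $\Sigma_{SS} - \Sigma_{SG}\Sigma_{GG}^{-1}\Sigma_{GS}$ is exactly the Schur complement of the block $\Sigma_{GG}$ in the covariance matrix $\Sigma_{XX} = \begin{pmatrix} \Sigma_{GG} & \Sigma_{GS} \\ \Sigma_{SG} & \Sigma_{SS} \end{pmatrix}$ of $X_i = (G_i^\top, S_i^\top)^\top$. As a population covariance matrix $\Sigma_{XX}$ is positive semi-definite, and $\Sigma_{GG}$ is invertible under Assumption~\ref{a:sparse_matrices_gg}, so standard Schur-complement theory makes this block positive semi-definite. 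Pre- and post-multiplying by the symmetric matrix $\Sigma_{SS}^{-1}$ is a congruence transformation and scaling by the nonnegative constant $\sigma_1^2$ preserves the cone of positive semi-definite matrices, so the entire difference is positive semi-definite for every $n$; since that cone is closed, the limit inherits the property, giving the claim.

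The computation is largely mechanical, and the one place requiring care is establishing the OLS asymptotic variance: specifically verifying that the cross term between $\epsilon_1$ and $\epsilon_2$ vanishes so that the two noise contributions add to $\sigma_1^2 + \sigma_2^2$, and confirming that $\Sigma_{GG}^{-1}$ is well defined at the population level despite $p$ possibly exceeding $n$ (guaranteed by $\Vert \Sigma_{GG}^{-1} \Vert_{L_1} \le M_G$ in Assumption~\ref{a:sparse_matrices_gg}). I would also note that the condition in Theorem~\ref{thm:complete} that $\Sigma_{SG}\Sigma_{GG}^{-1}\Sigma_{GS}$ converge to a positive-definite matrix keeps $V_{\tilde{b}}$ nondegenerate, and that the efficiency gain is strictly positive-definite precisely when $S_i$ cannot be perfectly linearly predicted from $G_i$.
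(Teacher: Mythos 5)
Your proof is correct and follows essentially the same route as the paper's: compute the OLS limit $(\sigma_1^2+\sigma_2^2)\Sigma_{SS}^{-1}$, cancel the $\sigma_2^2\Sigma_{SS}^{-1}$ terms against the asymptotic variance from Theorem~\ref{thm:complete}, and identify $\Sigma_{SS}-\Sigma_{SG}\Sigma_{GG}^{-1}\Sigma_{GS}$ as the Schur complement of $\Sigma_{GG}$ in the covariance matrix of $X_i=(G_i^\top,S_i^\top)^\top$, whose positive semi-definiteness survives the congruence by $\Sigma_{SS}^{-1}$ and scaling by $\sigma_1^2$. If anything, your write-up is slightly more careful than the paper's, which leaves implicit the vanishing of the $\epsilon_1$--$\epsilon_2$ cross term in the OLS variance and the closedness-of-the-cone argument for passing positive semi-definiteness to the limit.
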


Proposition~\ref{p:var} shows that our $\tilde{b}$ always has equal or lower asymptotic variance compared to the ordinary least squares estimator, even when the mediators are high-dimensional. This extends similar findings in low dimensions \citep{kenny2014power, zhao2014more, loeys2015cautionary}. In fact, we show in Section \ref{optimal_variance} of the Supplementary Materials that for any fixed $p$, our estimator $\tilde b$ achieves the minimum asymptotic variance among all asymptotically unbiased estimators of $\beta_0$ with the same convergence rate. Tests based on $\tilde{b}$ thus will have higher power to detect non-zero $\beta_0$ than tests based on $\tilde{b}_{OLS}$, as confirmed by simulations in Section \ref{sec:complete_sims}. In practice, the Wald test based on $\tilde{b}$ can still be conservative when $\alpha_0$ and $\Sigma_{SG}$ are close to zero, for reasons discussed in Section \ref{subsec:incomplete}, but simulations show that our proposed $\tilde b$ can still have significant power gains over the majority of the parameter space.

In a closely related setting, \citet{athey2016estimating} found that when estimating the causal effect of a treatment on a long-term outcome, leveraging intermediate outcomes can increase efficiency. In Section \ref{compare_athey2016a} of the Supplementary materials we provide a detailed comparison. Together, these results converge on a common principle, and provide theoretical justification for recent work in genomics showing that data integration using mediation analysis can increase power to detect important biological signals \citep{wang2012ibag, huang2015igwas}.

The improved efficiency guaranteed by Proposition \ref{p:var} requires strong scientific or expert knowledge to justify the absence of the direct effect. Furthermore, it also depends on the correct specification of both stages of the linear mediation model \eqref{eq:model}. In low dimensions, this was been pointed out by \citet{loeys2015cautionary}. This is in contrast to the usual ordinary least squares estimator, which requires fewer modeling assumptions. We illustrate the effect of model misspecification on our proposed estimator in Section \ref{sec:mis} in the Supplementary Materials.

Our estimator $\hat{b}$~\eqref{eq:incomplete}, proposed in Section \ref{sec:incomplete} under incomplete mediation, could also be used to estimate the indirect effect under complete mediation. Proposition~\ref{p:var'} shows that under complete mediation, $\tilde{b}$ is asymptotically more efficient.

\begin{proposition}
  \label{p:var'}
  In model~\eqref{eq:model}, assume that $\alpha_1 = 0$. Under the conditions of Theorems~\ref{thm:incomplete} and \ref{thm:complete}, $\text{var}\{ n^{1/2}(\hat{b} - \beta_0) \} - \text{var}\{ n^{1/2} (\tilde{b} - \beta_0) \}$ converges to a positive semi-definite matrix.
\end{proposition}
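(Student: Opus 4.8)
The plan is to read off the two limiting covariance matrices directly from Theorems~\ref{thm:incomplete} and~\ref{thm:complete} and compare them. Under $\alpha_1 = 0$, Theorem~\ref{thm:incomplete} gives $\text{var}\{n^{1/2}(\hat b - \beta_0)\} \to \sigma_1^2 \Gamma + \sigma_2^2 \Sigma_{SS}^{-1}$, the upper-left block of $V$, with $\Gamma = \Sigma_{SS}^{-1}\Sigma_{SG}(\Sigma_{GG} - \Sigma_{GS}\Sigma_{SS}^{-1}\Sigma_{SG})^{-1}\Sigma_{GS}\Sigma_{SS}^{-1}$, while Theorem~\ref{thm:complete} gives $\text{var}\{n^{1/2}(\tilde b - \beta_0)\} \to \sigma_1^2 \Sigma_{SS}^{-1}\Sigma_{SG}\Sigma_{GG}^{-1}\Sigma_{GS}\Sigma_{SS}^{-1} + \sigma_2^2 \Sigma_{SS}^{-1}$. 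The first thing I would note is that the two $\sigma_2^2\Sigma_{SS}^{-1}$ terms are identical and cancel, so it suffices to show that $\Gamma - \Sigma_{SS}^{-1}\Sigma_{SG}\Sigma_{GG}^{-1}\Sigma_{GS}\Sigma_{SS}^{-1}$ is positive semi-definite.

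Next I would factor out the common outer matrices. Writing $A = \Sigma_{SS}^{-1}\Sigma_{SG}$, so that $A^\top = \Sigma_{GS}\Sigma_{SS}^{-1}$, both terms take the congruence form $A(\cdot)^{-1}A^\top$, and the difference equals $A\{(\Sigma_{GG} - P)^{-1} - \Sigma_{GG}^{-1}\}A^\top$, where $P = \Sigma_{GS}\Sigma_{SS}^{-1}\Sigma_{SG}$. Since $P = (\Sigma_{SS}^{-1/2}\Sigma_{SG})^\top(\Sigma_{SS}^{-1/2}\Sigma_{SG}) \succeq 0$, we have $\Sigma_{GG} - P \preceq \Sigma_{GG}$ in the Loewner order. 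The order-reversing property of matrix inversion on positive-definite matrices then yields $(\Sigma_{GG} - P)^{-1} \succeq \Sigma_{GG}^{-1}$, so the bracketed matrix is positive semi-definite; congruence by $A$ preserves this, giving the claim for each $n$.

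The one point requiring care, and the main obstacle, is justifying that $\Sigma_{GG} - P$ is positive definite, which is what makes the inversion step legitimate. Here I would recognize $\Sigma_{GG} - P$ as the Schur complement of $\Sigma_{SS}$ in $\Sigma_{XX}$: positive definiteness of $\Sigma_{XX}$, which follows from the invertibility asserted in Assumption~\ref{a:sparse_matrices_xx}, guarantees that this Schur complement is positive definite, and Assumption~\ref{a:sparse_matrices_gg} likewise gives $\Sigma_{GG} \succ 0$. This Schur complement is precisely the quantity appearing inside $\Gamma$, and the positive-definite limit assumed for $\Gamma$ in Theorem~\ref{thm:incomplete} keeps it nondegenerate along the sequence.

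Finally, to pass to the limit I would invoke the convergence hypotheses of the two theorems: both $\Gamma$ and $\Sigma_{SG}\Sigma_{GG}^{-1}\Sigma_{GS}$, hence $A\Sigma_{GG}^{-1}A^\top$, are assumed to converge to positive-definite limits, so the difference of the two variance matrices converges. Since each term of the sequence lies in the closed cone of positive semi-definite matrices, so does the limit, which completes the argument.
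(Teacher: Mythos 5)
Your proof is correct, and it takes a genuinely different route from the paper's. The paper derives an exact closed form for the gap: implicitly applying the Woodbury identity, $(\Sigma_{GG}-\Sigma_{GS}\Sigma_{SS}^{-1}\Sigma_{SG})^{-1}-\Sigma_{GG}^{-1}=\Sigma_{GG}^{-1}\Sigma_{GS}\,Q\,\Sigma_{SG}\Sigma_{GG}^{-1}$ with $Q=(\Sigma_{SS}-\Sigma_{SG}\Sigma_{GG}^{-1}\Sigma_{GS})^{-1}$, so the variance difference converges to (a $\sigma_1^2$ multiple of) $PQP^\top$ with $P=\Sigma_{SS}^{-1}\Sigma_{SG}\Sigma_{GG}^{-1}\Sigma_{GS}$, and positive semi-definiteness is then read off the congruence, since $Q$ is the inverse of the Schur complement of $\Sigma_{GG}$ in $\Sigma_{XX}$. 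You instead stop at a Loewner-order inequality: since $\Sigma_{GS}\Sigma_{SS}^{-1}\Sigma_{SG}\succeq 0$, you get $\Sigma_{GG}-\Sigma_{GS}\Sigma_{SS}^{-1}\Sigma_{SG}\preceq\Sigma_{GG}$ with both sides positive definite, invoke antitonicity of matrix inversion to conclude the bracketed difference is positive semi-definite, and note that congruence by $A=\Sigma_{SS}^{-1}\Sigma_{SG}$ preserves this. Both arguments ultimately rest on Schur-complement positivity in $\Sigma_{XX}$, but of complementary blocks: you need the complement of $\Sigma_{SS}$ (to make $J=\Sigma_{GG}-\Sigma_{GS}\Sigma_{SS}^{-1}\Sigma_{SG}$ invertible, which you justify correctly, though note that Assumption~\ref{a:sparse_matrices_xx} only gives invertibility of $\Sigma_{XX}$ and you should add that $\Sigma_{XX}$, being a covariance matrix, is automatically positive semi-definite, hence positive definite), while the paper needs the complement of $\Sigma_{GG}$ (so that $Q$ exists and is positive definite). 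Your route is more elementary and in places tidier than the paper's own write-up, which conflates the variance of the difference with the difference of the variances, drops the factor $\sigma_1^2$, and says nothing about passing to the limit, whereas your appeal to the convergence hypotheses of Theorems~\ref{thm:incomplete} and~\ref{thm:complete} together with closedness of the positive semi-definite cone handles that step cleanly. What the paper's approach buys in exchange is the explicit formula $\sigma_1^2 PQP^\top$ for the efficiency gain, which quantifies the gap and shows exactly when it degenerates (e.g., when $\Sigma_{SG}\Sigma_{GG}^{-1}\Sigma_{GS}=0$); your inequality-only argument establishes the proposition but not this sharper description.
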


\section{\label{sec:implementation}Implementation}

We first center the $Y_i$, $G_i$, and $S_i$. To apply the scaled lasso, we standardize all covariates to have unit variance and then choose the tuning parameter $\lambda_n$ using the quantile-based penalty procedure in the R package \verb|scalreg|.

To estimate the asymptotic variances of our estimators, given in Theorems~\ref{thm:incomplete} and~\ref{thm:complete}, we need estimates of the residual variances $\sigma_1^2$ and $\sigma_2^2$ from our mediation model~\eqref{eq:model}. \citet{sun2012scaled} showed that the scaled lasso can provide a consistent estimate $\hat{\sigma}_1^2$ for $\sigma_1^2$. Since model~\eqref{eq:model} implies that $Y_i = S_i^\top (\beta_0 + \alpha_1) + \epsilon_i$ where $\epsilon_i \sim N(0, \sigma_1^2 + \sigma_2^2)$, we can estimate $\sigma_2^2$ by first regressing $Y_i$ on $S_i$ to obtain the ordinary least squares residual variance estimator $\hat{\sigma}^2$, and using $\hat\sigma_2^2 = \hat\sigma^2 - \hat\sigma_1^2$. In practice, $\hat\sigma_1$ may sometimes be larger than $\hat\sigma$, in which case we estimate $\hat\sigma_2 = 0$. This is sensible because $\hat\sigma_1 > \hat\sigma$ likely occurs when no mediators are associated with the outcome, i.e., $\alpha_0 = 0$, in which case $\sigma_2$ indeed equals zero.

We construct the matrices $\hat{\Omega}_I$ \eqref{eq:M_I} and $\hat{\Omega}_C$ \eqref{eq:M_C} by setting the tuning parameters $\tau_n = \tau'_n = \{(\log p) / n\}^{1/2} / 3$. This choice is guided by Theorem~\ref{thm:omegaxx} and~\ref{thm:omegagg}. We also tried choosing the tuning parameters by minimizing an \textit{ad hoc} information criterion-type measure, but this resulted in confidence intervals with poor coverage in some cases. Finding a more data-adaptive tuning procedure is an important direction for future research.


The time-consuming part of our method is the constrained $\ell_1$ optimization, in \eqref{eq:M_I} and \eqref{eq:M_C}, which we implement using fast algorithms from the \verb|flare| package. For $n = 300$ subjects, $p = 1,000$ mediators, and $q = 1$ exposure, our procedure with tuning parameter  $\tau_n = \{(\log p) / n\}^{1/2}/3$ takes 66 seconds on a single core of a Intel Xeon X5675 processor at 3.07GHz and 8 GB of RAM, and larger $\tau_n$ results in shorter computation time: our procedure with $\tau_n = \{(\log p) / n\}^{1/2}$ takes 5 seconds. Our procedure is available in the R package \verb|freebird|, available at \verb|github.com/rzhou14/freebird|.

\section{\label{sec:numerical}Numerical Results}

\subsection{\label{sec:methods_compared}Methods Compared}

We compared our methods to a naive non-debiased method, discussed in the begining of Section \ref{sec:intuitions}. This estimates $\beta_0$ using $\hat{\Sigma}_{SS}^{-1} \hat{\Sigma}_{SG} \tilde{\alpha}_0$, where $\tilde{\alpha}_0$ is a standard lasso estimate of $\alpha_0$ implemented using the R package \verb|glmnet|. There is no tractable limiting distribution for this estimate of $\beta_0$, so we used the bootstrap to obtain percentile confidence intervals and obtained average power and coverages based on those intervals. Bootstrapping the lasso is not theoretically justified \citep{dezeure2015high}, but this at least allows us to have a comparable baseline method.

We also compared our procedures to the high-dimensional mediation analysis method of \citet{zhang2016estimating}, using their R package \verb|HIMA|. The method first uses marginal screening on mediators to reduce dimensionality. It then regresses the outcome on the retained mediators using penalized regression with the minimax concave penalty \citep{zhang2010nearly}. Using only the selected mediators, it uses ordinary least squares to compute a pair of $p$-values for each mediator, for its associations with the outcome and the exposure. These $p$-values are Bonferroni-corrected for the number of selected mediators, and \citet{zhang2016estimating} identify a mediator as significant if both of its adjusted $p$-values are less than the desired significance level. However, this testing approach based on the maximum of two $p$-values does not provide confidence intervals.
 
Under complete mediation, the indirect effect is equal to the total effect, which can be tested directly using ordinary least squares. In this setting we therefore also compared our complete mediation method to ordinary least squares.

\subsection{\label{sec:incomplete_sims}Simulations under incomplete mediation}
We first studied our estimators under incomplete mediation. Following model~\eqref{eq:model}, for samples $i = 1, \ldots, n = 300$ we generated $q = 1$ exposure $S_i \sim N(0,1)$ and $p = 500$ potential mediators $G_i$ following $G_i = c \gamma S_i + E_i$, where $c$ was a scalar, $\gamma$ was a $p \times 1$ coefficient vector, and $E_i \sim N(0, \Sigma_E)$. We generated $\Sigma_E$ following procedures in \citet{danaher2014joint} such that $\Sigma_E^{-1}$ was sparse in the sense of Assumption~\ref{a:sparse_matrices_xx} and had diagonal entries equal to 1. Finally, we generated the outcome according to $Y_i = G_i^\top \alpha_0 + S_i^\top \alpha_1 + \epsilon_{1i}$, where $\epsilon_{1i} \sim N(0,5)$. In Section \ref{sec:nongaussian} of the Supplementary Materials, we show that our simulation results were similar even when $\epsilon_{1i}$ was not normally distributed. We let $\gamma$ have 15 non-zero components randomly generated between $[-1, 1]$, fixing $\gamma$ across replications, and let $\alpha_0$ have 15 non-zero components equal to one. We chose either one or five of these non-zero components to correspond to variables whose entries in $\gamma$ were also non-zero; these were the true mediators. Here we set the direct effect $\alpha_1 = 0.1$, and in Section \ref{sec:add_incom} of the Supplementary Materials we present results when $\alpha_1 = 0.5$. In this simulation scheme, the indirect effect $\beta_0 = c \gamma^\top \alpha_0$, and we varied $\beta_0$ by varying the constant $c$.

When there was only one true mediator, we used all three competing methods to test $H_0: \beta_0 = 0$ at the $\alpha = 0.05$ significance level, and used our proposed method and naive method to calculate 95\% confidence intervals for different values of $\beta_0$. When there were five true mediators, we did not apply the method of \citet{zhang2016estimating} because it considers each mediator separately and does not provide inference for the overall indirect effect. The existence of multiple significant mediators does not imply that the indirect effect is nonzero because the effects of the different mediators may cancel each other out, a phenomenon known as inconsistent mediation. 

\begin{figure}[h!]
  \centering
  \includegraphics[width = \textwidth]{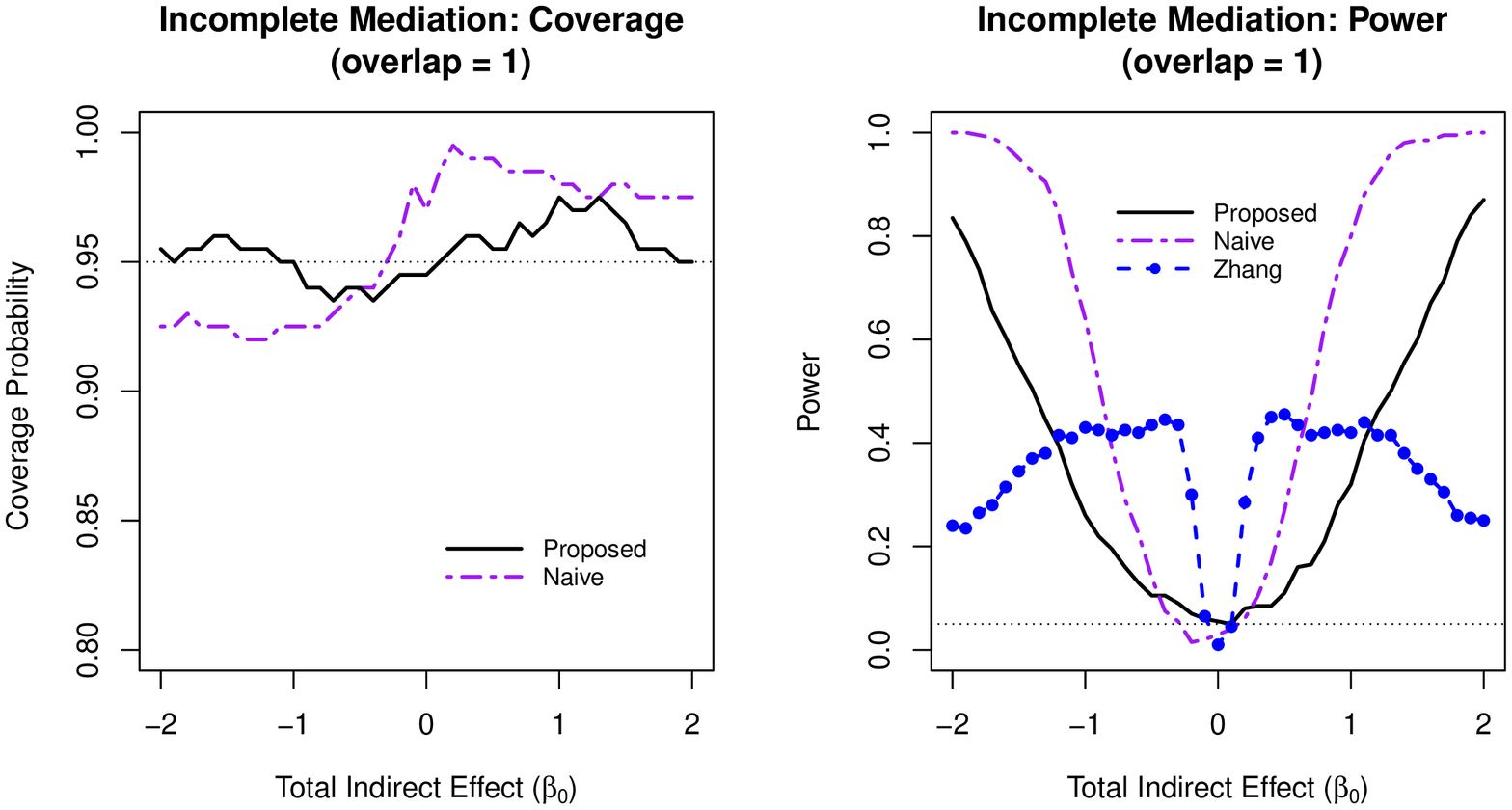}\par
  \includegraphics[width = \textwidth]{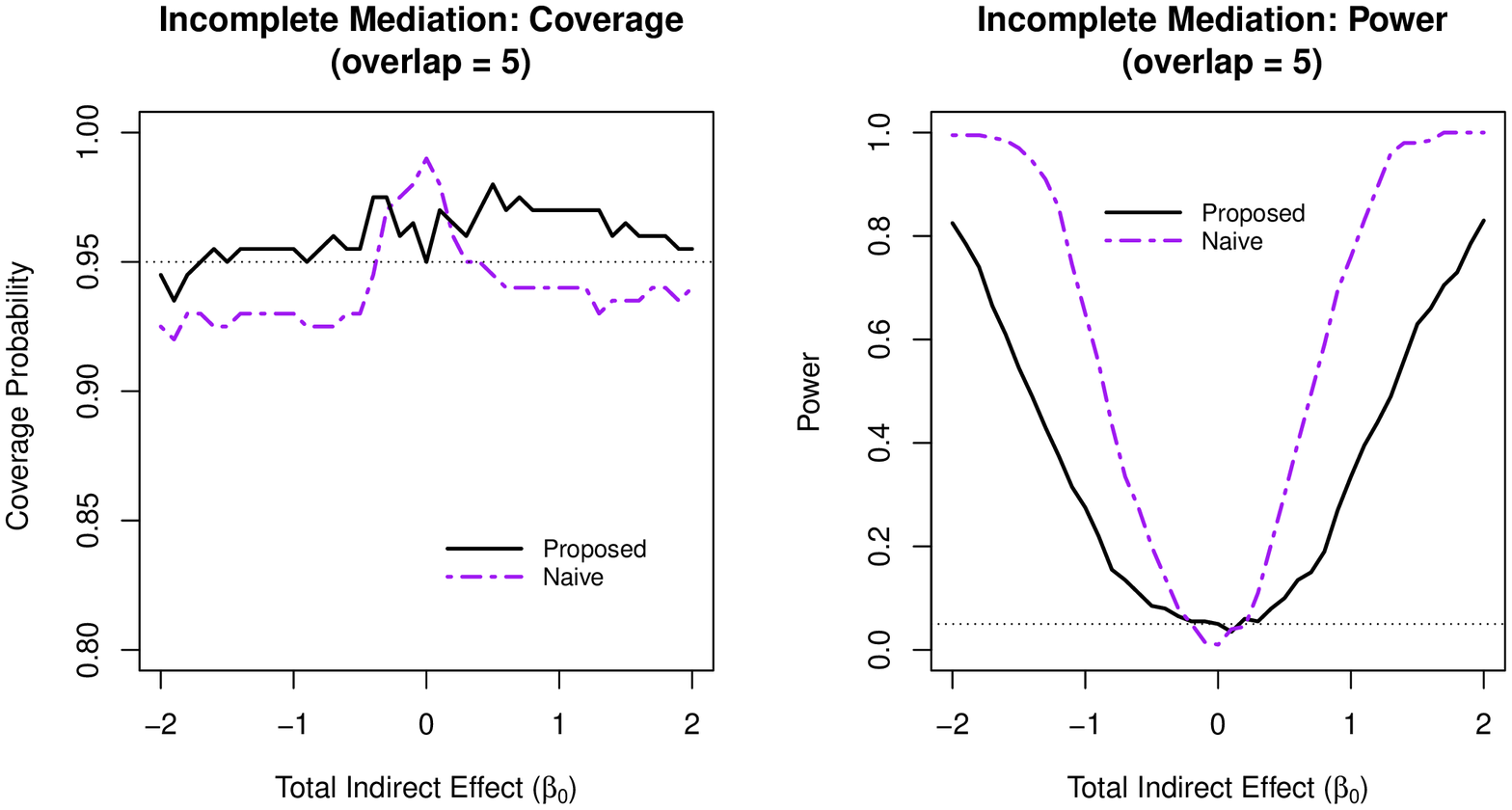}\par
  \caption{\label{fig:incomplete}Average coverage probabilities of 95\% confidence intervals (left panels) and average power curves at significance level $\alpha = 0.05$ (right panels) for estimating and testing the indirect effect under incomplete mediation, over 200 replications. The direct effect was $\alpha_1 = 0.1$. The number of true mediators was 1 in the upper panels and 5 in the lower panels. Proposed: $\hat{b}$ from~\eqref{eq:incomplete}; Naive: the naive method discussed in Section \ref{sec:methods_compared}; Zhang: method of \citet{zhang2016estimating}.}
\end{figure}

Figure~\ref{fig:incomplete} reports average coverage probabilities and power curves over 200 replications. The naive method had worse coverage than our approach, and though it had excellent power, it was not theoretically justified, as mentioned in Section \ref{sec:methods_compared}. The method of \citet{zhang2016estimating} had counterintuitive behavior when $\beta_0$ was large and surpringly high power when $\beta_0$ was small. Its power was poor for large $\beta_0$ because its model selection step performed poorly: larger $\beta_0$ corresponded to larger $c$ and therefore to increased collinearity between $G_i$ and $S_i$ in the regression for $Y_i$, making consistent model selection difficult. Its power was surprisingly high for small $\beta_0$ because it did not appropriately account for the variability of its model selection step. In Section \ref{sec:zhang} of the Supplementary Materials, we describe a slightly modified version of their approach that gives confidence intervals and show that it has poor coverage, and also construct a setting where it fails to maintain type I error because of its improper post-model selection inference.

\subsection{\label{sec:complete_sims}Simulations under complete mediation}

We next studied the performance of our indirect effect estimator under complete mediation. We considered four simulation settings based on the same data generation scheme used above, but with $\alpha_1 = 0$. We generated $p = 500$  potential mediators with either 1 or 5 true mediators, and in Section \ref{sec:add_com} of the Supplementary Materials we present results for $p = 1000$.

\begin{figure}[h!]
  \centering
  \includegraphics[width = \textwidth]{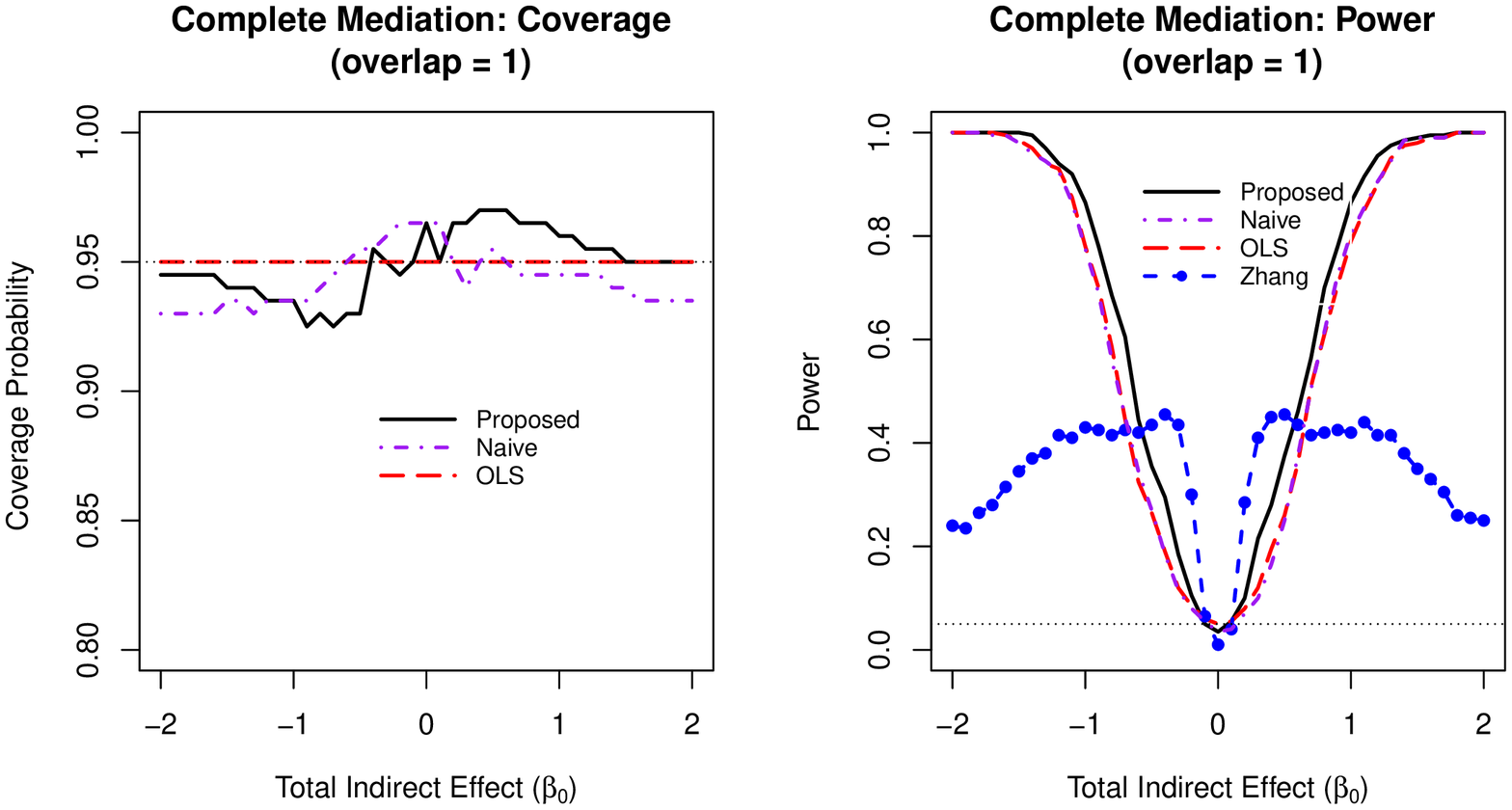}\par
  \includegraphics[width = \textwidth]{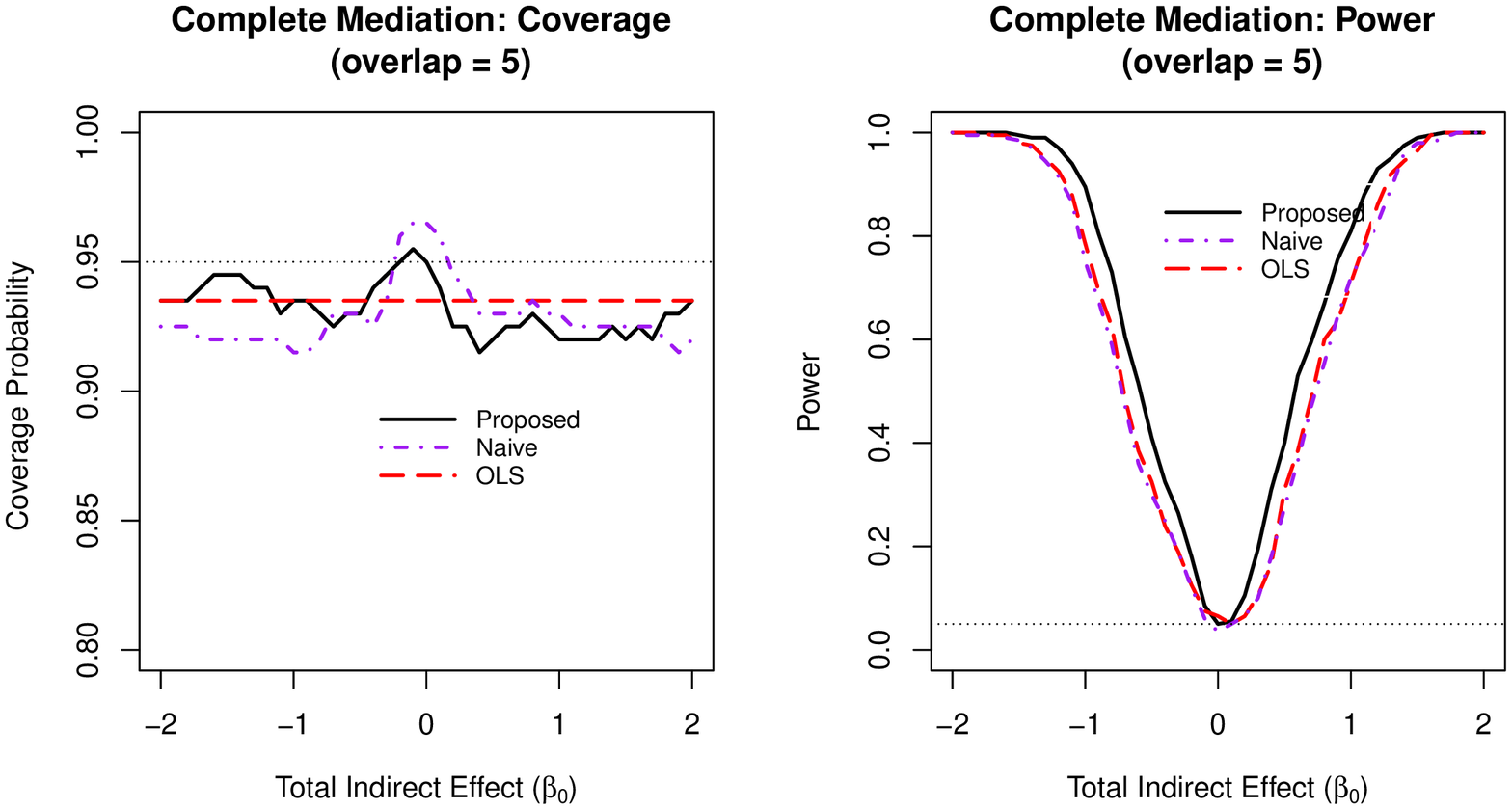}\par
  \caption{\label{fig:complete}Average coverage probabilities of 95\% confidence intervals (left panels) and average power curves at significance level $\alpha = 0.05$ (right panels) for estimating and testing the indirect effect under complete mediation, over 200 replications. The number of true mediators was 1 in the upper panels and 5 in the lower panels, with 500 potential mediators; Proposed: $\tilde{b}$ from~\eqref{eq:complete}; Naive: the naive method discussed in Section \ref{sec:methods_compared}; OLS: ordinary least squares estimate; Zhang: the method of \citet{zhang2016estimating}.}
\end{figure}

Figure~\ref{fig:complete} reports average coverage probabilities and power curves over 200 replications. Our method was always able to maintain the nominal coverage probability and significance level, and in every case had higher power than ordinary least squares and the naive method for sufficiently large $\beta_0$, consistent with Proposition~\ref{p:var}. The average lengths of 95\% confidence intervals were also smaller for our method compared to ordinary least squares; see Section \ref{sec:length} in the Supplementary Materials. Similar to the incomplete mediation setting, the method of \citet{zhang2016estimating} had high power when $\beta_0$ was small and counterintuitive behavior when $\beta_0$ was large. Our test was slightly conservative for $\beta_0$ close to zero because the normal approximation to the distribution of our Wald-type test statistic is poor under weak mediation, as discussed in Section~\ref{subsec:incomplete}.

\section{\label{sec:data}Data analysis}
\subsection{\label{sec:data_description}Data description}

Understanding the mechanisms behind individual variation in drug response is an important step in the development of personalized medicine. We applied our proposed methods to pharmacogenetic studies of the response to the cancer drug docetaxel in human lymphoblastoid cell line \citep{hanson2016computational, niu2012genetic}. The data consists of genotype data on 1,362,849 single nucleotide polymorphisms and expression data on 54,613 probes, after preprocessing, from cell lines from 95 Han-Chinese, 96 Caucasian, and 93 African-American individuals. These data are available from the Gene Expression Omnibus under accession number GSE24277. \citet{niu2012genetic} exposed these cells to docetaxel and quantified their responses using $\text{EC}_{50}$, the concentration at which a drug reduces the population of cells by half \citep{hanson2016computational}.

\subsection{\label{sec:gene}Gene set analysis}

It is common in gene expression profiling experiments to identify genes that are significantly associated with the phenotype being studied. A natural next step is to identify gene sets, representing biological pathways, through which these significant genes may act. This is a difficult analysis problem because the intervening pathways may contain a large number of genes, resulting in a high-dimensional mediation analysis problem. Note that this is a different from standard gene set enrichment analysis \citep{subramanian2005gene}, as the latter does not allow for direct testing of mediation by the gene set. 

We applied our proposed procedures to test whether a candidate gene set mediates the indirect effect of a given gene of interest on the phenotype. We used our incomplete mediation estimator $\hat{b}$~\eqref{eq:incomplete}, because the gene of interest may have a direct effect on the phenotype that does not proceed through the candidate gene set. As in our simulations, we set the tuning parameter $\tau_n = \{(\log p) / n\}^{1/2} / 3$ when estimating $\hat{\Omega}_I$~\eqref{eq:M_I}. As an illustration, we studied the indirect effects of TMED10, a transmembrane trafficking protein whose corresponding gene was the most significantly associated with docetaxel response in our data. We retrieved biological process Gene Ontology gene sets with at least 50 genes from Molecular Signatures Database \citep{subramanian2005gene, liberzon2011molecular}, then applied our proposed approach to test the indirect effect of TMED10 through each of the 4,436 candidates. Of these, 420 gene sets contained more genes than there were samples, making our high-dimensional approach indispensable. 

\begin{table}
  \centering
  \begin{tabular}{lcc}
    \hline
    Gene set & $95\%$ CI & $p$-value \\
    \hline 
Regulation of heart rate& -0.83$\pm$0.22 & $6.3 \times 10^{-14}$\\
Synaptic vesicle cycle& -0.60$\pm$0.17 & $ 2.6 \times 10^{-12}$ \\
Regulation of vasoconstriction& -1.08$\pm$0.30 & $3.1 \times 10^{-12}$\\
Negative regulation of transporter activity& -0.64$\pm$0.18 & $ 5.3\times 10^{-12}$ \\
Negative regulation of cation transmembrane transport& -0.70$\pm$0.20 & $ 5.3\times 10^{-12}$ \\
Positive regulation of blood circulation& -1.07$\pm$0.31 & $ 1.1\times 10^{-11}$ \\
Negative regulation of transmembrane transport& -0.73$\pm$0.21 & $ 2.2\times 10^{-11}$ \\
Regulation of cardiac muscle contraction    & -0.60$\pm$0.18 & $ 6.8\times 10^{-11}$ \\
Neurotransmitter transport    & -0.58$\pm$0.18 & $ 1.4\times 10^{-10}$ \\
Regulation of oxidoreductase activity  & -0.73$\pm$0.22 & $ 1.5\times 10^{-10}$ \\
    \hline
  \end{tabular}
  \caption{\label{tab:GO}Top 10 most significant gene sets through which the TMED10 gene may act on drug response. $95\%$ CI: confidence intervals obtained from the proposed method under incomplete mediation~\eqref{eq:incomplete}; $p$-value: raw $p$-values obtained from the proposed procedure.}
\end{table}

Our procedure found 257 gene sets with significant indirect effects that passed Bonferroni correction. One reason for the large number of significant findings is that many gene sets are subgroups of larger sets. Table~\ref{tab:GO} reports the top 10 most significant ones, as ranked by their indirect effect $p$-values. Many of these are involved in transmembrane transport, which suggests that the role of TMED10 in the response to docetaxel may be to move small molecules into and out of cells. Our proposed method can thus generate useful exploratory results for further downstream analysis. We also implemented the method of \citet{zhang2016estimating}, which found no significant gene sets.

\subsection{\label{sec:snp}Non-coding Variants Analysis}

We next studied the effects of non-coding genetic variants on the response to docetaxel. We first performed a standard genome-wide association study and regressed docetaxel $\text{EC}_{50}$ on each variant separately, controlling for the first five principal components of the genotype data in order to control for population stratification \citep{price2006principal}. This approach did not identify any significant variants after multiple testing correction. We were then interested in whether a high-dimensional mediation analysis method could provide more power. We chose the top 1,000 expression probes with the largest variances as potential mediators and controlled for the first five principal components. We first applied the method of \citet{zhang2016estimating}, but it did not detect any significant variants that passed Bonferroni correction.

It is known that non-coding variants likely do not have a direct effect on the phenotype. This justifies application of our complete mediation estimator $\tilde{b}$~\eqref{eq:complete} to test for non-coding variances associated with $\text{EC}_{50}$. We use $\tau_n' = \{(\log p) / n\}^{1/2} / 3$ when estimating $\hat{\Omega}_C$~\eqref{eq:M_C}, and controlled for the first five principal components in all of our analyses. Our new procedure was indeed able to identify one significant variant that passed Bonferroni correction for all non-coding variants: the single nucleotide polymorphism rs11578000, with an estimated indirect effect of $\hat{b} = -0.0777 \pm 0.0186$ and a $p$-value of $2.8 \times 10^{-16}$. Interestingly, the Genotype Tissue Expression Project \citep{lonsdale2013genotype} found that in heart and muscle tissue, rs11578000 regulated the expression of the gene SUSD4, which has been found to inhibit the complement system \citep{holmquist2013sushi}, a system of proteins involved in innate immunity that may be involved in the response to epirubicin/docetaxel treatment in breast cancer patients \citep{michlmayr2010modulation}. Our $\tilde{b}$ provides novel findings that could not have been detected using standard approaches.

\section{\label{sec:dis}Discussion}

Our methods require that the directions of causality in mediation model~\eqref{eq:model} be correctly specified. In practice this causal pathway may be complex, as some genes react to the outcome, rather than cause the outcome. Our method's findings should thus be further analyzed to verify that the causal directions are indeed of interest. One potential solution to this issue is to use recently developed methods for high-dimensional causal inference \citep{buhlmann2014high} to first screen out reactive genes before applying our proposed procedures.

Though we focused on testing the indirect effect in this paper, our incomplete mediation method also provides $\hat{a}$, a natural estimate of the direct effect, as discussed in Section~\ref{sec:incomplete}. We explored using $\hat{a}$ to test for the presence of a direct effect, and similar to \citet{kenny2014power}, we found that the power was relatively low. This may be because when calculating our estimators, we penalize the direct effect parameter $\alpha_1$ when we fit the scaled lasso. This makes sense if the direct effect is expected to be zero, which is sensible in our integrative genomics applications, but an alternative is to leave $\alpha_1$ unpenalized. This may give a more powerful test for the direct effect, and more work is required to derive the asymptotic distribution of the resulting estimator. Inference for the direct effect in high dimensions could also be achieved by applying debiased lasso methods to test $\alpha_1$ in the regression of $Y_i$ on $G_i$ and $S_i$ in our model~\eqref{eq:model}. Based on some simulations we found that our estimator $\hat{a}$ is always smaller in absolute value, and usually had smaller variance, compared to the debiased estimator of \citet{van2014asymptotically}.

Finally, we have so far only considered linear mediation models for continuous outcomes. It is possible to extend our methods to generalized linear models for the outcome $Y_i$ in mediation model~\eqref{eq:model}. However, the causal interpretation of these nonlinear models requires special care \citep{vanderweele2010odds, vanderweele2015explanation}. Also, we have so far assumed that the residual errors $\epsilon_{1i}$ and $E_i$ are independent of the exposure $S_i$ and mediator $G_i$ in model \eqref{eq:model}. Under heteroskedasticity, if the errors are dependent on either $S_i$ or $G_i$, our theoretical results will likely not hold, and extending our approach to this setting is an important research direction.

\section*{Acknowledgements}

The authors would like to thank the reviewers and the Associate Editor for their extremely useful comments, as well as Casey Hanson for his help with data processing. This work was funded in part by the Mayo Clinic-UIUC Alliance and by a grant awarded by NIGMS through funds provided by the trans-NIH Big Data to Knowledge (BD2K) initiative (www.bd2k.nih.gov). The content is solely the responsibility of the authors and does not necessarily represent the official views of the National Institutes of Health. The work of Dave Zhao was funded in part by the National Science Foundation.

\bibliographystyle{abbrvnat}
\bibliography{egbib}

\appendix

\section{\label{sec:causal_inter}Causal interperation of direct and indirect effect}
Throughout the Supplementary Materials, let $\Omega_I = D \Sigma_{XX}^{-1}$ and $\Omega_C = \Sigma_{SG} \Sigma_{GG}^{-1}$ be the population level matricesas introduced in Section \ref{sec:incomplete} and \ref{sec:complete} in the main text. Let $\epsilon_2$ be an $n \times 1$ vector of the $\epsilon_{2i} = E_i^\top \alpha_0$.

Suppose in a mediation model we have scalar exposure $S$, scalar outcome $Y$, and $p$ potential mediators denoted by $G = (G_1, \dots, G_p)^\top$. Let $G_s$ be the counterfactual value of $G$ if exposure $S$ were set to the value $s$ and let $Y_{sg}$ denote the value of outcome $Y$ if $S$ were set to the value $s$ and $G$ were set to $g$. The controlled direct effect on $Y$ of changing exposure $S$ from $s$ to $s^*$ is defined to be $Y_{sg} - Y_{s^*g}$, where the value of $G$ is kept at $g$. This measures the portion of the effect of $S$ on $Y$ that does not proceed through $G$. The natural direct effect of exposure $S$ on outcome $Y$ is defined as $Y_{s G_{s^*}} - Y_{s^*G_{s^*}}$. Finally, the natural indirect effect is defined as $Y_{s G_s} - Y_{s G_{s^*}}$. The natural indirect effect measures the effect of setting the exposure to equal $s$, and then changing the mediator from what it would equal if the exposure were equal to $s$, to what would it would equal if the exposure were equal to $s^*$. 

To identify these direct and indirect effects, we will need the following no unmeasured confounding assumptions, following \citet{vanderweele2014mediation}:
\begin{assumption}
  \label{a:confounding1}
There is no unmeasured confounding of the exposure-outcome relationship.
\end{assumption}

\begin{assumption}
  \label{a:confounding2}
There is no unmeasured confounding of the mediator-outcome relationship.
\end{assumption}
\begin{assumption}
  \label{a:confounding3}
There is no unmeasured confounding of the exposure-mediator relationship.
\end{assumption}

\begin{assumption}
  \label{a:confounding4}
There is no confounders between mediators and outcome that is affected by the exposure.
\end{assumption}

Assumptions \ref{a:confounding1}--\ref{a:confounding4} can also be interpreted using several conditional independence relationships and details can be found in Section 2 of \citet{vanderweele2014mediation}. Under Assumptions \ref{a:confounding1}--\ref{a:confounding4} and model \eqref{eq:model} in the main text, the same derivation from Section 3.2 of \citet{vanderweele2014mediation} can be used to show that
\begin{align*}
  &\text{E} ( Y_{sg} - Y_{s^*g} ) = \alpha_1 (s - s^*),\\
  &\text{E} ( Y_{sG_s^*} - Y_{s^*G_{s^*}} ) = \alpha_1 (s - s^*),\\
  &\text{E} ( Y_{s G_s} - Y_{s G_s^*} ) = \gamma_0 \alpha_0^\top (s - s^*) = \beta_0^\top (s - s^*).
\end{align*}
Thus our $\alpha_1$ can be interpreted as the average controlled and natural direct effects, and $\beta_0$ can be interpreted as the average natural indirect effect, of a one-unit change in the exposure $S$. The detailed proof can be found in the Appendix of \citet{vanderweele2014mediation}.

\section{\label{sec:covs}Incorporating covariates}

If $Z_i$ is a low-dimensional vector of potential confounders, we can write
\[
Y_i = G_i^\top \alpha_0 + S_i^\top \alpha_1 + Z_i^\top \alpha_z + \epsilon_{1i},
\quad
G_i = \gamma S_i + \gamma_z Z_i + E_{i}.
\]
We can extend our proposed methods to perform high-dimensional inference on the indirect effect $\beta_0 = \gamma^\top \alpha_0$ under this model. We do not consider more complicated models, such as interactions between the $Z_i$ and $G_i$ or between $S_i$ and $G_i$. These may require additional conceptual and methodological development, which we leave for future work.

Define $\tilde{S}_i = (S_i^\top, Z_i^\top)^\top$ and $\tilde{X}_i = (G_i^\top, \tilde{S}_i^\top)^\top$. Following the development of our proposed incomplete mediation method in Section \ref{sec:incomplete} in the main text, we also define
\begin{equation}
  \label{eq:D}
  D =
  \begin{pmatrix}
    \Sigma_{\tilde{S}G} & 0 \\
    0 & \Sigma_{\tilde{S}\tilde{S}}
  \end{pmatrix},
  \quad
  \hat{D} =
  \begin{pmatrix}
    \hat{\Sigma}_{\tilde{S}G} & 0 \\
    0 & \hat{\Sigma}_{\tilde{S}\tilde{S}} \\
  \end{pmatrix}.
\end{equation}
Define
\[
\hat{\Omega}
=
\argmin_\Omega \Vert \Omega \Vert
\text{ subject to }
\Vert \Omega \hat\Sigma_{\tilde{X}\tilde{X}} - \hat{D} \Vert_\infty \leq \tau_n,
\]
where $\tau_n$ is a tuning parameter. We consider estimators of the form $\hat{\Omega} \hat{\Sigma}_{\tilde{X}Y}$, which satisfy
\[
\hat{\Omega} \hat{\Sigma}_{\tilde{X}Y} - 
\begin{pmatrix}
  \Sigma_{\tilde{S}G} \alpha_0\\
  \Sigma_{\tilde{S}S} \alpha_{1z}
\end{pmatrix}
=
(\hat{\Omega} \hat{\Sigma}_{\tilde{X}\tilde{X}} - D) \alpha  
+
\frac{1}{n} \hat{\Omega} \tilde{X}^\top \epsilon_1,
\]
where $\alpha_{1z} = (\alpha_1^\top, \alpha_z^\top)^\top$ and $\alpha = (\alpha_0^\top, \alpha_{1z}^\top)^\top$. We estimate the bias term using $(\hat{\Omega} \hat{\Sigma}_{\tilde{X}\tilde{X}} - \hat{D}) \tilde{\alpha}$, where $\tilde{\alpha}$ is a lasso estimate of $\alpha$, and construct a debiased estimate of $\Sigma_{\tilde{S}G} \alpha_0$ by subtracting the estimated bias from $\hat{\Omega} \hat{\Sigma}_{\tilde{X}Y}$. We then pre-multiply the debiased estimator by $I_2 \otimes \hat{\Sigma}_{\tilde{S}\tilde{S}}^{-1}$, where $I_2$ denotes the $2 \times 2$ identity matrix and $\otimes$ denotes the Kronecker product, to obtain
\begin{align*}
  \begin{pmatrix}
    \hat{b} \\
    \hat{a}_{1z}
  \end{pmatrix}
   =\,&
   (I_2 \otimes \hat{\Sigma}_{\tilde{S}\tilde{S}}^{-1}) \{ \hat{\Omega} \hat{\Sigma}_{\tilde{X}Y} - (\hat{\Omega} \hat{\Sigma}_{\tilde{X}\tilde{X}} - \hat{D}) \tilde{\alpha} \}
   \\
   =\,&
  \begin{pmatrix}
    \hat{\Sigma}_{\tilde{S}\tilde{S}}^{-1} \hat{\Sigma}_{\tilde{S}G} \tilde{\alpha}_0 \\
    \tilde{\alpha}_{1z}
  \end{pmatrix}
  +
  (I_2 \otimes \hat{\Sigma}_{\tilde{S}\tilde{S}}^{-1}) \frac{1}{n} \hat{\Omega} \tilde{X}^\top (Y - \tilde{X} \tilde{\alpha}),
\end{align*}
where $\tilde{\alpha}_{1z}$ is the component of $\tilde{\alpha}$ that estimates $\alpha_{1z}$. The first $q$ components of $\hat{b}$ will be an asymptotically normal estimate of $\beta_0$.

\section{\label{sec:sparse_Sigma_SG}Conditions for directly using a debiased estimator of $\alpha_0$}

In Section \ref{sec:model} of the main text we argued that the reason we do not opt for plugging in a de-biased lasso estimator in $(\sum_i S_i S_i^\top)^{-1} (\sum_i S_i G_i^\top \alpha_0)$ is that then it will not be gauranteed that the bias of the estimator is $o_P(1)$. This may in fact be possible, assuming either that $p \log(p) / n^{1/2} \rightarrow 0$ or that the correlations between $S_i$ and $G_i$ are sparse, and here we give a detailed reasoning by look at the de-biased lasso estimator $\alpha_0$ in detail.

Recall that our model implies 

\begin{equation} 
\label{eq:model_variance}
Y = G \alpha_0 + \epsilon_1,
\quad
G \alpha_0 = S \beta_0 + \epsilon_2.
\end{equation}

If we directly used a de-biased estimator $\hat \alpha$ for $\alpha$, the estimator for the indirect effect $\beta_0$ would be $\tilde \beta = (S^\top S)^{-1} S^\top G \hat \alpha$, and we would have
\begin{eqnarray*}
\tilde \beta - \beta_0 
&=&
 (S^\top S)^{-1} S^\top G (\hat \alpha - \alpha_0) + (S^\top S)^{-1} S^\top G \alpha_0 - \beta_0 \\
& = &
\hat \Sigma_{SS}^{-1} \hat \Sigma_{SG} (\hat \alpha - \alpha_0) + (S^\top S)^{-1} S^\top (S \beta_0 + \epsilon_2) - \beta_0,
\end{eqnarray*}
so
\begin{eqnarray*}
n^{1/2} ( \tilde \beta - \beta_0 )
&=&
n^{1/2} \hat \Sigma_{SS}^{-1} \hat \Sigma_{SG} (\hat \alpha - \alpha_0) + n^{-1} \hat \Sigma_{SS}^{-1} S^\top \epsilon_2.
\end{eqnarray*}

Based on properties of the debiased lasso estimator \citep{javanmard2014confidence}, 
$$
n^{1/2} ( \hat \alpha - \alpha_0 ) = Z + \Delta, 
$$
where $Z = n ^ {1/2} M G^\top \epsilon_1$ and $\Delta = n^{1/2} (M \hat\Sigma_{GG} - I) (\hat\alpha - \alpha_0)$, and $M$ is chosen so that $ \Vert M \hat\Sigma_{GG} - I \Vert_\infty \asymp (\log(p) / n)^{1/2}$. Since the debaised $\hat \alpha$ satisfies $\Vert \hat\alpha - \alpha_0 \Vert_1 = O_P\{  (\log(p) / n)^{1/2} \}$,
\begin{eqnarray*}
\Vert \Delta \Vert_\infty 
&=&
 \Vert n^{1/2} (M \hat\Sigma_{GG} - I) (\hat\alpha - \alpha_0) \Vert_\infty \\
&\leq &
n^{1/2} \Vert M \hat\Sigma_{GG} - I \Vert_\infty \Vert \hat\alpha - \alpha_0 \Vert_1 \\
&=&
O_P\{ \log(p) / n^{1/2} \}.
\end{eqnarray*}
Then
\begin{eqnarray*}
n^{1/2} ( \tilde \beta - \beta_0 )
&=&
\hat \Sigma_{SS}^{-1} \hat \Sigma_{SG} Z + n^{-1} \hat \Sigma_{SS}^{-1} S^\top \epsilon_2 
+ \hat \Sigma_{SS}^{-1} \hat \Sigma_{SG} \Delta,
\end{eqnarray*}
and the first two terms can be shown to be asymptotically normal.

The difficulty arises when trying to show that the last term is $o_P(1)$. The most straightforward analysis gives
\[
\Vert \hat \Sigma_{SS}^{-1} \hat \Sigma_{SG} \Delta \Vert_\infty
\leq
\Vert \hat \Sigma_{SS}^{-1} \Vert_{L_1}
(\Vert \Sigma_{SG}^\top \Vert_{L_1} \Vert \Delta \Vert_\infty
+
\Vert \hat \Sigma_{SG}^\top -  \Sigma_{SG}^\top \Vert_{L_1} \Vert \Delta \Vert_\infty).
\]
Without any additional assumptions, $\Vert \Sigma_{SG}^\top \Vert_{L_1} = O(p)$, so $\Vert \hat \Sigma_{SS}^{-1} \hat \Sigma_{SG} \Delta \Vert_\infty = O_P\{ p \log(p) / n^{1/2} \}$. Therefore, if $p$ and $n$ are such that $p \log(p) / n^{1/2} \rightarrow 0$, this term will be $o_P(1)$. Alternatively, we might assume that each row of $\Sigma_{SG}$ has only at most $o\{n^{1/2} / \log(p) \}$ non-zero components, in which case the term will be $o_P(1)$ as well.

\section{Relationship with other methods}

\subsection{\label{compare_athey2018}Comparing with \citet{athey2018approximate}}

Here we compare our method and the method of \citet{athey2018approximate}, with both attempt to debias a linear combination of a sparse high-dimensional vector. First, \citet{athey2018approximate} adopt the potential outcomes framework where they observe a treatment $W_i$, the covariates $X_i$, and outcome 
\[
Y_i^{\text{obs}} = Y_i(W_i) = \begin{cases}
Y_i(1) \quad \text{if  } W_i = 1,\\
Y_i(0) \quad \text{if  } W_i = 0
\end{cases}
\]
for each experimental unit $i$. We adopt a mediation framework with exposure $S_i$, mediators $G_i$, and outcome $Y_i$. Their treament variable $W_i$ is analogous to our exposure $S_i$, though $W_i$ can only take values 0 or 1 while $S_i$ is under no restriction.

Second, \citet{athey2018approximate} assume that $Y_i^{\text{obs}}$ and $X_i$ are related by the linear model $\text{E}\{Y_i(0) \mid X=x\} = x^\top \beta_c$ in the control group and $\text{E}\{Y_i(1) \mid X=x\} = x^\top \beta_t$ in the greatment group. This is analogous to the first regression model $Y_i = G_i^\top \alpha_0 + S_i^\top \alpha_1 + \epsilon_{1i}$ in our mediation model \eqref{eq:model} in the main text. However, our second model $G_i = \gamma S_i + E_i$ has no analog in the framework of \citet{athey2018approximate}, as their $X_i$ are not related to their $W_i$.

Third, the objective of \citet{athey2018approximate} is to estimate the average treatment effect
\[
\tau = \frac{1}{n_t} \sum_{ \{ i: W_i = 1 \} } \text{E} ( Y_i(1) - Y_i(0) \mid X_i),
\]
where $n_t$ is the size of the treatment group. In contrast, we have a very different goal, which is to estimate an indirect effect.

Fourth, \citet{athey2018approximate} note that the average treatment effect can be written as 
\[
\tau = \overline{X}_t \beta_t - \overline{X}_t \beta_c,
\]
where $\overline{X}_t = n_t^{-1}\sum_{i:W_i=1} X_i$. A major focus of their work is to estimate and do inference on the second term, which is a linear combination of the unknown high-dimensional vector $\beta_c$. A similar challenge lies at the heart of our work, which seeks to estimate an indirect effect $\beta_0$ that under \eqref{eq:model} can be written as $\beta_0 = \gamma_0^\top \alpha_0$. This is also a linear combination of an unknown high-dimensional vector, namely $\alpha_0$. Directly using a penalized regression estimate of $\beta_c$ would result in a large bias for $\tau$, which is the same problem that we face with directly using a penalized regression estimate of $\alpha_0$; this was discussed in Section \ref{sec:model} in the main text. An additional difficulty that we have, due to our mediation framework, is that the coefficients $\gamma_o$ of our linear combination are unknown and must be estimated as well, in contrast to the known $\overline{X}_t$ in \citet{athey2018approximate}.

Finally, our solutions to this bias problem are similar in principle. \citet{athey2018approximate} proposed to estimate $\overline{X}_t \beta_c$ using $\overline{X}_t \hat\beta_c + \sum_{\{i: W_i = 0\}} \gamma_i (Y_i^{\text{obs}} - X_i \hat \beta_c)$, where $\hat\beta_c$ is the lasso estimate of $\beta_c$ and $\gamma_i$ are weights, chosen to minimize the estimation error. The term $\overline{X}_t\hat\beta_c$ corresponds to the first part $\hat \Sigma_{SS}^{-1}\hat{\Sigma}_{SG} \tilde{\alpha}_0$ of our estimator \eqref{eq:complete}, and the term $\sum_{\{i: W_i = 0\}} \gamma_i (Y_i^{\text{obs}} - X_i \hat \beta_c)$ is analgous to the second part $\hat \Sigma_{SS}^{-1} n^{-1} \hat{\Omega}_{C} G^\top (Y - G \tilde\alpha)$. In other words, our estimators have the same form and differ in how the weights of the penalized regression residuals are defined.

\subsection{\label{compare_athey2016a}Comparing with \citet{athey2016estimating}}

Similar to our Proposition \ref{p:var} in the main text, \citet{athey2016estimating} also found that leveraging surrogate outcomes can increase the efficiency of estimating average treatment effects. Here we discuss the relationship between our results and Section 7 of \citet{athey2016estimating}.

First, there are certain similarities between the surrogate outcome framework of \citet{athey2016estimating} and our mediation framework. In their single sample design, \citet{athey2016estimating} consider a binary treatment $W$, a long-term outcome $Y$, and intermediate outcome vector $S$, and assume that $S$ is a surrogate for $Y$ in the sense that $W$ and $Y$ are independent conditional on $S$. In the context of our mediation model \eqref{eq:model} in the main text, their treatment $W$ is analogous to our exposure $S$, their surrogate $S$ is analogous to our mediator $G$, and their outcome $Y$ is our outcome $Y$ as well. Their surrogacy assumption is analogous to our assumption of complete mediation in Section \ref{sec:complete} of the main text, because without a direct effect, $Y$ and $S$ are independent conditional in $G$. On the other hand, \citet{athey2016estimating} consider a model-free approach while we posit a parametric linear mediation model.

The key to both our procedure and the procedure of \citet{athey2016estimating} is to assume that the surrogates/mediators are the only pathway through which the treatment/exposure can affect the outcome. When their surrogacy assumption holds, \citet{athey2016estimating} show that surrogates can be used to pool treated and untreated units to estimate a causal treatment effect. This increases the effective sample size and therefore the estimation efficiency. In contrast, our complete mediation method works by denoising our outcome $Y_i$ using our mediators $G_i$. We regress our denoised outcome on our exposure $S_i$, which is more efficient than the regression of $Y_i$ on $S_i$ directly.

Both we and \citet{athey2016estimating} also characterize the theoretical efficiency gain provided by the surrogates/mediators, and our results share many qualitative similarities.
Assuming homoskedasticity and no pretreatment variables or other covariates, \citet{athey2016estimating} show that the decrease in variance afforded by using surrogates can be expressed as
\begin{equation}
\label{eq:athey_eff_gain}
\text{E} \left [ \frac{2 \sigma^2}{p (1-p)} \cdot \left \{ p(1-p) - (r(S_i) - p)^2 \right \} \right ], 
\end{equation}
where $p$ is the probability of receiving the treament and $r(S_i)$ is the conditional probability of having received the treament given the value of the surrogates. The $\sigma^2$ in~\eqref{eq:athey_eff_gain} denotes the variance of the outcome conditional on the surrogates and is thus analogus to our $\sigma_1^2$ from model \eqref{eq:model}. In our work, Proposition~\ref{p:var} shows that the difference of the asymptotic variances of the ordinary least squares estimator and our estimator can be represented as
\begin{equation}
\label{eq:our_var_diff}
\sigma_1^2\Sigma_{SS}^{-1}(\Sigma_{SS}-\Sigma_{SG}\Sigma_{GG}^{-1}\Sigma_{GS})\Sigma_{SS}^{-1}. 
\end{equation}

First, both \eqref{eq:athey_eff_gain} and \eqref{eq:our_var_diff} indicate that that the efficiency gain is a linear function of the variance of the outcome conditional on the surrogates/mediators. This makes sense because both of our methods gain efficiency essentially by denoising the outcome using the surrogates/mediators. In Section~\ref{sec:effectnoise} below we confirmed this claim in simulations by varying $\sigma_1^2$.

Second, both analyses show that the relative efficiency gain is weakest when the treatment/exposure is perfectly correlated with the surrogates/mediators, as in the case the latter offer no additional information. Indeed, when treatment is perfectly correlated with the surrogates, $r(S_i) = W_i$ in \eqref{eq:our_var_diff} and the efficiency gain is zero. When our exposure is perfectly correlated with our mediators, $\Sigma_{SG} \Sigma_{GG}^{-1} \Sigma_{GS} = \Sigma_{SS}$, so \eqref{eq:our_var_diff} equals zero as well. Similarly, the efficiency gain is the greatest when the treatment/exposure is independent of the surrogates/mediators, when $r(S_i) = p$ in \eqref{eq:athey_eff_gain} and $\Sigma_{SG} = 0$ in \eqref{eq:our_var_diff}. Of course, in this case the true treatment/indirect effect is also equal to zero, so this scenario is not interesting from the testing perspective. These results imply that when testing for a treatment/indirect effect, both methods are most useful when the treatment/exposure is weakly correlated with the surrogates/mediators.

A major difference is that our method assumes a parametric two-stage linear mediation model, whereas \citet{athey2016estimating} operate under a semi-parametric framework. Though the semi-parametric setting is more flexible, it also necessarily imposes restrictions on the support of the treatment/exposure: \citet{athey2016estimating} only allow for binary treatments, and extensions to more than two treatments groups or to continuous treatments are difficult. Furthermore, \citet{athey2016estimating} only proposed estimation methods, whereas we also provide distributional results that allow for valid inference. Nevertheless, relaxing our framework to semiparametric is an important next step and in the future, we hope to explore more in this direction.

\subsection{\label{compare_incomplete}Alternative way to estimate the direct effect under incomplete mediation}

Under incomplete mediation, we proposed estimators
$
\begin{pmatrix}
  \hat b \\
  \hat a
\end{pmatrix}
$ for 
$
\begin{pmatrix}
 \beta_0 \\
 \alpha_1
\end{pmatrix}
$
where we have 
\[
n^{1/2}
\begin{pmatrix}
  \hat{b} - \beta_0\\
  \hat{a} - \alpha_1
\end{pmatrix}
\rightarrow
N(0, V),
\mbox{ where }
V = 
\begin{pmatrix}
  \sigma_1^2 \Gamma + \sigma_2^2 \Sigma_{SS}^{-1} & -\sigma_1^2 \Gamma \\
  -\sigma_1^2 \Gamma & \sigma_1^2 (\Gamma + \Sigma_{SS}^{-1})
\end{pmatrix},
\]
and  $
\Gamma
\equiv
\Sigma_{SS}^{-1} \Sigma_{SG}
(\Sigma_{GG} - \Sigma_{GS} \Sigma_{SS}^{-1} \Sigma_{SG})^{-1}
\Sigma_{GS} \Sigma_{SS}^{-1}
$. At the same time, our model implies that
$$Y = S(\alpha_1 + \beta_0) + \epsilon_1 + \epsilon_2 = S \theta + \epsilon_1 + \epsilon_2,$$
where $\theta = \alpha_1 + \beta_0$ is the total effect. Define $\hat \theta_{OLS}$ to be the ordinary least squares estimate of $\theta$. As pointed out by a referee, an alternative way to estimate the direct effect $\alpha_1$ is $\tilde\alpha_1 = \hat\theta_{OLS} - \hat b$. We show here that this alternative is asymptotically equivalent to $\hat a$.

First, it's easy to see that $\tilde\alpha_1$ is asymptotically normal and unbiased, because $\text{E}(\hat\theta_{OLS} - \hat b) = \theta - \beta_0 = \alpha_1$. To calculate its asmptotic variance, write
\[
\tilde\alpha_1 - \alpha_1= (\hat\theta_{OLS} - \hat b) - (\theta - \beta_0) = (\hat{\theta}_{OLS} - \theta) - (\hat{b} - \beta_0),
\]
so that
\begin{align*}
  n^{1/2} (\tilde\alpha_1 - \alpha_1)
  =\,&
  (1, -1) \{ n^{1/2}
  \begin{pmatrix}
    \hat{\theta}_{OLS} - \theta\\
    \hat{b} - \beta_0
  \end{pmatrix} \} \\
  =\,&
  (1, -1)
  \begin{pmatrix}
    n^{-1/2}\hat\Sigma_{SS}^{-1} S^\top (\epsilon_1 + \epsilon_2)\\
    n^{-1/2} (\hat\Sigma_{SS}^{-1} \quad 0) \hat\Omega_{I} X^\top \epsilon_1 + n^{-1/2}\hat\Sigma_{SS}^{-1} S^\top \epsilon_2 + \Delta_2
  \end{pmatrix} \\
  =\,&
  (1, -1)
  \begin{pmatrix}
    n^{-1/2}\Sigma_{SS}^{-1} S^\top (\epsilon_1 + \epsilon_2) + \Delta_3\\
    n^{-1/2} (\Sigma_{SS}^{-1} \quad 0) \Omega_{I} X^\top \epsilon_1 + n^{-1/2}\Sigma_{SS}^{-1} S^\top \epsilon_2 + \Delta_{0,1} + \Delta_{1,1} + \Delta_2
  \end{pmatrix},
\end{align*}
where $\Delta_2$ is same $\Delta_2$ as in the proof of Theorem~\ref{thm:incomplete}, $\Delta_{0,1}$ and $\Delta_{1,1}$ are the first q components of $\Delta_{0}$ and $\Delta_{1}$ as in the proof of Theorem~\ref{thm:incomplete}, and $\Delta_3 =  n^{-1/2} (\hat\Sigma_{SS}^{-1}-\Sigma_{SS}^{-1})S^\top(\epsilon_1 + \epsilon_2)$ has the same structure with $\Delta_2 = n^{-1/2} (\hat\Sigma_{SS}^{-1}-\Sigma_{SS}^{-1})S^\top \epsilon_2$. So following the proof of Theorem~\ref{thm:incomplete}, it can be easily shown that $\Vert \Delta_3 \Vert_\infty = o_P(1)$ and $\Vert \Delta_{0,1} + \Delta_{1,1} + \Delta_2 \Vert_\infty = o_P(1)$.

Let
\[
W = n^{-1/2} \sum_{i=1}^{n}
\begin{pmatrix}
    \Sigma_{SS}^{-1} S_i^\top \epsilon_{1i }+ \Sigma_{SS}^{-1} S_i^\top \epsilon_{2i}\\
     (\Sigma_{SS}^{-1} \quad 0) \Omega_{I} X_i^\top \epsilon_{1i} + \Sigma_{SS}^{-1} S_i^\top \epsilon_{2_i}
\end{pmatrix}.
\]
For each $i$ let
\[
W_i = W_{1i} + W_{2i} = 
\begin{pmatrix}
    \Sigma_{SS}^{-1} S_i^\top \epsilon_{1i }\\
     (\Sigma_{SS}^{-1} \quad 0) \Omega_{I} X_i^\top \epsilon_{1i}
  \end{pmatrix}
+
\begin{pmatrix}
   \Sigma_{SS}^{-1} S_i^\top \epsilon_{2i}\\
  \Sigma_{SS}^{-1} S_i^\top \epsilon_{2_i}
  \end{pmatrix}.
\]
Then $\text{E}(W_{1i}) = \text{E}(W_{2i}) = 0$, and
\begin{align*}
  &\text{var}(W_{1i}) = \text{E}(W_{1i} W_{1i}^\top) \\
  =\,& \text{E}
  \begin{pmatrix}
    \Sigma_{SS}^{-1} S_i^\top \epsilon_{1i }\epsilon_{1i }^\top S_i  \Sigma_{SS}^{-1} 
    & \Sigma_{SS}^{-1} S_i^\top \epsilon_{1i }\epsilon_{1i }^\top X_i  \Omega_{I}^\top \begin{pmatrix}\Sigma_{SS}^{-1} \\ 0\end{pmatrix} \\
      (\Sigma_{SS}^{-1} \quad 0) \Omega_{I} X_i^\top \epsilon_{1i} \epsilon_{1i }^\top S_i  \Sigma_{SS}^{-1} 
      &  (\Sigma_{SS}^{-1} \quad 0) \Omega_{I} X_i^\top \epsilon_{1i} \epsilon_{1i }^\top X_i  \Omega_{I}^\top \begin{pmatrix}\Sigma_{SS}^{-1} \\ 0\end{pmatrix}
  \end{pmatrix} \\
  =\,&
  \begin{pmatrix}
    \sigma_1^2 \Sigma_{SS}^{-1} 
    & \sigma_1^2 \Sigma_{SS}^{-1} \Sigma_{SX} \Omega_{I}^\top \begin{pmatrix}\Sigma_{SS}^{-1} \\ 0\end{pmatrix} \\
      \sigma_1^2  (\Sigma_{SS}^{-1} \quad 0) \Omega_{I} \Sigma_{XS} \Sigma_{SS}^{-1} 
      & \sigma_1^2 (\Sigma_{SS}^{-1} \quad 0) \Omega_{I} \Sigma_{XX} \Omega_{I}^\top \begin{pmatrix}\Sigma_{SS}^{-1} \\ 0\end{pmatrix}
  \end{pmatrix},
\end{align*}
while
\[
\text{var}(W_{2i}) = \text{E}(W_{2i} W_{2i}^\top) \\
=
\text{E}
\begin{pmatrix}
  \Sigma_{SS}^{-1} S_i^\top \epsilon_{2i }\epsilon_{2i }^\top S_i  \Sigma_{SS}^{-1} 
  &  \Sigma_{SS}^{-1} S_i^\top \epsilon_{2i }\epsilon_{2i }^\top S_i  \Sigma_{SS}^{-1}  \\
  \Sigma_{SS}^{-1} S_i^\top \epsilon_{2i }\epsilon_{2i }^\top S_i  \Sigma_{SS}^{-1} 
  &   \Sigma_{SS}^{-1} S_i^\top \epsilon_{2i }\epsilon_{2i }^\top S_i  \Sigma_{SS}^{-1} 
\end{pmatrix} 
= 
\begin{pmatrix}
  \sigma_2^2 \Sigma_{SS}^{-1} 
  & \sigma_2^2 \Sigma_{SS}^{-1}\\
  \sigma_2^2 \Sigma_{SS}^{-1}
  & \sigma_2^2 \Sigma_{SS}^{-1}
\end{pmatrix}
\]
and $\text{cov}(W_{1i}, W_{2i}) = \text{E}(W_{1i} W_{2i}^\top) = 0$. Therefore $\text{var}(W_i)$ is given by
\begin{align*}
\Sigma_W
=\,&
\begin{pmatrix}
   \sigma_1^2 \Sigma_{SS}^{-1} 
& \sigma_1^2 \Sigma_{SS}^{-1} \Sigma_{SX} \Omega_{I}^\top \begin{pmatrix}\Sigma_{SS}^{-1} \\ 0\end{pmatrix} \\
   \sigma_1^2  (\Sigma_{SS}^{-1} \quad 0) \Omega_{I} \Sigma_{XS} \Sigma_{SS}^{-1} \quad\quad
& \sigma_1^2 (\Sigma_{SS}^{-1} \quad 0) \Omega_{I} \Sigma_{XX} \Omega_{I}^\top \begin{pmatrix}\Sigma_{SS}^{-1} \\ 0\end{pmatrix}
  \end{pmatrix} \,+\\
&
\begin{pmatrix}
   \sigma_2^2 \Sigma_{SS}^{-1} 
& \sigma_2^2 \Sigma_{SS}^{-1}\\
   \sigma_2^2 \Sigma_{SS}^{-1}
& \sigma_2^2 \Sigma_{SS}^{-1}
\end{pmatrix}.
\end{align*}

Since $W = n^{-1/2} \sum_{i=1}^{n} W_i$, by the central limit theorem $W \rightarrow N (0, \Sigma_W)$ as $n \rightarrow \infty$, so $(1, -1) W \rightarrow N (0, \Sigma_a)$ where
\begin{align*}
  \Sigma_a
  =\,&
  (1, -1) \Sigma_W \begin{pmatrix}1 \\ -1 \end{pmatrix} \\
  =\,&
  \sigma_1^2 \left\{ \Sigma_{SS}^{-1}  + (\Sigma_{SS}^{-1} \quad 0) \Omega_{I} \Sigma_{XX} \Omega_{I}^\top \begin{pmatrix}\Sigma_{SS}^{-1} \\ 0\end{pmatrix}
    - 2 \Sigma_{SS}^{-1} \Sigma_{SX} \Omega_{I}^\top \begin{pmatrix}\Sigma_{SS}^{-1} \\ 0\end{pmatrix} \right\}.
\end{align*}
Since 
\[
\Omega_{I}^\top = \Sigma_{XX}^{-1} D^\top = \Sigma_{XX}^{-1} \begin{pmatrix} \Sigma_{GS} & 0 \\ 0 & \Sigma_{SS}\end{pmatrix}
\]
and 
\begin{eqnarray*}
\Sigma_{XX}^{-1}=
\begin{pmatrix}
\Sigma_{GG} &\Sigma_{GS}\\
\Sigma_{SG} & \Sigma_{SS}
\end{pmatrix}^{-1}
=
\begin{pmatrix}
J^{-1} & -J^{-1}\Sigma_{GS}\Sigma_{SS}^{-1}\\
-\Sigma_{SS}^{-1}\Sigma_{SG}J^{-1} \quad & \Sigma_{SS}^{-1} + \Sigma_{SS}^{-1}\Sigma_{SG}J^{-1}\Sigma_{GS}\Sigma_{SS}^{-1}
\end{pmatrix},
\end{eqnarray*}
where $J=\Sigma_{GG}-\Sigma_{GS}\Sigma_{SS}^{-1}\Sigma_{SG}$, we have 
\begin{eqnarray*}
\Omega_{I}^\top \begin{pmatrix}\Sigma_{SS}^{-1} \\ 0 \end{pmatrix} 
&=&
\begin{pmatrix}
J^{-1} & -J^{-1}\Sigma_{GS}\Sigma_{SS}^{-1}\\
-\Sigma_{SS}^{-1}\Sigma_{SG}J^{-1} \quad & \Sigma_{SS}^{-1} + \Sigma_{SS}^{-1}\Sigma_{SG}J^{-1}\Sigma_{GS}\Sigma_{SS}^{-1}
\end{pmatrix}
 \begin{pmatrix} \Sigma_{GS} & \\ & \Sigma_{SS}\end{pmatrix} \begin{pmatrix}\Sigma_{SS}^{-1} \\ 0 \end{pmatrix} \\
&=&
\begin{pmatrix} 
J^{-1} \Sigma_{GS} \Sigma_{SS}^{-1} \\
- \Sigma_{SS}^{-1} \Sigma_{SG} J^{-1} \Sigma_{GS} \Sigma_{SS}^{-1} 
\end{pmatrix}
\end{eqnarray*}
Then we have 
\[
\Sigma_{SX} \Omega_{I}^\top \begin{pmatrix}\Sigma_{SS}^{-1} \\ 0 \end{pmatrix} 
= \begin{pmatrix} \Sigma_{SG} & \Sigma_{SS}\end{pmatrix} 
\begin{pmatrix} 
J^{-1} \Sigma_{GS} \Sigma_{SS}^{-1} \\
- \Sigma_{SS}^{-1} \Sigma_{SG} J^{-1} \Sigma_{GS} \Sigma_{SS}^{-1} 
\end{pmatrix}
= 0.
\]
Similarly, it can be shown that
\begin{eqnarray*}
  &&
(\Sigma_{SS}^{-1} \quad 0) \Omega_{I} \Sigma_{XX} \Omega_{I}^\top \begin{pmatrix}\Sigma_{SS}^{-1} \\ 0 \end{pmatrix}\\
&=&
( \Sigma_{SS}^{-1} \Sigma_{SG}  \quad 0) \Sigma_{XX}^{-1} \begin{pmatrix}\Sigma_{GS}\Sigma_{SS}^{-1} \\ 0\end{pmatrix} \\
&=&
( \Sigma_{SS}^{-1} \Sigma_{SG}  \quad 0) 
\begin{pmatrix}
J^{-1} & -J^{-1}\Sigma_{GS}\Sigma_{SS}^{-1}\\
-\Sigma_{SS}^{-1}\Sigma_{SG}J^{-1}  & \Sigma_{SS}^{-1} + \Sigma_{SS}^{-1}\Sigma_{SG}J^{-1}\Sigma_{GS}\Sigma_{SS}^{-1}
\end{pmatrix}
\begin{pmatrix}\Sigma_{GS}\Sigma_{SS}^{-1} \\ 0\end{pmatrix}  \\
&=&
\Sigma_{SS}^{-1} \Sigma_{SG} J^{-1}\Sigma_{GS}\Sigma_{SS}^{-1} \\
&=&
\Sigma_{SS}^{-1} \Sigma_{SG} 
( \Sigma_{GG}-\Sigma_{GS}\Sigma_{SS}^{-1}\Sigma_{SG} )^{-1}
\Sigma_{GS}\Sigma_{SS}^{-1}.
\end{eqnarray*} 
Therefore $\tilde\alpha_1$ and $\hat a$ are asymptotically equivalent.

\subsection{\label{compare_iv}Comparing with instrumental variable problems}

On its surface, our mediation setting is similar to instrumental variable models. Our model \eqref{eq:model}
\[
Y_i = G_i^\top \alpha_0 + S_i^\top \alpha_1 + \epsilon_{1i},
\quad
G_i = \gamma S_i + E_{i},
\]
is analogous to an instrumental variable model where the $S_i$ are analogous to the instrumental variables. However, we study a different problem than the instrumental variable literature. The latter is interested in estimating $\alpha_0$ when ordinary regression fails because the variable of interest $G_i$ is correlated with the error term $\epsilon_{1i}$, using instrumental variables $S_i$. On the contrary, we are interested in estimating the indirect effect of $S_i$ on $Y_i$, and we do not have an endogeneity issue. Ordinary least squares would work fine in our complete mediation setting.

In our Theorems \ref{thm:incomplete} and \ref{thm:complete}, we require that $\Sigma_{SG}$ and $\alpha_0$ are both nonzero, and at the end of Section \ref{subsec:incomplete} in the main text we discuss why our test can be conservative when the exposure, the mediator, and the outcome are only weakly associated. This seems similar to the weak instrumental variable problem, where $\gamma$, related to our $\Sigma_{SG}$, is close to zero. If the instrumental variable is weak, then the limiting distribution of the proposed estimator will be distorted, leading to biased estimatoin and invalid inference. However, the case where both $\Sigma_{SG}$ and $\alpha_0$ are close to zero in our mediation problem is actually not as bad as the weak IV problem, because although conservative, our proposed method is remains valid.

\section{Optimal variance}
\label{optimal_variance}

Under certain conditions, the asymptotic variance of our proposed estimator for the indirect effect under complete mediation achieves the asymptotic Cram\'{e}r-Rao lower bound. Under complete mediation, model \eqref{eq:model} in the main text becomes 
\begin{equation}
  \label{eq:model2}
  Y_i = G_i^\top \alpha_0 + \epsilon_{1i},
  \quad
  G_i = \gamma S_i + E_{i}.
\end{equation}
We assume here that $p$ is fixed and that $\epsilon_1$ and $E$ are both normally distributed.

Let $\theta = (\gamma_0^\top, \alpha_0^\top)^\top$, so that $\theta$ is a $2p \times 1$ vector containing all parameters in model \eqref{eq:model2}. Then we can view $\beta_0$ as a function of $\theta$. To see this, $\gamma_0 = V \theta$ where
  \[
  V = 
  \begin{pmatrix}
    1 & 0 & \dots & 0  & 0 & 0 & \dots & 0 \\
    0 & 1 & \dots & 0  & 0 & 0 & \dots & 0\\
    &  & \ddots & & &  & \ddots \\
    0  & 0 & \dots & 1 & 0 & 0  & \dots & 0
  \end{pmatrix}
  =
  \begin{pmatrix}
    I_p & O_p
  \end{pmatrix},
  \]
  where $I_p$ is a $p \times p$ identity matrix and $O_p$ is a $p \times p$ matrix of zeros. Similarly, $\alpha_0 = U \theta$, where $U = (O_P \, I_p)$. Thus we have $g(\theta) = \theta^\top V^\top U \theta = \beta_0$.

\begin{theorem}
\label{sup:thm:optimal_var}
Suppose model \eqref{eq:model2} holds, and assume that $\epsilon_{1i}$ and $E_i$ are both normally distributed. Let $T_n$ be any estimator sequence of $g(\theta)$ such that for any fixed $\theta$ and $h \in \mathbb{R}^{k}$ ,
\begin{equation}
\label{condition:limit}
\sqrt{n} (T_n - g(\theta + h / \sqrt{n}))  \rightarrow L_{\theta}, \text{ every } h,
\end{equation}
where $L_\theta$ has mean zero and covariance matrix $\Sigma_\theta$. Our estimator $\tilde{b}$ \eqref{eq:complete} has the smallest variance among all estimators that satisfy \eqref{condition:limit}.
\end{theorem}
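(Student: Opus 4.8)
The plan is to recognize this as a Hájek--Le Cam convolution-theorem statement: condition~\eqref{condition:limit} is precisely the requirement that $T_n$ be a \emph{regular} estimator sequence for $g(\theta)$ at $\theta$, so the optimality claim will follow once I show that $\tilde b$ is itself regular and that its asymptotic variance equals the efficiency bound $\dot g(\theta)^\top I(\theta)^{-1} \dot g(\theta)$ dictated by the convolution theorem \citep{van2000asymptotic}. Since $p$ is fixed and $\epsilon_{1i}, E_i$ are Gaussian, model~\eqref{eq:model2} is a smooth finite-dimensional model, differentiable in quadratic mean with nonsingular $I(\theta)$, and $g(\theta) = \theta^\top V^\top U\theta$ is differentiable with gradient $\dot g(\theta) = (V^\top U + U^\top V)\theta = (\alpha_0^\top, \gamma_0^\top)^\top$; thus the hypotheses of the convolution theorem will hold and it will give $\Sigma_\theta \geq \dot g(\theta)^\top I(\theta)^{-1}\dot g(\theta)$ for every regular $T_n$.

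First I would compute the Fisher information. Factoring the likelihood of $(Y_i, G_i)$ given $S_i$ as $p(Y_i\mid G_i)\, p(G_i\mid S_i)$, the parameter $\alpha_0$ enters only the first factor and $\gamma_0$ only the second, with scores $\sigma_1^{-2}\epsilon_{1i} G_i$ and $S_i\Sigma_E^{-1}E_i$. Because $\epsilon_{1i}$ is independent of $(E_i, S_i)$, the cross-information vanishes and I expect the block-diagonal form
\[
I(\theta) =
\begin{pmatrix}
\Sigma_{SS}\,\Sigma_E^{-1} & 0 \\
0 & \sigma_1^{-2}\,\Sigma_{GG}
\end{pmatrix},
\qquad
\Sigma_E = \Sigma_{GG} - \Sigma_{SS}\gamma_0\gamma_0^\top.
\]
If $\sigma_1^2$ and $\Sigma_E$ are treated as unknown nuisance parameters, I would note that their scores are quadratic in $\epsilon_{1i}$ and $E_i$ and hence orthogonal to the scores for $(\gamma_0, \alpha_0)$, so the efficient information, and therefore the bound, is unchanged. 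Inverting the two blocks then yields the efficiency bound
\[
\dot g(\theta)^\top I(\theta)^{-1}\dot g(\theta)
= \Sigma_{SS}^{-1}\,\alpha_0^\top \Sigma_E \alpha_0 + \sigma_1^2\,\gamma_0^\top \Sigma_{GG}^{-1}\gamma_0.
\]

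Next I would match this bound to the asymptotic variance of $\tilde b$ given in Theorem~\ref{thm:complete}. Using $\Sigma_{SG} = \Sigma_{SS}\gamma_0^\top$ one gets $\Sigma_{SS}^{-1}\Sigma_{SG} = \gamma_0^\top$, so its first term $\sigma_1^2\Sigma_{SS}^{-1}\Sigma_{SG}\Sigma_{GG}^{-1}\Sigma_{GS}\Sigma_{SS}^{-1}$ collapses to $\sigma_1^2\gamma_0^\top\Sigma_{GG}^{-1}\gamma_0$, and since $\sigma_2^2 = \text{var}(\epsilon_{2i}) = \alpha_0^\top\Sigma_E\alpha_0$ its second term $\sigma_2^2\Sigma_{SS}^{-1}$ becomes $\Sigma_{SS}^{-1}\alpha_0^\top\Sigma_E\alpha_0$. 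These are exactly the two terms of the efficiency bound, so $\tilde b$ attains it; combined with the convolution-theorem lower bound this gives the conclusion.

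I expect the main obstacle to be conceptual rather than algebraic: carefully verifying that condition~\eqref{condition:limit} is exactly the regularity hypothesis required by the convolution theorem (in particular that the convergence is understood under the contiguous local perturbations $\theta + h/\sqrt n$), checking the differentiability-in-quadratic-mean conditions, and arguing cleanly that the nuisance parameters $\sigma_1^2$ and $\Sigma_E$ do not degrade the bound through information loss. Once these regularity points are settled, the Fisher-information computation and the matching of the two expressions are routine.
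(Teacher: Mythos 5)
Your proposal is correct and follows essentially the same route as the paper's proof: both invoke the convolution theorem (Theorem~8.5 of \citet{van2000asymptotic}) to obtain the lower bound $\dot g(\theta)^\top I(\theta)^{-1}\dot g(\theta)$, compute the same block-diagonal Fisher information $\mathrm{diag}(\Sigma_{SS}\Sigma_E^{-1},\,\sigma_1^{-2}\Sigma_{GG})$ from the factorization $f(Y\mid G,S)\,f(G\mid S)$, and match the bound to the asymptotic variance in Theorem~\ref{thm:complete} via $\Sigma_{SG}=\Sigma_{SS}\gamma_0^\top$ and $\sigma_2^2=\alpha_0^\top\Sigma_E\alpha_0$. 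Your additional checks (differentiability in quadratic mean, regularity of $\tilde b$, and the score-orthogonality argument showing that the unknown nuisance parameters $\sigma_1^2$ and $\Sigma_E$ do not degrade the bound) address points the paper passes over silently, since it implicitly treats these quantities as known, so they strengthen rather than alter the argument.
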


\begin{proof}
  By Theorem 8.5 in \citep{van2000asymptotic}, $\Sigma_\theta - g'(\theta) I(\theta)^{-1} g'(\theta)$ is non-negative definite, where $I(\theta)$ is the Fisher information matrix at point $\theta$. It remains to showthat $\tilde{b}$ has variance $g'(\theta) I(\theta)^{-1} g'(\theta)$.
  
Based on model \eqref{eq:model2} in the Supplementary Materials, the joint density function of a sample of $n$ observations is
\begin{eqnarray*}
f(Y, G, S) &\propto& f(Y \mid G, S) f(G \mid S) \\
&\propto&
\exp \left\{ -\frac{1}{2 \sigma_1^2} \sum_{i=1}^{n} (Y_i - G_i \alpha_0)^2 \right\} \exp \left\{ -\frac{1}{2} \sum_{i=1}^{n} (G_i - S_i \gamma_0^\top) \Sigma_E^{-1} (G_i - S_i \gamma_0^\top)^\top \right\}.
\end{eqnarray*}
Therefore the log-likelihood function equals
$$
l(\theta ; Y, G, S) \propto - \frac{1}{2 \sigma_1^2} \sum_{i=1}^{n} G_i U \theta \theta^\top U^\top G_i^\top 
- \frac{1}{2} \sum_{i=1}^{n} S_i \theta^\top V^\top \Sigma_E^{-1} V \theta S_i
+ h(\theta)
$$
where $h(\theta)$ only involves linear functions of $\theta$.

Then we have
\begin{eqnarray*}
\frac{\partial ^2 l(\theta ; Y, G, S)}{\partial \theta^2} 
&=&
- \sum_{i=1}^{n} \left( \frac{1}{\sigma_1^2} U^\top G_i^\top G_i U + S_i^2 V^\top \Sigma_E^{-1} V \right),
\end{eqnarray*}
so the expected Fisher information matrix $I_n(\theta)$ of $\theta$ is 
\begin{eqnarray*}
  - \text{E} \frac{\partial ^2 l(\theta ; Y, G, S)}{\partial \theta^2}
  =
  \frac{n}{\sigma_1^2} U^\top \text{E} ( G_i^\top G_i) U + n \text{E}(S_i^\top S_i)  V^\top \Sigma_E^{-1} V
  =
  n
  \begin{pmatrix}
    \Sigma_{SS} \Sigma_E^{-1} & 0 \\
    0 & \sigma_1^{-2} \Sigma_{GG}.
  \end{pmatrix}
\end{eqnarray*}

Then we have
\begin{eqnarray*}
g'(\theta) I_n(\theta)^{-1} g'(\theta)
&=&
\frac{1}{n} \theta^\top (V^\top U + U^\top V) 
\begin{pmatrix}
\Sigma_{SS}^{-1} \Sigma_E & 0 \\
0 & \sigma_1^2 \Sigma_{GG}^{-1}
\end{pmatrix}
(V^\top U + U^\top V)  \theta \\
&=&
\frac{1}{n} (\alpha_0^\top, \gamma_0^\top) 
\begin{pmatrix}
\Sigma_{SS}^{-1} \Sigma_E^{-1} & 0 \\
0 & \sigma_1^2 \Sigma_{GG}^{-1}
\end{pmatrix}
\begin{pmatrix}
\alpha_0 \\
\gamma_0
\end{pmatrix} \\
&=&
\frac{1}{n}  ( \Sigma_{SS}^{-1} \alpha_0^\top \Sigma_E \alpha_0 + \sigma_1^2 \gamma_0 \Sigma_{GG}^{-1} \gamma_0^\top )
\end{eqnarray*}
The residual error $\epsilon_{2i}$ equals $E_i^\top \alpha_0$ in the notation of model~\eqref{eq:model2} from these Supplementary Materials, so $\sigma_2^2 = \alpha_0^\top \Sigma_E \alpha_0$ and $\Sigma_{SG} = \text{cov}(S_i, G_i) = \Sigma_{SS} \gamma_0^\top$. Therefore 
$
g'(\theta) I_n(\theta)^{-1} g'(\theta) = 
n^{-1} (\sigma_2^2 \Sigma_{SS}^{-1} + \sigma_1^2 \Sigma_{SS}^{-1} \Sigma_{SG} \Sigma_{GG}^{-1} \Sigma_{GS} \Sigma_{SS}^{-1}),
$ 
which is the variance of $\tilde{b}$ by Theorem \ref{thm:complete} in the main text.
\end{proof}

\section{Additional simulation results}

\subsection{\label{sec:add_incom}Incomplete mediation with $\alpha_1 = 0.5$}

\begin{figure}[h!]
  \centering
  \includegraphics[width = \textwidth]{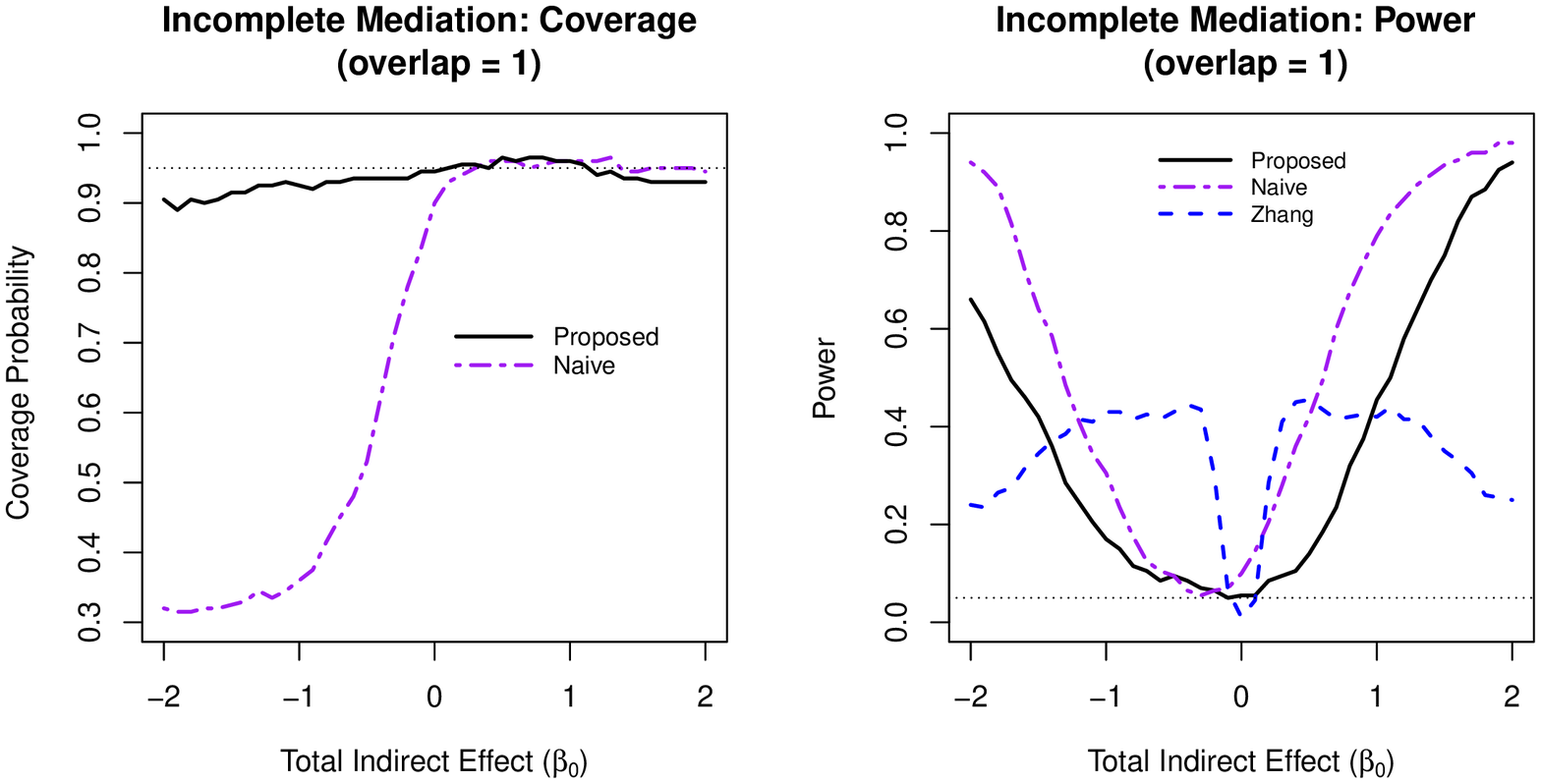}\par
  \includegraphics[width = \textwidth]{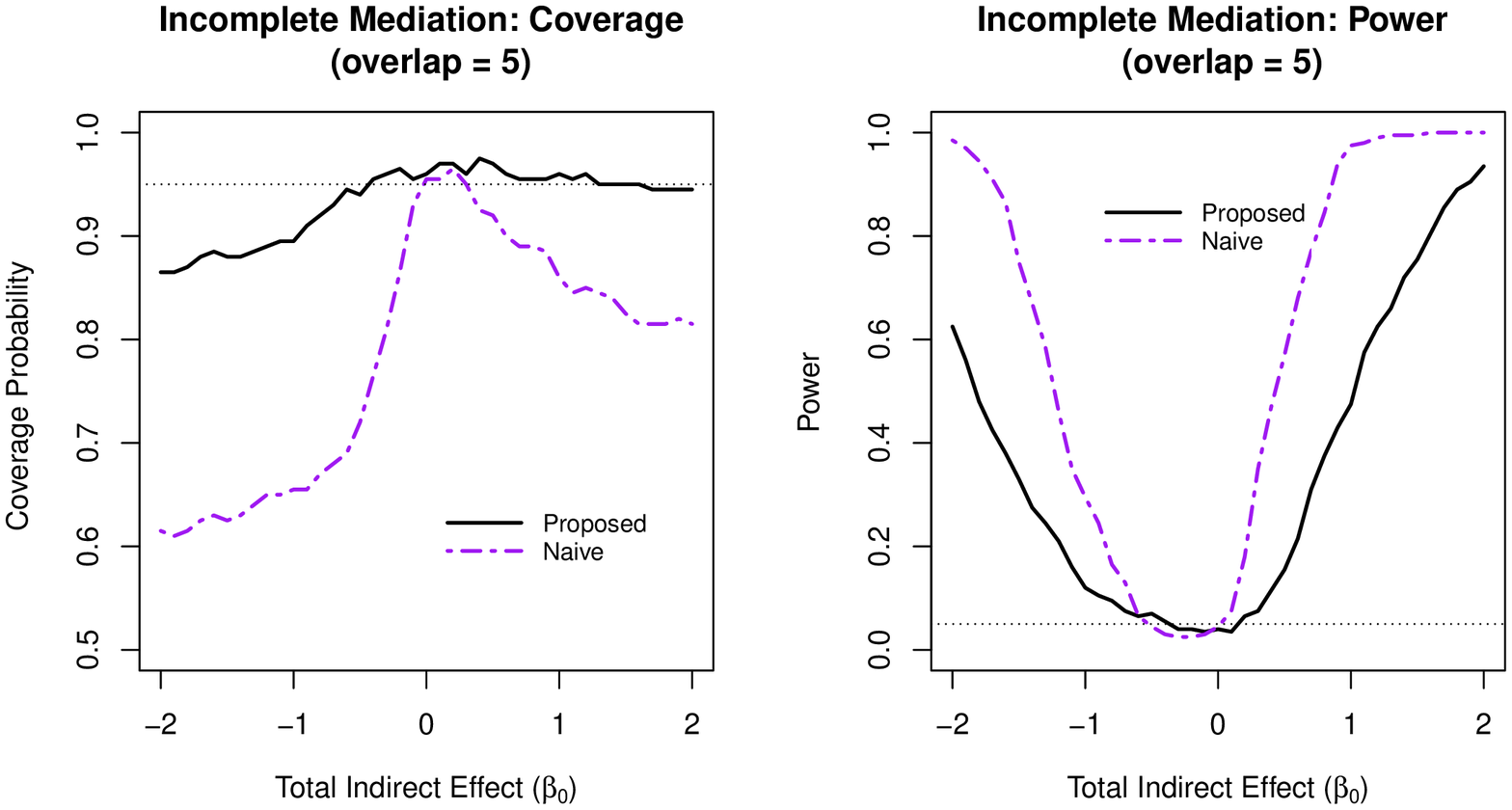}\par
\caption{\label{fig:incomplete_2} Average coverage probabilities of 95\% confidence intervals (left panels) and average power curves at significance level $\alpha = 0.05$ (right panels) for estimating and testing the indirect effect under incomplete mediation, over 200 replications. The direct effect was $\alpha_1 = 0.5$. The number of true mediators was 1 in the upper panels and 5 in the lower panels. Proposed: $\hat{b}$ from~\eqref{eq:incomplete}; Naive: the naive method discussed in Section \ref{sec:methods_compared}; Zhang: method of \citet{zhang2016estimating}.}
\end{figure}

Here we present simulation results for incomplete mediation under the same settings as Section~\ref{sec:incomplete_sims} of the main text, but with the direct effect $\alpha_1 = 0.5$. Figure~\ref{fig:incomplete_2} reports the coverage probabilities of the 95\% confidence intervals as well as the average power curves at nominal 0.05 significance level, over 200 replications. The results show that the coverage of our method deteriorates somewhat compared to when $\alpha_1 = 0.1$, while the coverage of the naive method dropped dramatically. The power curves show that all tests were able to maintain the nominal significance level, though the naive test cannot be implemented in practice because it is difficult to estimate the standard error of the naive indirect effect estimator. The coverage probability and the power curves were not always symmetric because changing $\beta_0$ also changed the degree of multicollinearity between $G_i$ and $S_i$ in the regression model for $Y_i$ in model~\eqref{eq:model}. The drops in the coverage probabilities may be due to this increased collinearity between $G_i$ and $S_i$, and is worth further investigation in the future.

\subsection{\label{sec:add_com}Complete mediation with $p = 1000$}

Here we present simulation results for complete mediation under the same simulation settings as Section~\ref{sec:complete_sims} of the main text, but with $p=1000$ potential mediators. Figure~\ref{fig:complete_2} reports the coverage probabilities of the 95\% confidence intervals as well as the average power curves at nominal 0.05 significance level, over 200 replications. The power curves shows that all tests were able to maintain the nominal significance level, and our method had higher power than ordinary least squares for sufficiently large $\beta_0$, consistent with Proposition~\ref{p:var}. However, the coverage curves shows that the coverage of our method suffers for large $p$. This is not unexpected, as even in the standard linear regression setting, low coverage has been observed for debiased lasso estimators in high dimensions, for example in the simulations of \citet{van2014asymptotically}. We suspect that our coverage suffers for similar reasons, but leave a careful analysis of this problem for future work.

\begin{figure}[h!]
  \centering
  \includegraphics[width = \textwidth]{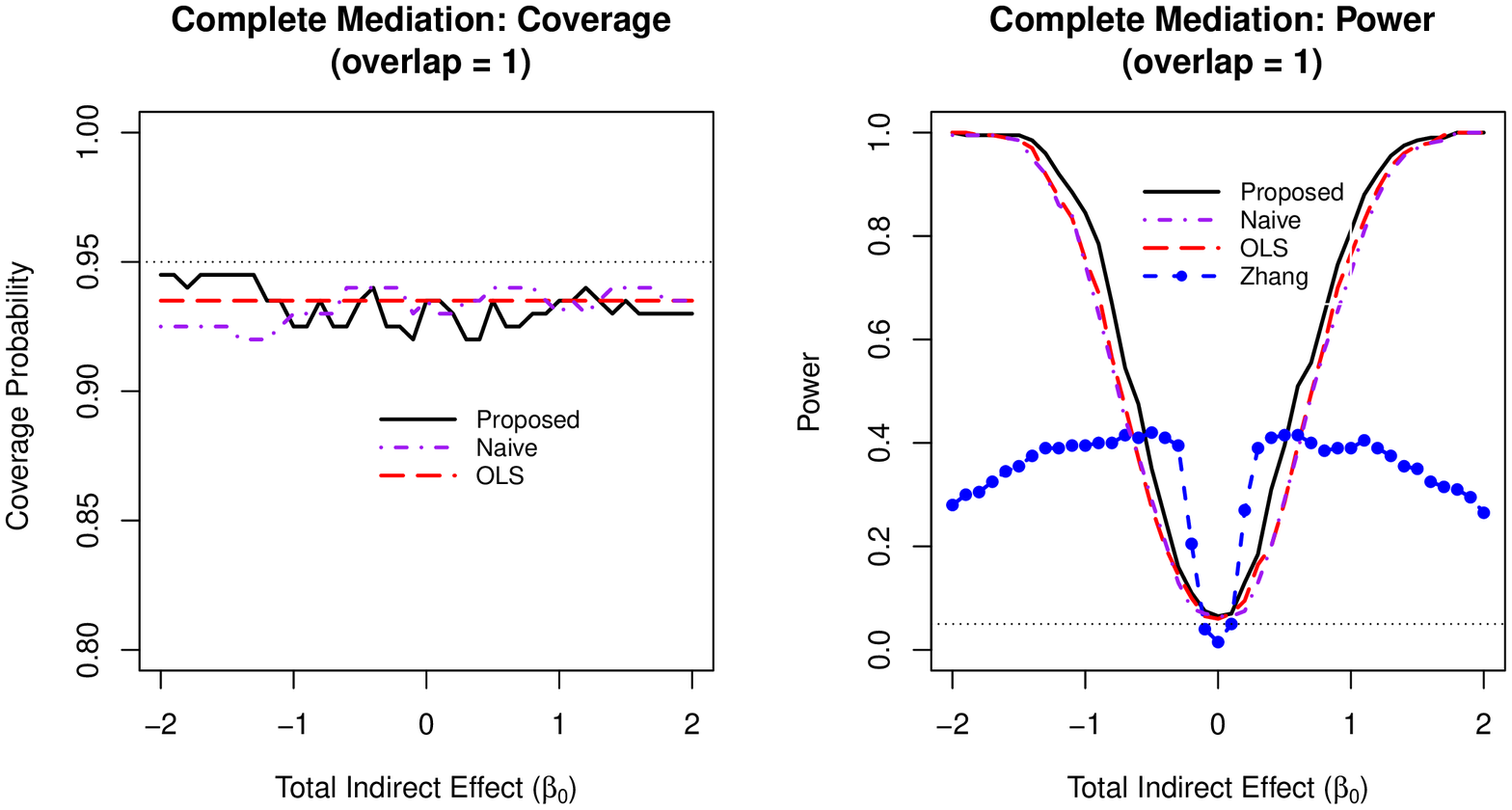}\par
  \includegraphics[width = \textwidth]{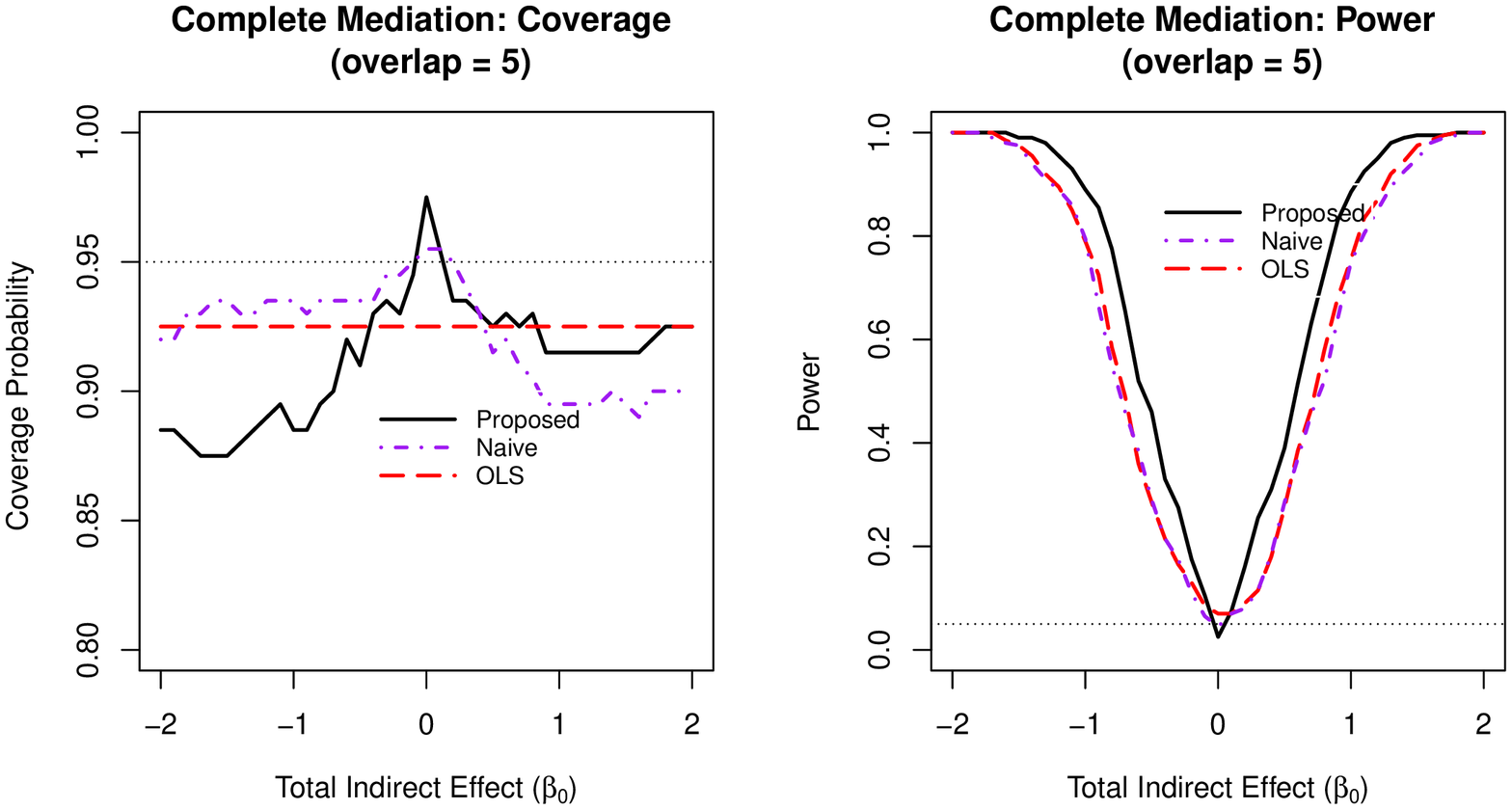}\par
\caption{\label{fig:complete_2}Average coverage probabilities of 95\% confidence intervals (left panels) and average power curves at significance level $\alpha = 0.05$ (right panels) for estimating and testing the indirect effect under complete mediation, over 200 replications. The number of true mediators was 1 in the upper panels and 5 in the lower panels, with 1000 potential mediators; Proposed: $\tilde{b}$ from~\eqref{eq:complete}; Naive: the naive method discussed in Section \ref{sec:methods_compared}; OLS: ordinary least squares estimate; Zhang: the method of \citet{zhang2016estimating}.}
\end{figure}

\subsection{\label{sec:length}Length of interval estimates}

Proposition~\ref{p:var} in the paper shows that our proposed estimator $\tilde{b}$ under complete mediation always has equal or lower asymptotic variance compared to the ordinary least squares estimator. Since both our estimator and ordinary least squares estimator are asymptotically normal and unbiased, this likely translates to smaller expected lengths of confidence intervals of our proposed estimator $\tilde{b}$ compared to the ordinary least squares under the same nominal level. Figure~\ref{fig:length} reports the average lengths of 95\% confidence intervals of our proposed $\tilde b$ and the ordinary least squares estimator $\tilde b_{OLS}$ over 200 replications, under the same complete mediation simulation settings considered in Section \ref{sec:complete_sims} in the main text. Indeed, our proposed method has shorter lengths of intervals than ordinary least squares in every case.

\begin{figure}[h!]
  \centering
  \includegraphics[width = \textwidth]{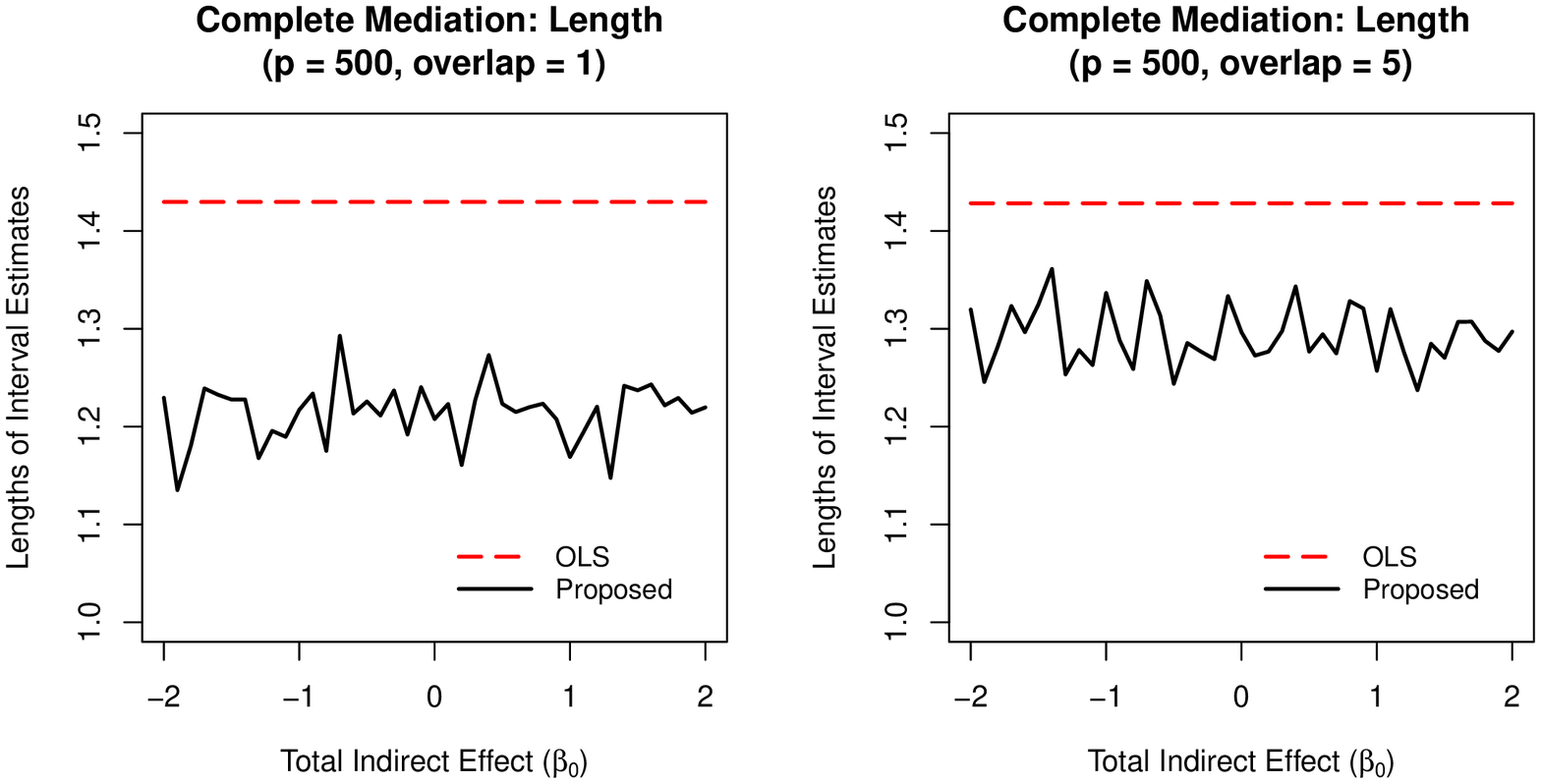}\par
  \includegraphics[width = \textwidth]{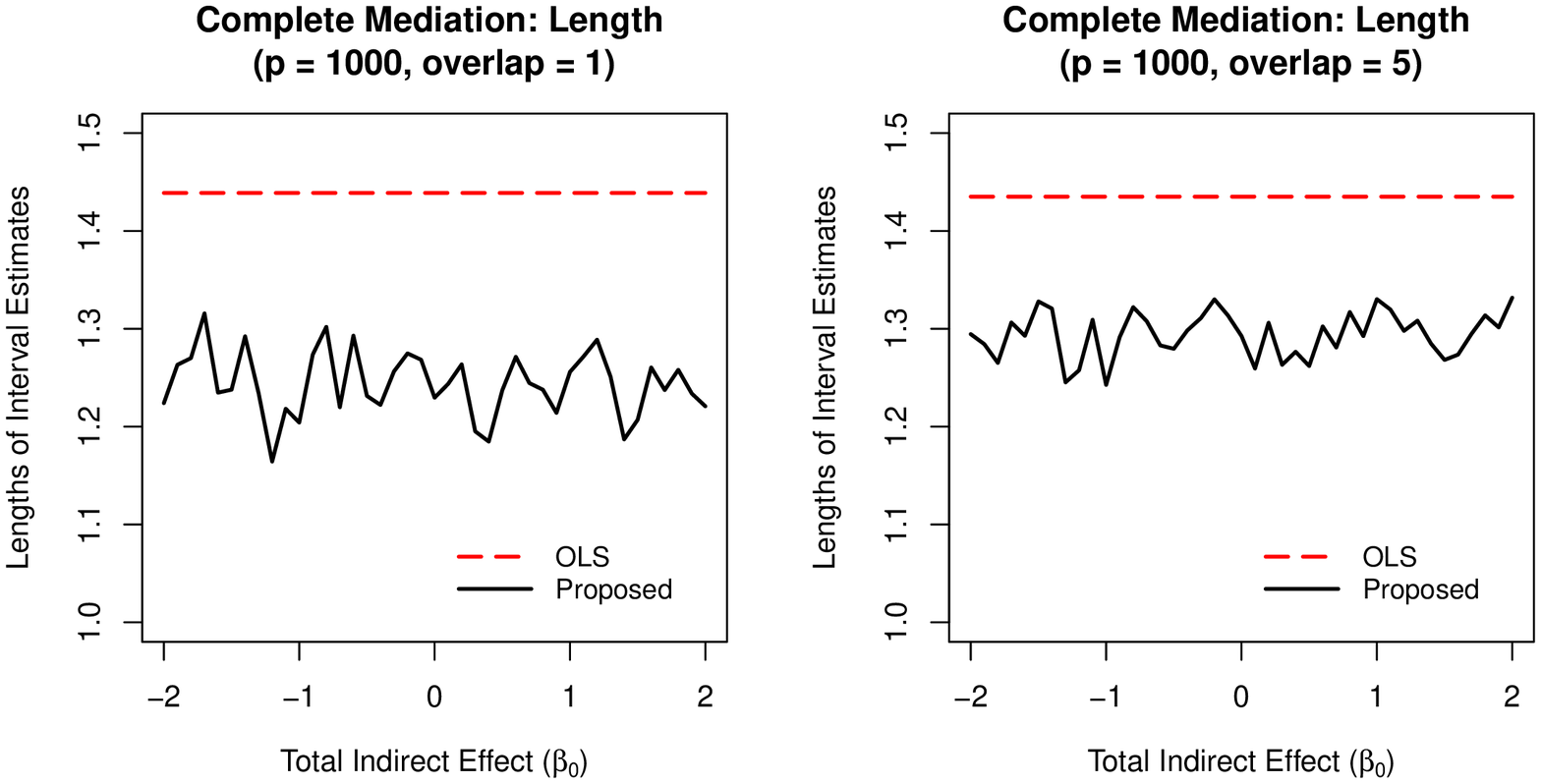}\par
\caption{\label{fig:length} Average lengths of 95\% confidence intervals for testing the indirect effect under complete mediation for estimating and testing the indirect effect under complete mediation, over 200 replications. Upper panel: 500 potential mediators with 1 and 5 true mediators, respectively; Lower panel: $1,000$ potential mediators with 1 and 5 true mediators, respectively. Proposed: $\tilde{b}$ from~\eqref{eq:complete}; OLS: ordinary least squares estimate.}
\end{figure}

\subsection{\label{risk}Risk}

In the paper we mainly focused on the power of our method compared to ordinary least squares, as well as the naive method and the \citet{zhang2016estimating} method, under the complete mediation setting. Here, we also compare the risks of those methods under squared error loss. The simulation setting we are using here are the same as those in Section~\ref{sec:complete_sims} in the main text. Figure~\ref{fig:risk} compares average risk curves of our $\tilde{b}$ and ordinary least squares, in our four simulation examples, over 200 replications. For most values of $\beta_0$ in most of the settings, our method has a smaller risk than ordinary least squares. The risk of our method increased with the number of true mediators, and exceeded the risk of ordinary least squares for certain $\beta_0$, but the difference was not large. The risk behavior in Example 4 of Figure \ref{fig:risk} is interesting and may be worth future study.

\begin{figure}[h!]
  \centering
  \includegraphics[width = \textwidth]{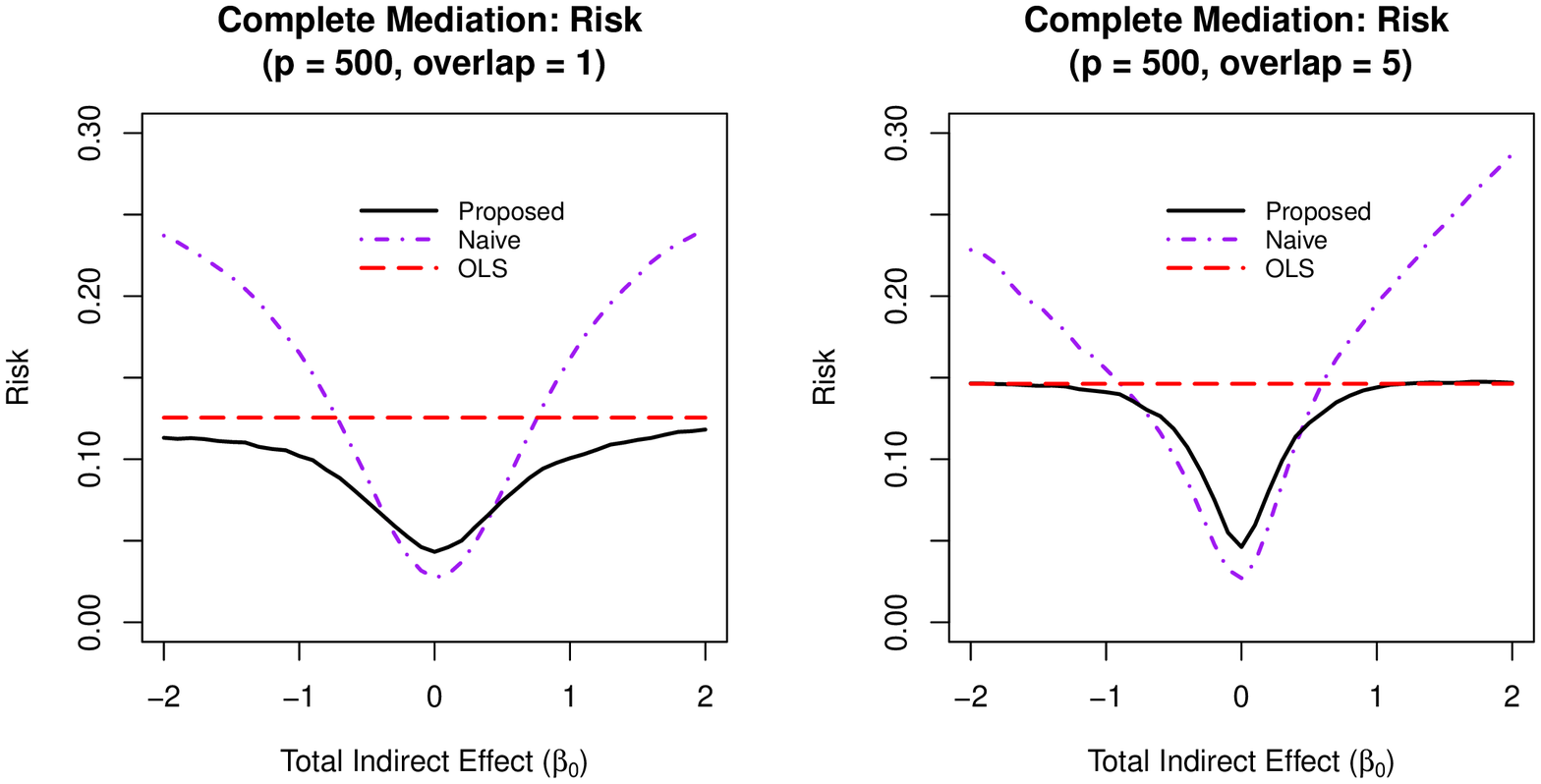}\par
  \includegraphics[width = \textwidth]{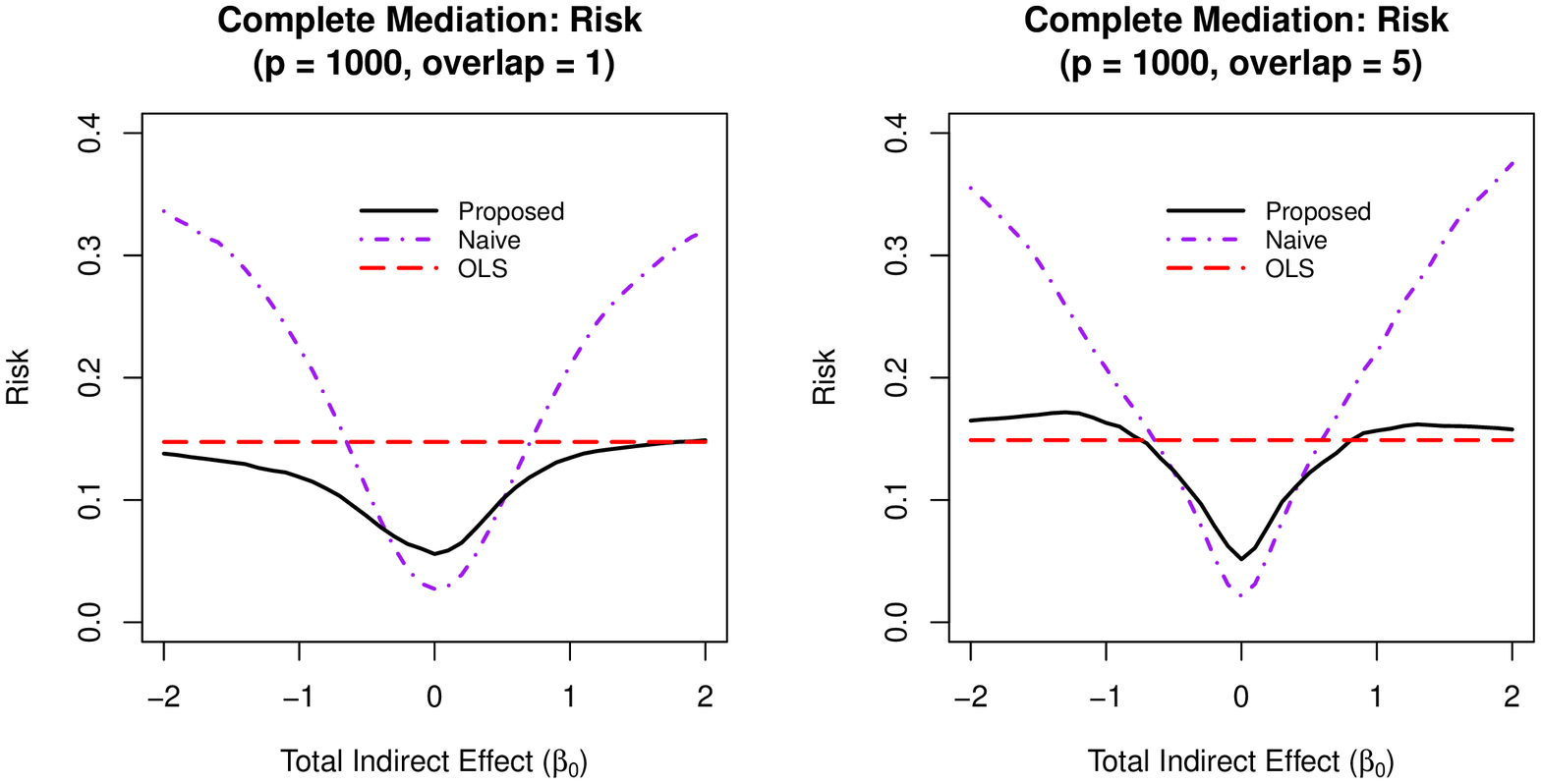}\par
\caption{\label{fig:risk}Average risk curves for testing the indirect effect under complete mediation at significance level $\alpha = 0.05$ for estimating and testing the indirect effect under complete mediation, , over 200 replications. Upper panel: 500 potential mediators with 1 and 5 true mediators, respectively; Lower panel: $1,000$ potential mediators with 1 and 5 true mediators, respectively. Proposed: $\tilde{b}$ from~\eqref{eq:complete}; OLS: ordinary least squares estimate.}
\end{figure}

In Figure \ref{fig:risk}, we did not compare to the risk of the procedure of \citet{zhang2016estimating} because the method does not provide a point estimate for the total indirect effect, and only estimates the separate indirect effects through each of the mediators. We therefore added these individual estimates to give an \textit{ad hoc} point estimate for $\beta_0$. Specifically, we add the estimates from mediators with significance $p$-values smaller than 0.05. Figure~\ref{fig:zhang_risk_ex1} compares the risk curves between the calculated estimates from \citet{zhang2016estimating} method, our $\tilde{b}$, the naive method and ordinary least squares, in simulations with 500 potential mediators and one true mediator, over 200 replications. We can see that the risk of the \citet{zhang2016estimating} method is much larger than the risk of all other methods. Another way to estimate the total indirect effect for the procedure of \citet{zhang2016estimating} is to add up the indirect effects from all selected mediators, and we found that the result is similar.

\begin{figure}
\label{fig:zhang_risk_ex1}
  \centering
  \includegraphics[width = 10cm]{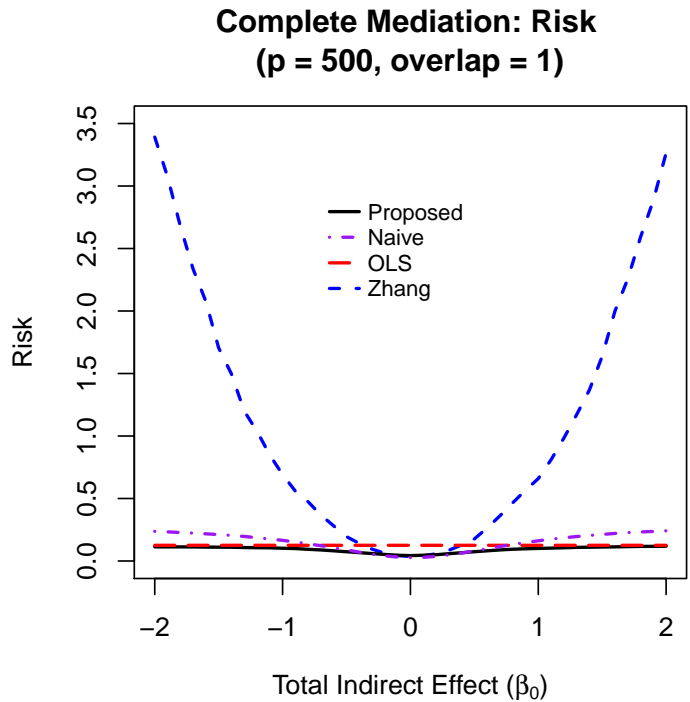}
\caption{Average risk curves for testing the indirect effect under complete mediation at significance level $\alpha = 0.05$ for estimating and testing the indirect effect under complete mediation, over 200 replications. Simultions contained 500 potential mediators and one true mediator. Proposed: $\tilde{b}$ from~\eqref{eq:complete}; Naive: the naive method discussed in Section \ref{sec:methods_compared}; OLS: ordinary least squares estimate; Zhang: the method of \citet{zhang2016estimating}.}
\end{figure}

\subsection{\label{sec:effectnoise}Effect of noise}

Because we have a two-stage linear mediation model, we have two sources of noise: $\epsilon_{1i}$ and $\epsilon_{2i}$, from \eqref{eq:model} in the main text. In our simulations we generated data following $G_i = c \gamma S_i + E_i$, where $E_i \sim N(0, \Sigma_E)$, so $\text{var}(\epsilon_{2i}) = \sigma_2^2 = \alpha_0^\top \Sigma_E \alpha_0$. In this section we study the effect of varying $\Sigma_E$ and $\text{var}(\epsilon_{1i}) = \sigma_1^2$ on the power of our proposed approach in the complete mediation setting. We used the same simulation settings as in Section~\ref{sec:complete_sims}.

\begin{figure}[h!]
 \centering
  \includegraphics[width = \textwidth]{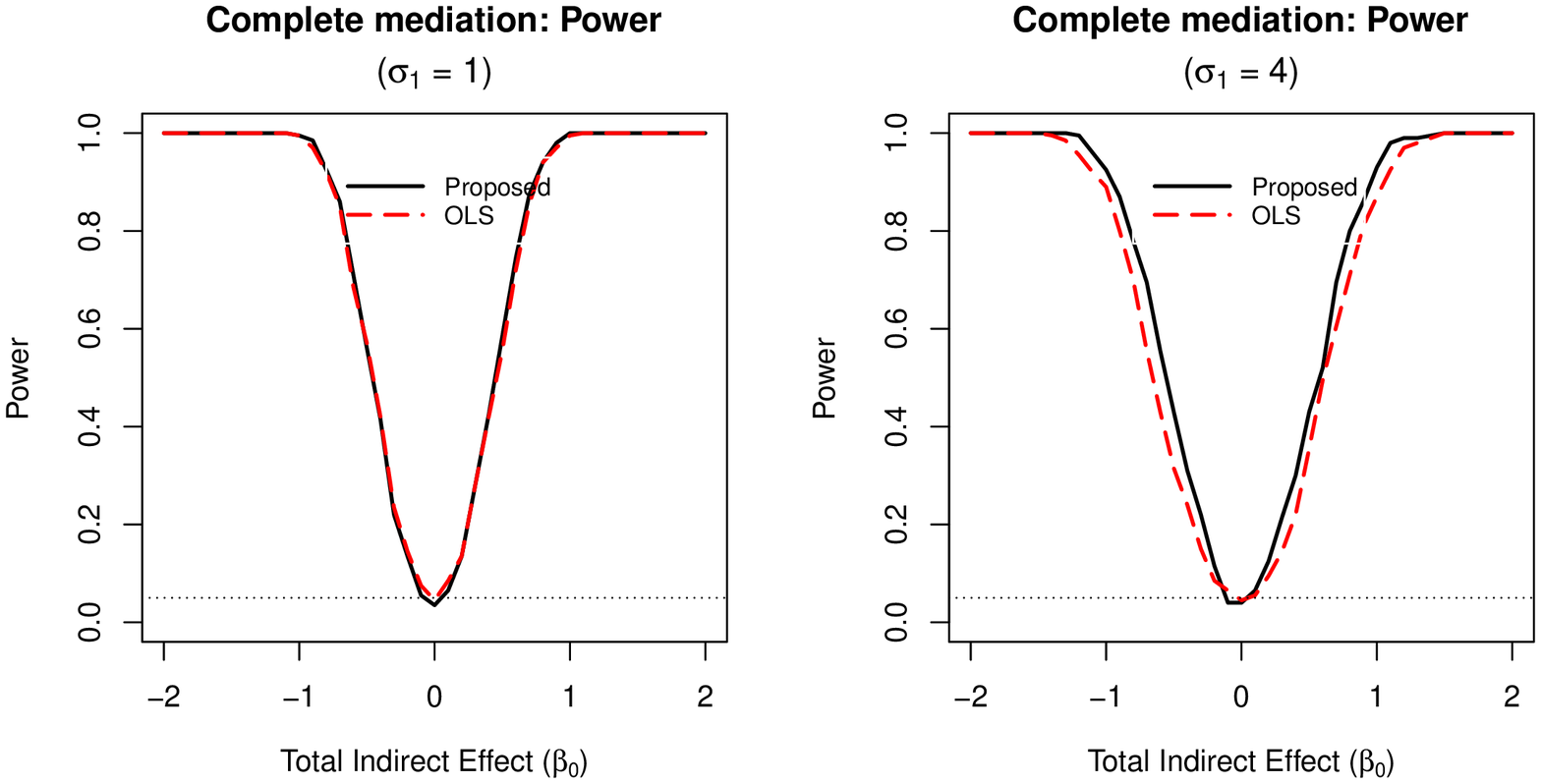}\par
  \includegraphics[width = \textwidth]{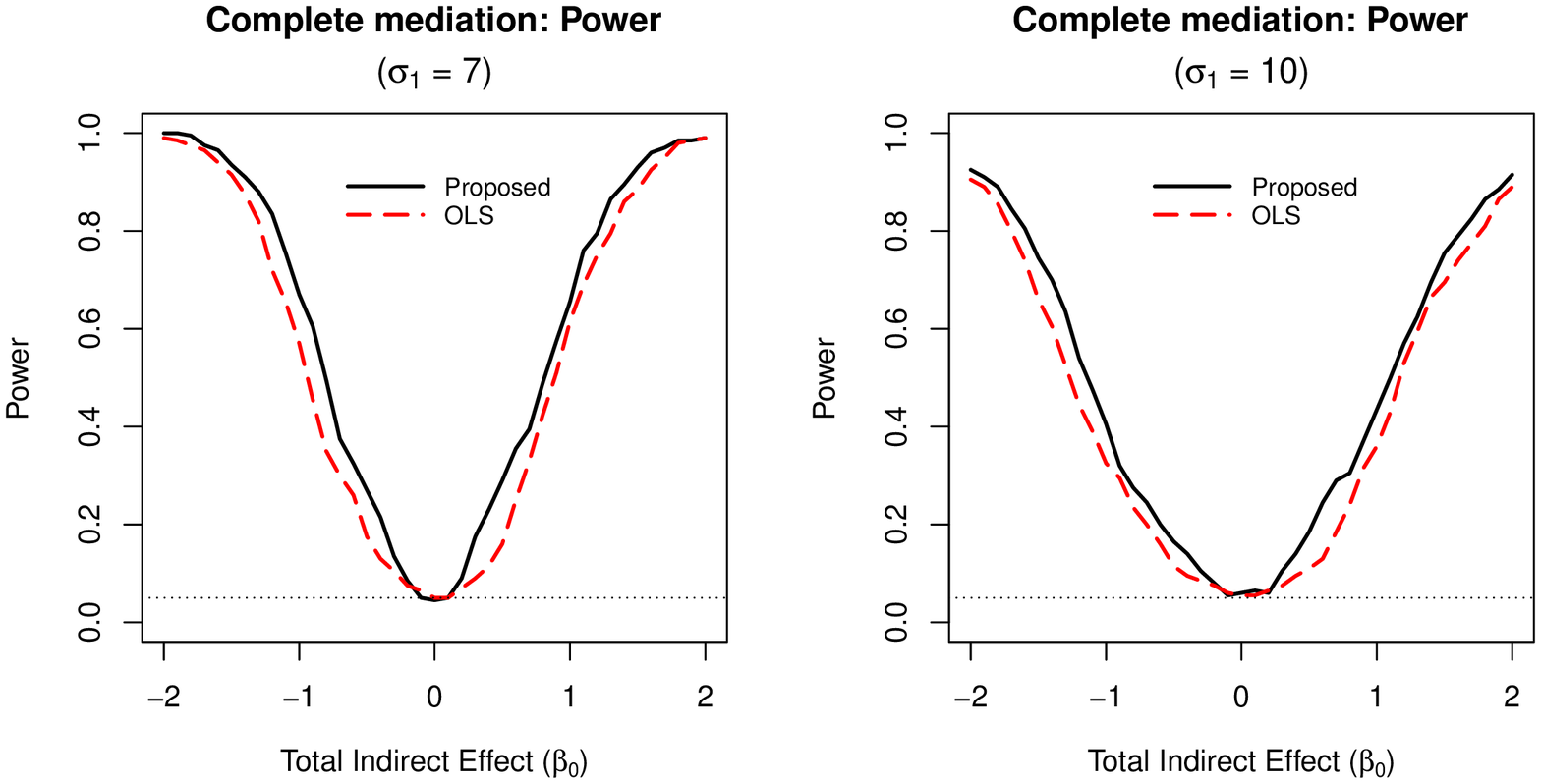}\par 
\caption{\label{fig:sig1}Average power curves for testing the indirect effect under complete mediation at significance level $\alpha = 0.05$ for estimating and testing the indirect effect under complete mediation, over 200 replications, for different values of $\sigma_{1}$. All examples had 500 potential mediators with 1 true mediator and the diagonal entries of $\Sigma_E$ were fixed at 1. Proposed: $\tilde{b}$ from~\eqref{eq:complete}; OLS: ordinary least squares estimate.}
\end{figure}

\begin{figure}[h!]
  \includegraphics[width = \textwidth]{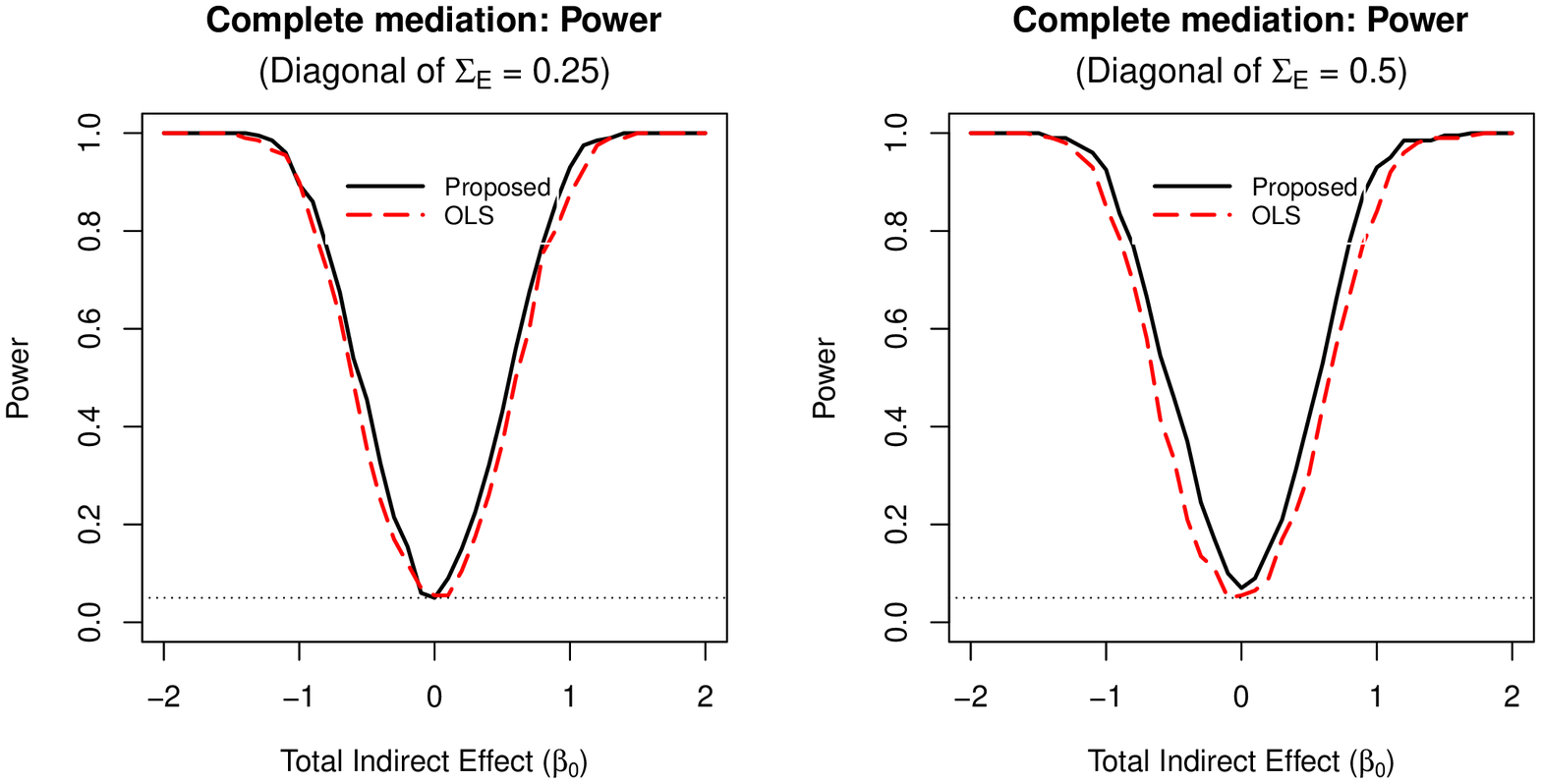}\par
  \includegraphics[width = \textwidth]{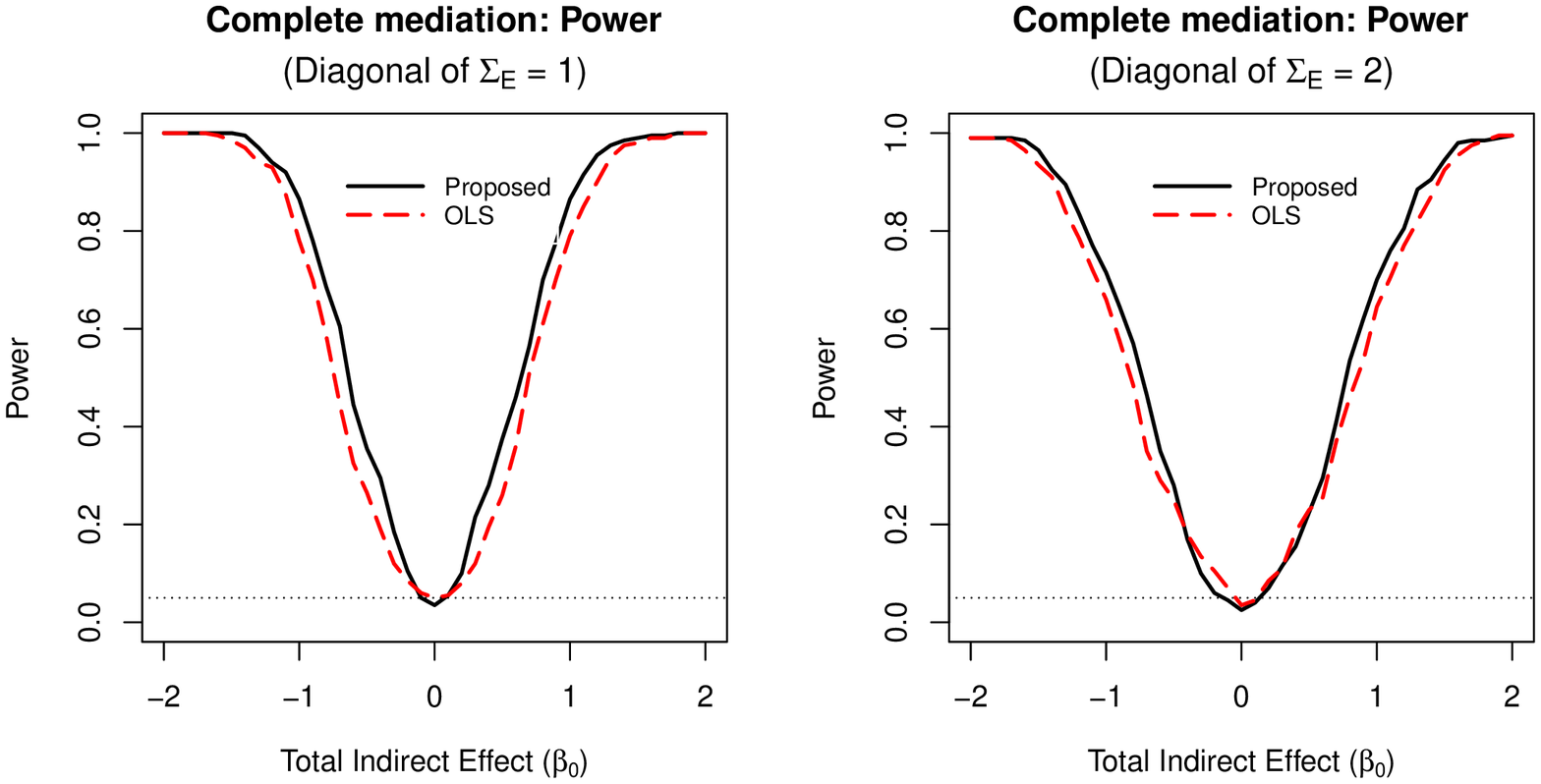}\par
\caption{\label{fig:sig2}Average power curves for testing the indirect effect under complete mediation at significance level $\alpha = 0.05$ for estimating and testing the indirect effect under complete mediation, over 200 replications, for different values of diagonal of $\Sigma_E$. All examples had 500 potential mediators with 1 true mediator and $\sigma_1 = 5$. Proposed: $\tilde{b}$ from~\eqref{eq:complete}; OLS: ordinary least squares estimate.}
\end{figure}

Figure~\ref{fig:sig1} reports power curves for different values of $\sigma_1$ when $\Sigma_E$ was fixed with diagonal entries equal to 1. As $\sigma_1$ increased, our method consistently outperformed ordinary least squares, even for very large $\sigma_1$. This is sensible because our method denoises the outcome using the observed mediators, and the advantage of doing so becomes more apparent the larger the noise. From another point of view, Proposition~\ref{p:var} in the main text states that the the difference of asymptotic variance between ordinary least squares and our method converges to a positive semi-definite matrix $\sigma_1^2\Sigma_{SS}^{-1}(\Sigma_{SS}-\Sigma_{SG}\Sigma_{GG}^{-1}\Sigma_{SG}^\top)\Sigma_{SS}^{-1}$, which increases in norm as $\sigma_1$ increases. Both of methods became less powerful as $\sigma_1$ increased, but the relative power advantage of our method compared to ordinary least squares remains large.

Figure \ref{fig:sig2} shows that when $\sigma_1$ was fixed and $\Sigma_E$ was varied, our method had the greatest advantage relative to ordinary least squares when $\Sigma_E$ was moderate. When $\Sigma_E$ was too large or too small, our procedure behaved more like ordinary least squares. This is consistent with Proposition \ref{p:var} in the main text and makes intuitive sense as well. Our method should have the greatest relative advantage when $S$ and $G$ are correlated but only weakly so, or else $G$ would offer no additional information. This occurs when $\Sigma_E$ is not too large. When $\Sigma_E = 0$, the indirect effect becomes zero as well, so there is no effect to detect.

\subsection{\label{sec:nongaussian}Non-Gaussian errors}

In our simulations in Section \ref{sec:numerical} we generated $\epsilon_{1i}$ and $E_i$ from Gaussian distributions for simplicity, but as indicated in Theorem~\ref{thm:incomplete} and ~\ref{thm:complete}, we do not need the Gaussian assumptions. Here we report simulation results when the errors were $t$-distributed. Specifically, we adopt the same simulation settings from Example 1 in Section~\ref{sec:complete_sims}, but instead of generating $\epsilon_{1i} \sim N(0, 25)$, we now generate $\epsilon_{1i}$ from $t_3 \cdot (25/3)^{1/2}$, where $t_3$ denotes a $t$-distributed random variable with three degrees of freedom. The normalizing constant $(25/3)^{1/2}$ is such that $\text{var}(\epsilon_{1i})$ still equals $25$.

\begin{figure}
\centering
  \includegraphics[width = \textwidth]{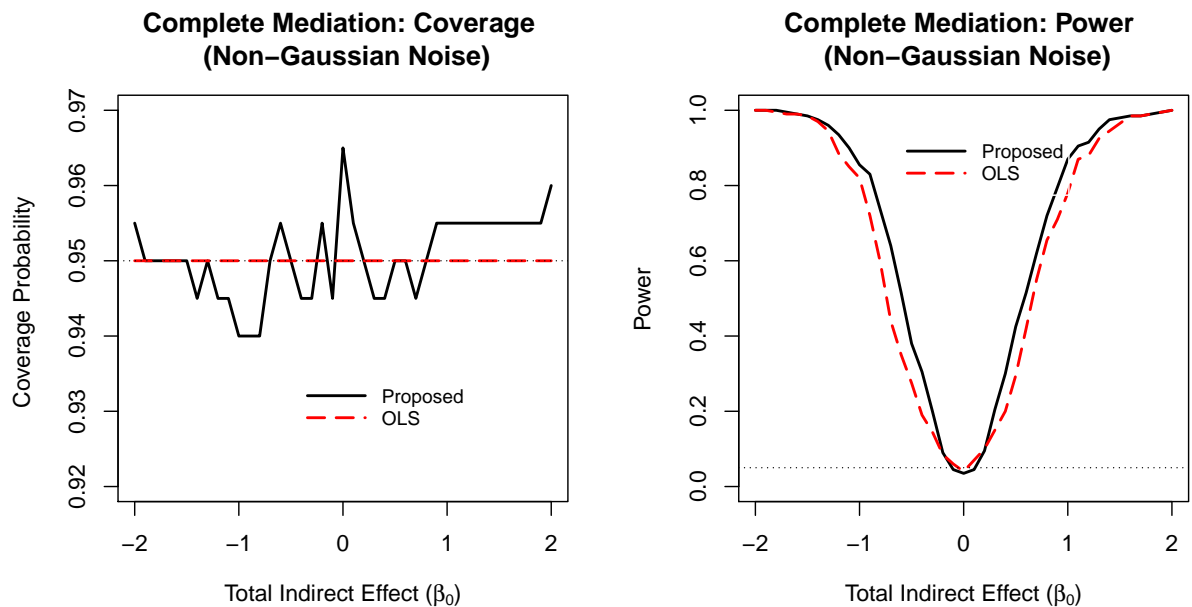}
\caption{\label{fig:non_normal}Average coverage probabilities of 95\% confidence intervals (left panels) and average power curves at significance level $\alpha = 0.05$ (right panels) for estimating and testing the indirect effect under complete mediation, over 200 replications, for $t_3$-distributed $\epsilon_{1i}$. The example have 500 potential mediators with 1 true mediator, with $\sigma_1$ generated from $t_3 \cdot (25/3)^{1/2}$. Proposed: $\tilde{b}$ from~\eqref{eq:complete}; OLS: ordinary least squares estimate.}
\end{figure}

Figure \ref{fig:non_normal} reports the coverage probabilities of the $1-\alpha$ confidence intervals and average power curves for our method and for ordinary least squares. With non-Gaussian noise, our proposed method was still able to maintain the nominal coverage probability and significance level, and had higher power than ordinary least squares for sufficiently large $\beta_0$, which is consistent with Proposition~\ref{p:var}.

\subsection{\label{sec:sparse_est}Robustness to matrix estimation accuracy}

Currently, our proposed method uses a constrained $\ell_1$-minimization approach to estimate the matrices $\Omega_{I}$ and $\Omega_{C}$, defined in Theorems \ref{thm:omegaxx} and \ref{thm:omegagg} in the main text. Under sparsity assumptions on various matrices, described in Assumptions \ref{a:sparse_matrices_xx} and \ref{a:sparse_matrices_gg}, these estimates are consistent. On the other hand, \citet{javanmard2014confidence} showed that in the standard high-dimensional linear regression setting, asymptotically optimal inference is possible without consistent estimation or sparsity of the precision matrix of the covariates. In this section we explore whether consistent estimation is indeed necessary for our method. We considered the same complete mediation simulation settings as in Example 1 of Section \ref{sec:complete_sims} in the main text.

\begin{figure}
\centering
  \includegraphics[width = \textwidth]{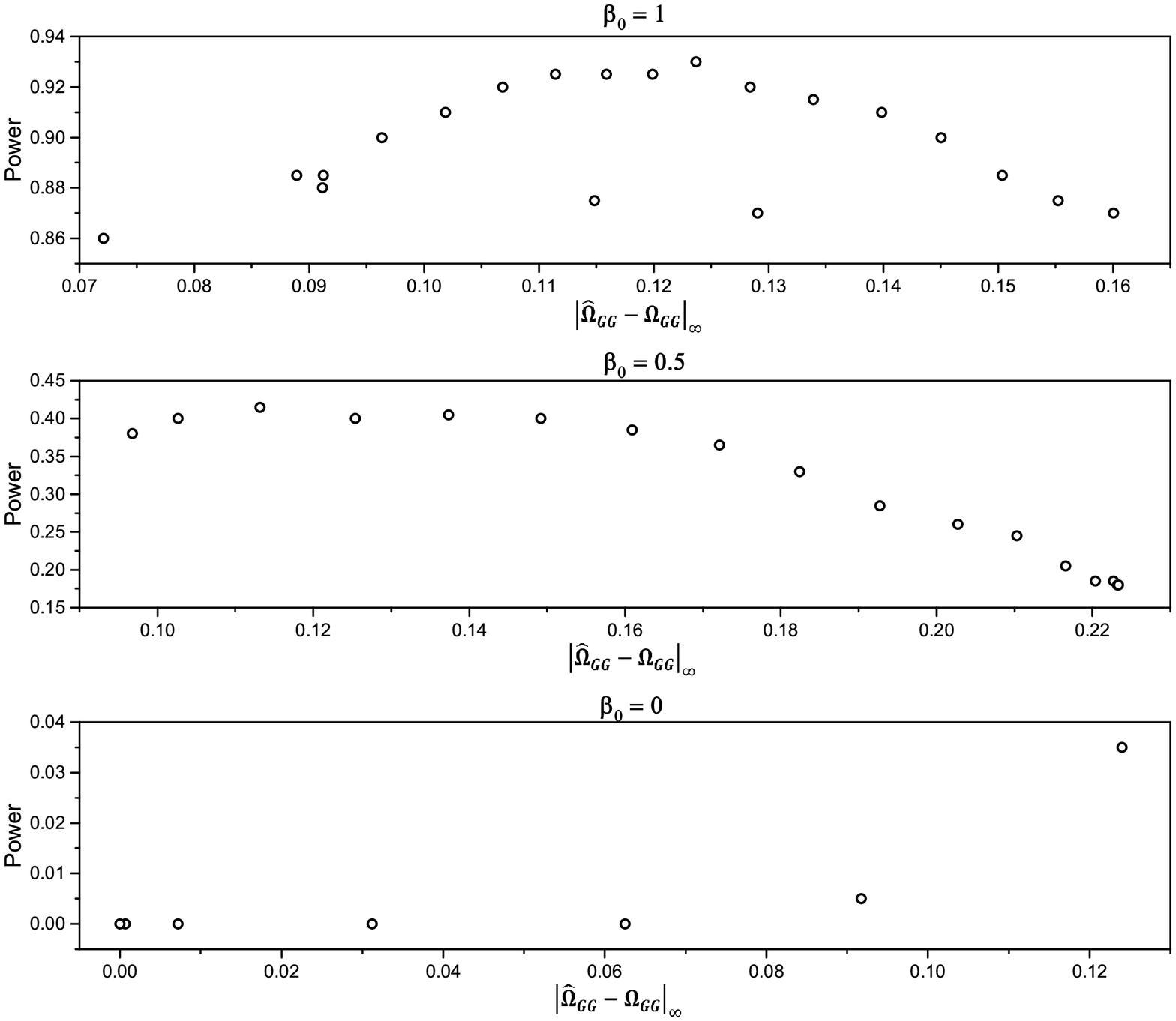}
\caption{\label{fig:omega_sparse}Average power of our proposed method versus average estimation accuracy of $\Omega_{C}$ with 500 potential mediators and 1 true mediator, at significance level $\alpha = 0.05$ over 200 replications.}
\end{figure}

We first varied the accuracy with which $\Omega_{C}$ was estimated by fixing the value of the tuning parameter $\tau_n$ of the estimation procedure \eqref{eq:M_C} to 20 different values between 0.05 and 0.7, instead of choosing an optimal value. We measured the resulting estimation error using $\Vert \hat\Omega_{C} - \Omega_{C} \Vert_\infty$, and we evaluated the power of our procedure when $\beta_0$ equaled 1, 0.5 or 0. Figure~\ref{fig:omega_sparse} shows that for $\beta_0 \ne 0$, both very accurate and very inaccurate estimation resulted in diminished power. An estimation error of $\Vert \hat\Omega_{C} - \Omega_{C} \Vert_\infty \approx 0.11$ seemed to give the highest power.

\begin{figure}
\centering
  \includegraphics[width = \textwidth]{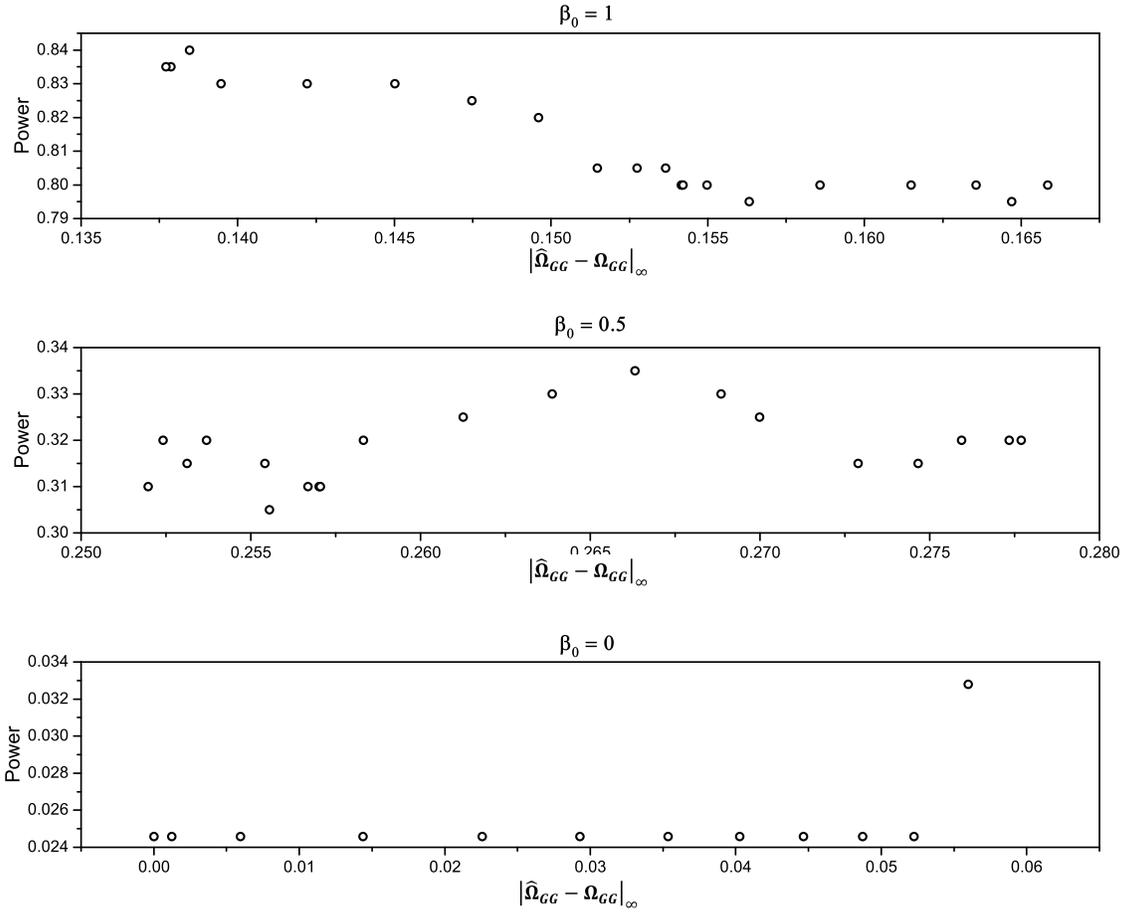}
\caption{\label{fig:omega_dense}Average power of our proposed method versus average estimation accuracy of $\Omega_{C}$ for dense $\Sigma_E^{-1}$, with 500 potential mediators and 1 true mediator, at significance level $\alpha = 0.05$ over 200 replications.}
\end{figure}

We next examined whether the sparsity conditions in Assumptions \ref{a:sparse_matrices_xx} and \ref{a:sparse_matrices_gg} were necessary. We used the same simulations as above, but changed $\Sigma_E$ such that $\Sigma_E^{-1}$ was no longer sparse. Specifically, $\Sigma_E^{-1}$ was a compound symmetric matrix
 with diagonal entries equal to 1 and off-diagonal entries equal to 0.8. Figure~\ref{fig:omega_dense} shows that our method could maintain type I error and have high power, for example when $\beta_0 = 1$, even when the sparsity assumptions were violated.

Together, these results suggest that our procedure is indeed robust to sparsity and consistent estimation of $\Omega_{I}$ and $\Omega_{C}$. On the other hand, these properties are necessary for our current approach to establishing the asymptotic distributions of our estimators. Our general strategy is to apply the central limit theorem to the terms on the right-hand sides of \eqref{eq:incomplete} and \eqref{eq:complete} in the main text, which involves $\hat \Omega_{I}$ and $\hat \Omega_{C}$. The behaviors of these random variables are difficult to characterize in general. We could consider conditioning on $G$ and $S$ so that $\hat \Omega_{I}$ and $\hat \Omega_{C}$ essentially become constants, but under our mediation model \eqref{eq:model} this fixes the values of the residual errors $\epsilon_{2i}$, which leads to difficulties. We cannot condition on $S$ alone, because doing so induces dependencies between the terms in \eqref{eq:incomplete}, as well as those in \eqref{eq:complete}, and makes applying the central limit theorem difficult. We therefore marginalize over $G$ and $S$, but since $\hat \Omega_{I}$ and $\hat \Omega_{C}$ are functions of $G$ and $S$, we need to establish that they are close to well-defined population-level quantities so that we can apply an unconditional central limit theorem. The simplest way to do this is to show that $\hat \Omega_{I}$ and $\hat \Omega_{C}$ converge to $\Omega_{I}$ and $\Omega_{C}$, which requires sparsity assumptions.

However, there may be other ways to show that $\hat \Omega_{I}$ and $\hat \Omega_{C}$ converge to well-defined population-level quantities. This would allow us to consider alternative estimators that are not based on constrained $\ell_1$-minimization, which may give improved finite-sample performance. For example, we can consider an extension of the approach of \citet{javanmard2014confidence} to our mediation analysis setting, as described in the main text in Section \ref{sec:incomplete}. In the future we hope to improve our proof strategies so that our theoretical results reflect the robustness demonstrated by our procedure in simulations.

\subsection{\label{sec:mis}Misspecified model}

If the mediation model \eqref{eq:model} is not correctly specified, there is no gaurantee that our proposed estimation and inference procedures will remain valid. We illustrate this by evaluaing our method, the naive method, and the method of \citet{zhang2016estimating} under a misspecified model simulation. We set the direct effect $\alpha_1 = 0$ so that we can also compare with ordinary least squares.

We generated data following simulations from Section \ref{sec:complete_sims} in the main text with 500 potential mediators and five true mediators. When we implemented the methods, we supposed that we only observed the first 350 potential mediators among 500 potential mediators, and that only four true mediators were contained in the observed sample. We estimated $\beta_0$ and tested $H_0: \beta_0 = 0$ at the $\alpha = 0.05$ significance level.

Figure~\ref{fig:mis} reports the coverage probabilities of the 95\% confidence intervals as well as the average power curves over 200 replications. Neither our method or the naive method were able to maintain nominal coverage, and all mediation-based methods had lower power compared to ordinary least squares. This makes sense because the indirect effect in our fitted model is no longer equal to the true indirect effect. Ordinary least squares was not affected because it does not require correctly specifying a mediation model.

\begin{figure}
\centering
  \includegraphics[width = \textwidth]{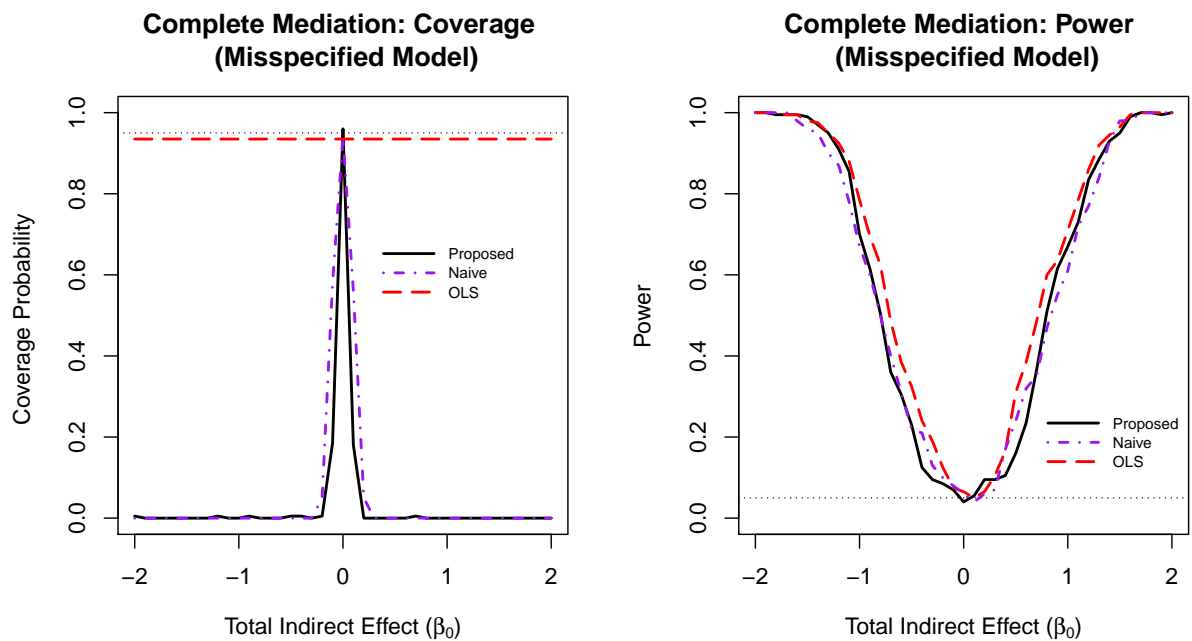}
\caption{\label{fig:mis}Average coverage probabilities of 95\% confidence intervals (left panels) and average power curves at significance level $\alpha = 0.05$ (right panels) for estimating and testing the indirect effect under a misspecified complete mediation model, at significance level $\alpha = 0.05$, over 200 replications. We observed 350 potential mediators with 4 true mediators, while the number of true mediators was 5 with 500 potential mediators; Proposed: $\tilde{b}$ from~\eqref{eq:complete}; Naive: the naive method discussed in Section \ref{sec:methods_compared}; OLS: ordinary least squares estimate; Zhang: the method of \citet{zhang2016estimating}.}
\end{figure}

\section{\label{sec:zhang}Discussion of \citet{zhang2016estimating} and additional simulations}

As the simulations in Section \ref{sec:numerical} of the main text revealed, the procedure of \citet{zhang2016estimating} exhibited counterintuitive behavior: the power was unexpectedly high when $\beta_0$ was small and decreased as $\beta_0$ increased. This section shows that this behavior is a consequence of the method's model selection step.

To help illustrate this, we introduced a slightly modified version of the method that applies the low-dimensional mediation analysis procedure of \citet{zhao2014more} to the mediators identified as being significant by \citet{zhang2010nearly}. This modified method, which we denote as $\text{Zhang}^*$ in the text, can provide interval estimates of the indirect effect, unlike the original procedure of \citet{zhang2016estimating}. If no mediator was selected to be significant, then we do not reject the null hypothesis and set the confidence interval at any significance level to equal the singleton $\{0\}$.

We also conducted additional simulations. Following the same data generating process in Section \ref{sec:complete_sims} in the main text, for samples $i = 1, \ldots, n = 300$ we generated $q = 1$ exposure $S_i \sim N(0,1)$ and $p = 500$ potential mediators $G_i$ following $G_i = c \gamma S_i + E_i$, we then generated the outcome according to $Y_i = G_i^\top \alpha_0 + S_i^\top \alpha_1 + \epsilon_{1i}$, where $\epsilon_{1i} \sim N(0,5)$. We let $\gamma$ have 15 non-zero components randomly generated between $[-1, 1]$, fixing $\gamma$ across replications, and let $\alpha_0$ have 15 non-zero components equal to $a = 1$. We chose one of these non-zero components to correspond to variables whose entries in $\gamma$ were also non-zero, which was the true mediator. Then we also let $\alpha_0$ have 15 non-zero components equal to $a = 0.8$, to see how signal stength in the model selection step affects the perfomance of the \citet{zhang2016estimating} procedure.

Figure \ref{fig:zhang_3} reports average model selection accuracies, coverage probabilities, and power curves of the \citet{zhang2016estimating} and $\text{Zhang}^*$ procedures over 200 replications. Since there was only one true mediator, model selection accuracy was measured as the proportion of replications where the selected model contained the true mediator when $\beta_0 \ne 0$, and as 1 when $\beta_0 = 0$. The $\text{Zhang}^*$ and \citet{zhang2016estimating} methods performed nearly identically, and the coverage plot shows that the $\text{Zhang}^*$ procedure had very poor coverage except when $\beta_0 = 0$. This suggests that the method of \citet{zhang2016estimating} had surprisingly high power for small $\beta_0$ was because it did not appropriately account for the variability of its model selection step. Indeed, the performances of the \citet{zhang2016estimating} procedures were in large part determined by the success of their model selection steps: when the components of $\alpha_0$ were changed from 1 to 0.8, the power of the \citet{zhang2016estimating} procedure for small $\beta_0$ dropped dramatically and no longer dominated our proposed procedure or ordinary least squares.

Figure \ref{fig:zhang_3} also showed that the model selection accuracy deterioriated as the magnitude of $\beta_0$ increased. This is because larger $\beta_0$ corresponded to larger $c$ and therefore to increased collinearity between $G_i$ and $S_i$ in the regression for $Y_i$, which makes consistent model selection difficult. This drop in model selection accuracy likely explains the drop in power of the \citet{zhang2016estimating} procedure for large $\beta_0$.

\begin{figure}
\centering
  \includegraphics[width = 12cm, height=6cm]{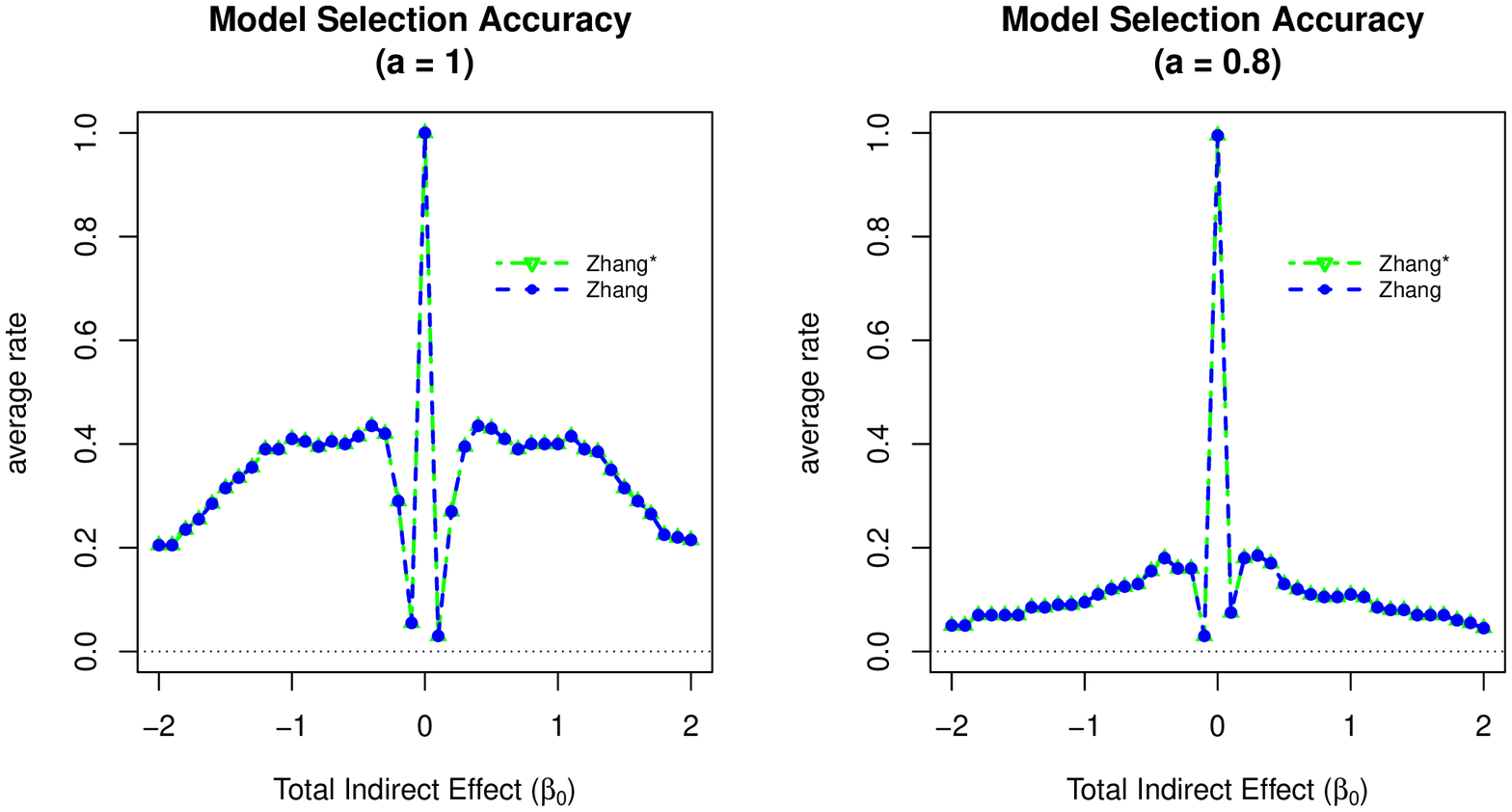}\par
  \includegraphics[width = 12cm, height=6cm]{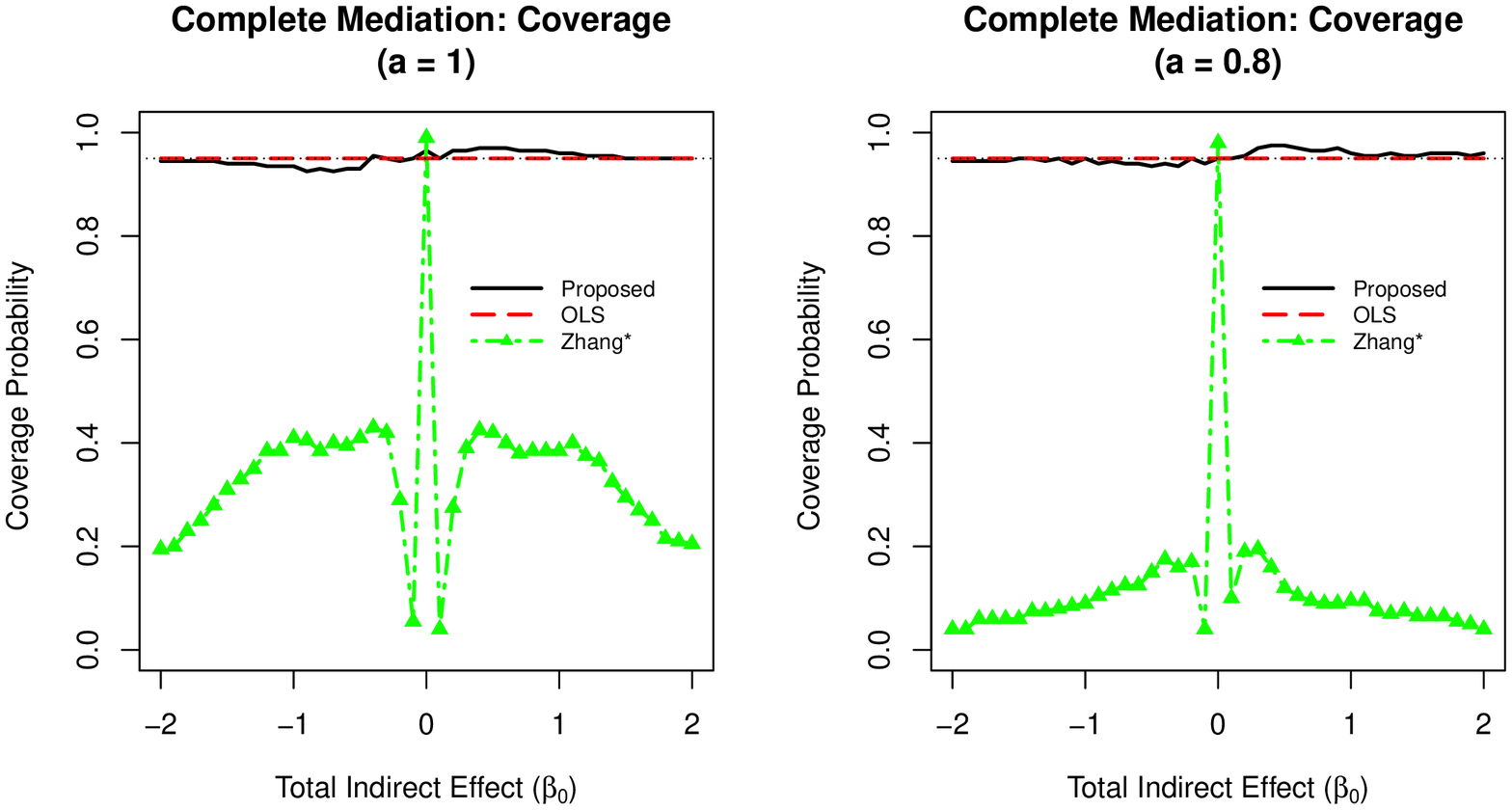}\par
  \includegraphics[width = 12cm, height=6cm]{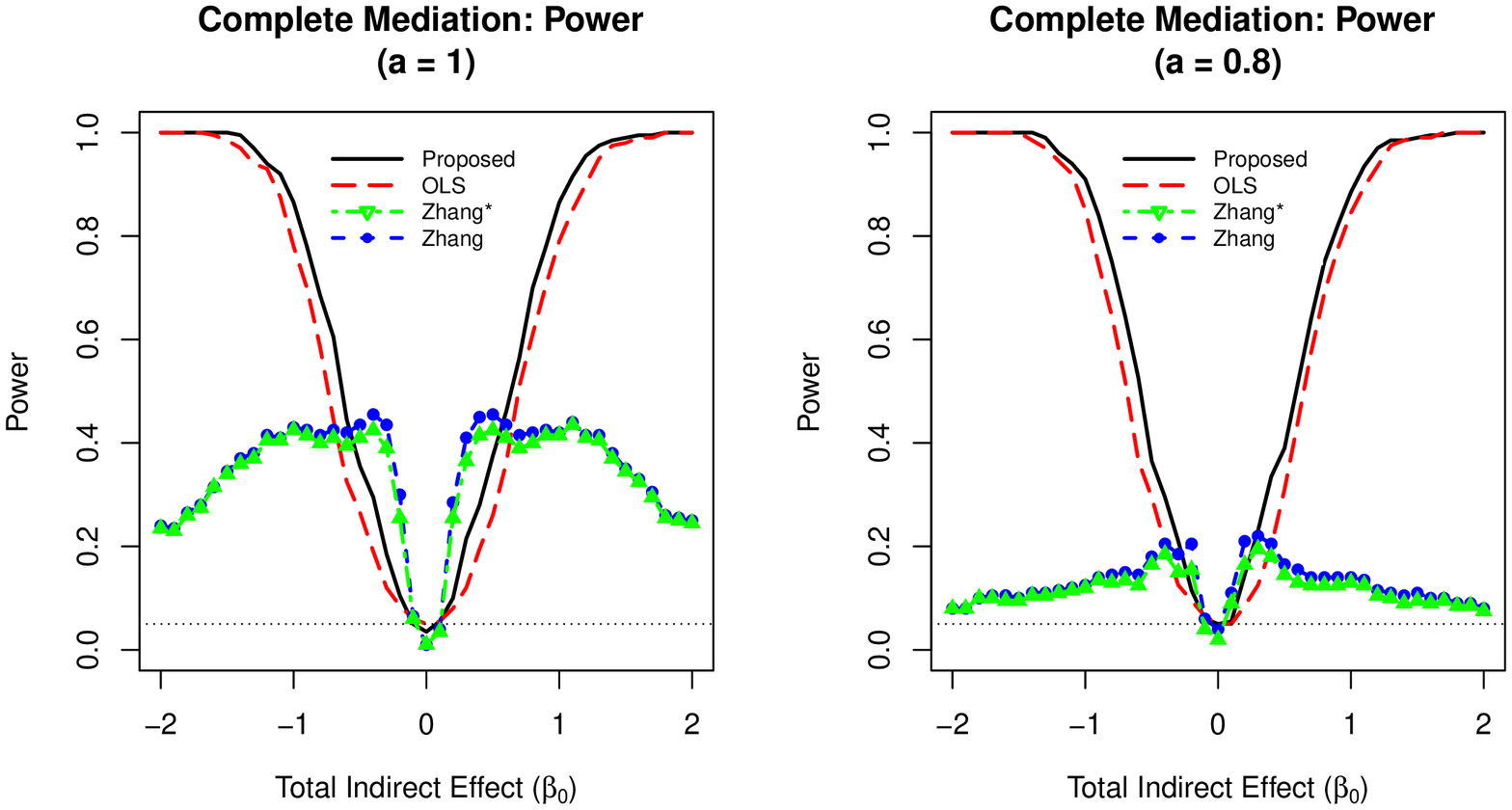}\par
\caption{\label{fig:zhang_3} Average coverage probabilities of 95\% confidence intervals (upper panels), average power curves at significance level $\alpha = 0.05$ (middle panels) and average model selection accuracy (of the \citet{zhang2016estimating} method, lower panels), for estimating and testing the indirect effect under complete mediation, over 200 replications. The number of true mediators was 1 with 500 potential mediators, and the outcome-mediator signal strength constant $a$ was 1 in the left panels and 0.8 in the right panels; Proposed: $\tilde{b}$ from~\eqref{eq:complete}; OLS: ordinary least squares estimate; Zhang: the method of \citet{zhang2016estimating}; $\text{Zhang}^\ast$, the modified method of \citet{zhang2016estimating} discussed in Section \ref{sec:zhang} in Supplementary Materials.}
\end{figure}

As we mentioned in Section \ref{sec:incomplete_sims} in the main text, improperly accounting for the variability of the model selection step can also inflate the type I error of the method of \citet{zhang2016estimating}. Following the data generating process in Section \ref{sec:complete_sims} in the main text, for samples $i = 1, \ldots, n = 300$ we generated $q = 1$ exposure $S_i \sim N(0,1)$ and $p = 500$ potential mediators $G_i$ following $G_i = \gamma S_i + E_i$. We then generated the outcome according to $Y_i = G_i^\top \alpha_0 + S_i^\top \alpha_1 + \epsilon_{1i}$, where $\epsilon_{1i} \sim N(0,5)$. We let $\alpha_0$ have one non-zero component equal to $a = 1$, and set the corresponding entry of $\gamma$ to be $c$, which was the true mediator. The indirect effect $\beta_0$ then equals $c$, which we varied in our simulations. We then set the remaining 499 components of $\gamma$ equal to 1, which causes the components of $G_i$ to be highly correlated.

Figure~\ref{fig:zhang_type1} reports average power curves over 200 replications. The method of \citet{zhang2016estimating} had superior power, but it also failed to maintain type I error when $\beta_0=0$. This is because its model selection performance was degraded due to the multicollinearity between the components of $G_i$, and it erroneously selected components of $G_i$ that were not true mediators. Because it does not take into account the fact that the selected model is random, the method of \citet{zhang2016estimating} incorrectly identifies these other components as mediators.

\begin{figure}
 \centering
  \includegraphics[width = 10cm]{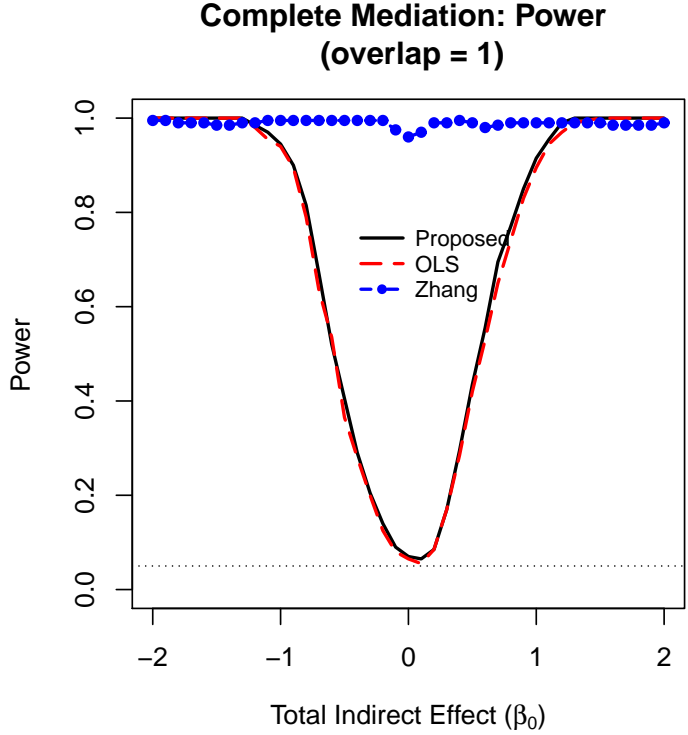}
\caption{\label{fig:zhang_type1} Average power curves at significance level $\alpha = 0.05$ for estimating and testing the indirect effect under complete mediation, over 200 replications. The number of true mediators was 1 with 500 potential mediators; Proposed: $\tilde{b}$ from~\eqref{eq:complete}; OLS: ordinary least squares estimate; Zhang: the method of \citet{zhang2016estimating};}
\end{figure}

\section{\label{sec:proofs}Proofs of theorems}

\begin{proof}[of Theorem~\ref{thm:omegaxx} and ~\ref{thm:omegagg}]
We first prove the result for the $\hat\Omega_{C}$ that solves~\eqref{eq:M_C}. 
First we show that $\Vert\Omega_{C}\hat\Sigma_{GG}-\hat\Sigma_{SG}\Vert_{\infty}\leq(N_G+1)\{(\log p) / n\}^{1/2}$ with probability going to 1. First we have
\begin{align*}
\Vert\Omega_{C} \hat\Sigma_{GG}-\hat\Sigma_{SG}\Vert_{\infty}\leq\Vert\Omega_{C}(\hat\Sigma_{GG}-\Sigma_{GG})\Vert_\infty+\Vert\Sigma_{SG}-\hat\Sigma_{SG}\Vert_\infty.
\end{align*}
Following the proofs of Theorems~1 and~4 from \citet{cai2011constrained}, we have 
\[
\Vert\hat\Sigma_{XX}-\Sigma_{XX}\Vert_{\infty}\leq C_1 [\{\log (p+q)\} / n]^{1/2}
\]with probability greater than $p_1 = 1-4p^{-\tau}$ under Assumption~\ref{a:tails}, where $C_1=2\eta^{-2}(2+\tau+\eta^{-1}e^2K^2)^2$ and $\tau>0$. Since $\Vert \hat\Sigma_{GG}-\Sigma_{GG}\Vert_\infty$ and $\Vert \hat\Sigma_{SG}-\Sigma_{SG}\Vert_\infty$ are both less than or equal to $\Vert\hat\Sigma_{XX}-\Sigma_{XX}\Vert_{\infty}$, and by Assumption~\ref{a:sparse_matrices_gg} $\Vert \Omega_{C}^\top \Vert_{L_1} \leq N_G$, we have
\begin{align*}
  \Vert\Omega_{C} \hat\Sigma_{GG} - \hat\Sigma_{SG}\Vert_{\infty}
  \leq\,&
  \Vert\Omega_{C} ^\top \Vert_{L_1} \Vert \hat\Sigma_{GG}-\Sigma_{GG} \Vert_\infty
  +
  \Vert\Sigma_{SG}-\hat\Sigma_{SG}\Vert_\infty \\
  \leq\,&
  (N_G + 1) C_1 [\{\log (p+q)\} / n]^{1/2}
\end{align*}
with probability greather than $p_1$. Next, for $1\leq i \leq q$, let ${\hat b}_i$ be the solution of the optimization problem:
\begin{align*}
\min \Vert b \Vert_1\mbox{ subject to } \Vert{b}^\top\hat\Sigma_{GG}-\hat\Sigma_{SG,i}\Vert_\infty\leq\tau_n,
\end{align*}
where ${b}$ is a vector in ${{R}}^{p}$ and $\Sigma_{SG,i}$ denote the i-th row of $\Sigma_{SG}$.
Let $\hat{B} = (b_1^\top, \ldots, b_q^\top)$ and write $\Omega = (\omega_1^\top, \ldots, \omega_q^\top)$. Then the constraint $\Vert\Omega\hat\Sigma_{GG}-\hat\Sigma_{SG}\Vert_\infty\leq\tau_n$ from~\eqref{eq:M_C} is equivalent to
$\Vert{\omega}_i^\top\hat\Sigma_{GG}-\hat\Sigma_{SG,i}\Vert_\infty\leq\tau_n\quad\textrm{for all}\quad 1\leq i \leq q$.
Thus we have
$\Vert{\omega}_i\Vert_1\geq\Vert{\beta}_i\Vert_1$ for all $1\leq i \leq q$.
Since $\Vert{B}^\top\hat\Sigma_{GG}-\hat\Sigma_{SG}\Vert_\infty\leq\tau_n$, we have $\Vert \hat\Omega \Vert_1\leq \Vert\hat{{B}} \Vert_1$,
thus $\Omega_1 = \hat{B}$.
By the definition of ${b}_i^\top$, we can see that $\Vert{b}_i^\top\Vert_1\leq\Vert\Omega_{C}\Vert_{L_1}$ for $1\leq i\leq q$. Then we have $\Vert\hat\Omega_{C}^\top\Vert_{L_1}\leq\Vert\Omega_{C}^\top\Vert_{L_1}$.
Finally we have $\Vert(\hat{\Omega}_{C}-\Omega_{C})\Sigma_{GG}\Vert_\infty$
\begin{align*}
  =\,&
  \Vert\hat{\Omega}_{C}(\Sigma_{GG}-\hat{\Sigma}_{GG})+\hat{\Omega}_{C}\hat{\Sigma}_{GG}-\hat{\Sigma}_{SG}+\hat{\Sigma}_{SG}-\Sigma_{SG}\Vert_\infty\\
  =\,&
  \Vert(\hat{\Omega}_{C}-\Omega_{C})(\Sigma_{GG}-\hat{\Sigma}_{GG})+\Omega_{C}(\Sigma_{GG}-\hat{\Sigma}_{GG})+
  \hat{\Sigma}_{SG}-\Sigma_{SG}\Vert_\infty+\Vert\hat{\Omega}_{C}\hat{\Sigma}_{GG}-\hat{\Sigma}_{SG}\Vert_\infty
  \\
  =\,&
  \Vert(\hat{\Omega}_{C}-\Omega_{C})(\Sigma_{GG}-\hat{\Sigma}_{GG})\Vert_\infty+\Vert\Omega_{C}\hat\Sigma_{GG}-\hat{\Sigma}_{SG}\Vert_\infty+\Vert\hat{\Omega}_{C}\hat{\Sigma}_{GG}-\hat{\Sigma}_{SG}\Vert_\infty\\
  =\,&
  \Vert(\hat{\Omega}_{C}-\Omega_{C})(\Sigma_{GG}-\hat{\Sigma}_{GG})\Vert_\infty+2\tau_n
\end{align*}
and
\[
\Vert(\hat{\Omega}_{C}-\Omega_{C})(\Sigma_{GG}-\hat{\Sigma}_{GG})\Vert_\infty
\leq
2\Vert\Omega_{C}^\top\Vert_{L_1}\Vert\Sigma_{GG}-\hat{\Sigma}_{GG}\Vert_\infty
\leq
2N_GC_1 \{ \log{(p+q)}/n \} ^{1/2}
\]
under the same event with probability greater than $p_1$.
This gives us 
$$\Vert(\hat{\Omega}_{C}-\Omega_{C})\Sigma_{GG}\Vert_\infty\leq2N_GC_1(\log{p}/n)^{1/2}+2\tau_n=(4N_G+2) \{ \log{(p+q)}/n \} ^{1/2}$$
Since we assume $\Vert\Sigma_{GG}^{-1}\Vert_{L_1}\leq{M_G}$, we have
$ \Vert\hat{\Omega}_{C}-\Omega_{C}\Vert_\infty\leq(4N_G+2)M_G \{ \log{(p+q)}/n \} ^{1/2}$ 
with probability greater than $p_1.$

The proof for $\hat\Omega_{I}$ can then be obtained by complete analogy.
\end{proof}

\begin{proof}[of Theorem~\ref{thm:complete}]
Based on Section~\ref{sec:complete} and Theorem~\ref{thm:complete}, we have
$$
  n^{1/2} ( \tilde{b} - \beta_0 )
= W + \Delta_0 + \Delta_1 + \Delta_2,
$$
where $W
=
n^{-1/2}\Sigma_{SS}^{-1}S^\top\epsilon_{2}+n^{-1/2}\Sigma_{SS}^{-1}\Omega_C{G}^\top\epsilon_{1}
=
n^{-1/2}\Sigma_{SS}^{-1} \sum_{i=1}^{n}(S_i^\top\epsilon_{2i}+\Omega_C G_i^\top\epsilon_{1i}),$ and 
\begin{eqnarray*}
\Delta_0 &=& n^{1/2} \Sigma_{SS}^{-1}(\hat\Omega_{C}\hat\Sigma_{GG} - \hat \Sigma_{SG})(\hat\alpha-\alpha_0), \\
\Delta_1 &=& n^{-1/2} ({\hat\Sigma_{SS}^{-1}}\hat\Omega_{C}-{\Sigma_{SS}^{-1}}\Omega_{C})G^\top\epsilon_1, \\
\Delta_2 &=& n^{-1/2} (\hat\Sigma_{SS}^{-1}-\Sigma_{SS}^{-1})S^\top\epsilon_2.
\end{eqnarray*}
First we prove that $W\rightarrow N(0,\sigma_1^2\Sigma_{SS}^{-1}\Sigma_{SG}\Sigma_{GG}^{-1}\Sigma_{SG}\Sigma_{SS}^{-1}+\sigma_2^2\Sigma_{SS}^{-1})$:
For $i = 1, \ldots, n$, let 
\[
W_i=S_i^\top\epsilon_{2i}+\Omega_C G_i^\top\epsilon_{1i}.
\]
Then we have 
\[
\text{E}(W_i)=\text{E} \{ \text{E}(S_i^\top\epsilon_{2i}\mid S_i) \} +\text{E} \{ \text{E}(\Omega_C G_i^\top\epsilon_{1i}\mid G_i) \} =0,
\]
and
\[
\text{var}(S_i^\top\epsilon_{2i})=\sigma_2^2\text{E}(S_i^\top S_i),
\]
\[
\text{var}(\Omega_C G_i^\top\epsilon_{1i})=\sigma_1^2\Omega_C\text{E}(G_i^\top G_i)\Omega_C^\top
\]
\[
\text{cov}(S_i^\top\epsilon_{2i},\Omega_C G_i^\top\epsilon_{1i})=\text{E} \{ \text{E}(S_i^\top\epsilon_{2i}\epsilon_{1i}^\top G_i\Omega_C ^\top\mid S_i,G_i) \} =0,
\] which means that 
\[
\text{var}(W_i)=\sigma_2^2\Sigma_{SS}+\sigma_1^2\Omega_C\Sigma_{GG}\Omega_C^\top.\]

Since $W = n^{-1/2}\Sigma_{SS}^{-1}\sum_{i=1}^{n}W_i$, by CLT, when $n\rightarrow\infty$,
$$W\rightarrow{N}_q(0,\sigma_1^2\Sigma_{SS}^{-1}\Sigma_{SG}\Sigma_{GG}^{-1}\Sigma_{SG}\Sigma_{SS}^{-1}+\sigma_2^2\Sigma_{SS}^{-1}).$$
Next we prove that $\Vert\Delta_0\Vert_\infty=o_P(1)$, $\Vert \Delta_1 \Vert_\infty=o_P(1)$ and $\Vert \Delta_2\Vert_\infty=o_P(1)$.

We start by showing 
\[
\Vert \Delta_1\Vert_\infty = \Vert n^{-1/2}(\hat\Sigma_{SS}^{-1}\hat\Omega_C-\Sigma_{SS}^{-1}\Omega_C){G}^\top\epsilon_{1}\Vert_\infty=o_P(1).
\]
Since $\Vert n^{-1/2}(\hat\Sigma_{SS}^{-1}\hat\Omega_C-\Sigma_{SS}^{-1}\Omega_C){G}^\top\epsilon_{1}\Vert_{\infty}$
\begin{eqnarray*}
&\leq&
\Vert n^{-1/2}(\hat\Sigma_{SS}^{-1}\hat\Omega_C-\hat\Sigma_{SS}^{-1}\Omega_C){G}^\top\epsilon_{1}\Vert_{\infty}
+\Vert n^{-1/2}(\hat\Sigma_{SS}^{-1}\Omega_C-\Sigma_{SS}^{-1}\Omega_C){G}^\top\epsilon_{1}\Vert_{\infty}\\
&\leq&
\Vert\hat\Sigma_{SS}^{-1}(\hat\Omega^\top_C-\Omega^\top_C)\Vert_{L_1}\Vert n^{-1/2}{G}^\top\epsilon_{1}\Vert_{\infty}
+\Vert\Omega^\top_C(\hat\Sigma_{SS}^{-1}-\Sigma_{SS}^{-1})\Vert_{L_1}\Vert n^{-1/2}{G}^\top\epsilon_{1}\Vert_{\infty}.
\end{eqnarray*}
Let $M_s = \Vert\hat\Sigma_{SS}^{-1}\Vert_{\infty}$, then 
\[
\Vert\hat\Sigma_{SS}^{-1}(\hat\Omega_C^\top-\Omega^\top_C)\Vert_{L_1}
\leq
q\Vert\hat\Sigma_{SS}^{-1}\Vert_{\infty}\Vert\hat\Omega^\top_C-\Omega^\top_C\Vert_{L_1}
\leq
qM_sC_3s_0(\log{p}/n)^{1/2}
\]
and 
\[
\Vert\Omega^\top_C(\hat\Sigma_{SS}^{-1}-\Sigma_{SS}^{-1})\Vert_{L_1}
\leq
q\Vert\hat\Sigma_{SS}^{-1}-\Sigma_{SS}^{-1}\Vert_{\infty}\Vert\Omega^\top_C\Vert_{L_1}
\leq
cqM(\log{q}/n)^{1/2}.
\]

We have $\Vert{G}^\top\epsilon_{1}\Vert_{\infty}=\max_{j\in\{1,\ldots,p\}}\{\sum_{i=1}^{n}G_{ij}\epsilon_{1i}\}$,
and $G_{ij}\epsilon_{1i}, i=1,\ldots,n$ are i.i.d random variables given each j.

Following the proof in \citet{cai2011constrained} for Theorem 1(a), let $t=\eta \sqrt{\log{p}/n}$ and $C_{K_1} = 1+\tau+K^2$, using the inequality $\vert e^{s} - 1 - s\vert \leq s^2 e^{\vert s \vert}$ and $s^2 \leq e^{\vert s \vert}$ for any $s \in R$, by basic calculations, we can get 
\begin{align*}
& P \{ \sum_{i=1}^{n}G_{ij}\epsilon_{1i}\geq \eta^{-1} C_{K_1} (n\log{p})^{1/2} \} &\\
&\leq e^{-C_{K_1}\log{p}} (\text{E}\exp{(t G_{ij}\epsilon_{1i})})^n &\\
&\leq \exp (-C_{K_1}\log{p} + n \log{\text{E}\exp{(t G_{ij}\epsilon_{1i})}} )&\\
&\leq \exp (-C_{K_1}\log{p} + n ( \text{E}e^{\vert t G_{ij}\epsilon_{1i} \vert} -1 - \text{E} \vert t G_{ij}\epsilon_{1i} \vert )) 
\quad (\text{since } \text{E}(G_{ij}\epsilon_{1i})=0)& \\
&\leq \exp ( -C_{K_1}\log{p} + nt^2 \text{E} (G_{ij}^2\epsilon_{1i}^2 e^{t \vert G_{ij}\epsilon_{1i}\vert}))& \\
&\leq \exp ( -C_{K_1}\log{p} + nt^2 \eta^{-2} \text{E} ( \eta^2 G_{ij}^2\epsilon_{1i}^2 e^{t \vert G_{ij}\epsilon_{1i}\vert}))& \\
&\leq \exp ( -C_{K_1}\log{p} + \log{p} \text{E} (e^{ \eta \vert G_{ij}\epsilon_{1i} \vert } e^{t \vert G_{ij}\epsilon_{1i}\vert}))& \\
&\leq \exp ( -C_{K_1}\log{p} + \text{E} e^{ \frac{\eta + t}{2} (G_{ij}^2 + \epsilon_{1i}^2) } \log{p})& \\
& = \exp (-C_{K_1}\log{p} + \text{E} e^{ \frac{\eta + t}{2} G_{ij}^2} \text{E} e^{ \frac{\eta + t}{2} \epsilon_{i}^2}\log{p}) 
\quad (\text{since } G_{ij}\perp\epsilon_{1i}, \forall i)& \\
&\leq \exp (-C_{K_1}\log{p} + K^2 \log{p})& \\
&\leq \exp (-(1 + \tau) \log{p})&
\end{align*}

Thus, we have
\begin{eqnarray*}
P \{ \Vert{G}^\top\epsilon_{1}\Vert_{\infty}\geq \eta^{-1} C_{K_1}(n\log{p})^{1/2} \}
&=&
P \{ \max_{j\in\{1,...,p\}} ( \sum_{i=1}^{n}G_{ij}\epsilon_{1i} ) \geq \eta^{-1} C_{K_1}(n\log{p})^{1/2} \} \\
&\leq&
2pP \{ \sum_{i=1}^{n}G_{ij}\epsilon_{1i}\geq \eta^{-1} C_{K_1}(n\log{p})^{1/2} \} \\
& \leq &
2p^{-\tau} \rightarrow 0
\end{eqnarray*}
when $p>n\rightarrow \infty$.
This gives us $P \{ \Vert n^{-1/2}{G}^\top\epsilon_{1}\Vert_{\infty}\geq \eta^{-1} C_{K_1}(\log{p})^{1/2} \} \rightarrow0$,
then we have $P \{ \Vert n^{-1/2}(\hat\Sigma_{SS}^{-1}\hat\Omega_C-\Sigma_{SS}^{-1}\Omega_C){G}^\top\epsilon_{1}\Vert_\infty\geq c_2(\log{p})/n^{1/2} \} \rightarrow 0$.
And since $(\log{p})/n^{1/2}\rightarrow0$,
\begin{align*}
P \{ \Vert n^{-1/2}(\hat\Sigma_{SS}^{-1}\hat\Omega_C-\Sigma_{SS}^{-1}\Omega_C){G}^\top\epsilon_{1}\Vert_\infty= 0 \} \rightarrow 1.
\end{align*}
Next we show that 
\[
\Vert \Delta_2\Vert_\infty =\Vert n^{-1/2} (\hat\Sigma_{SS}^{-1}-\Sigma_{SS}^{-1})S^\top\epsilon_2 \Vert_\infty=o_P(1).
\]
Since $ \Vert n^{-1}{S}^\top\epsilon_{2}\Vert_{L_1} = \sum_{1\leq j \leq q} \vert n^{-1} \sum_{i=1}^{n} S_{ij} \epsilon_{2i} \vert$, and by the law of large numbers, $\vert n^{-1} \sum_{i=1}^{n} S_{ij} \epsilon_{2i} \vert \rightarrow \text{E} (S_{ij} \epsilon_{2i}) = 0$ for all j, we have $ \Vert n^{-1}{S}^\top\epsilon_{2}\Vert_{L_1} \rightarrow 0$.
Combining $\Vert\hat\Sigma_{SS}^{-1}-\Sigma_{SS}^{-1}\Vert_{\infty}\leq c /{n^{1/2}}$, we have when $n\rightarrow\infty$,
$$\Vert n^{-1/2}(\hat\Sigma_{SS}^{-1}-\Sigma_{SS}^{-1})S^\top\epsilon_{2}\Vert_{\infty} 
\leq 
n^{1/2} \Vert \hat\Sigma_{SS}^{-1}-\Sigma_{SS}^{-1} \Vert_\infty 
\Vert n^{-1} S^\top\epsilon_{2}\Vert_{L_1} \rightarrow 0. $$
Finally we show that 
\[
\Vert \Delta_0\Vert_\infty =\Vert n^{1/2}\Sigma_{SS}^{-1}(\hat\Sigma_{SG}-\hat\Omega_C\hat\Sigma_{GG})(\hat\alpha-\alpha_0)\Vert_\infty=o_P(1).
\]
Since
\begin{eqnarray*}
\Vert n^{1/2}\Sigma_{SS}^{-1}(\hat\Sigma_{SG}-\hat\Omega_C\hat\Sigma_{GG})(\hat\alpha-\alpha_0)\Vert_{\infty}
&\leq&
\Vert n^{1/2}\Sigma_{SS}^{-1}(\hat\Sigma_{SG}-\hat\Omega_C\hat\Sigma_{GG})\Vert_{\infty}\Vert \hat\alpha-\alpha_0 \Vert_1\\
&\leq&
n^{1/2}\Vert\Sigma_{SS}^{-1}\Vert_{L_1}\Vert\hat\Sigma_{SG}-\hat\Omega_C\hat\Sigma_{GG}\Vert_{\infty}\Vert \hat\alpha-\alpha_0 \Vert_1
\end{eqnarray*}
From properties of scaled lasso we have $\Vert \hat\alpha-\alpha_0 \Vert_1=O \{ (\log{p}/n)^{1/2} \}$,
and by definition $\hat\Omega_C$ solves \eqref{eq:M_C} in the main text, we have 
$\Vert\hat\Sigma_{SG}-\hat\Omega_C\hat\Sigma_{GG}\Vert_{\infty}\leq\tau_n=O \{ (\log{p}/n)^{1/2} \}$.
Then we have $\Vert\Delta_0\Vert_{\infty}=O ( \log{p}/n^{1/2} )=o_P(1)$.
\end{proof}

\begin{proof}[of Theorem~\ref{thm:incomplete}]
Since
\[
n^{1/2}
\begin{pmatrix}
\hat{b} - \beta_0\\
\hat{a} - \alpha_1
\end{pmatrix}
= W + \Delta_0 + \Delta_1 + 
\begin{pmatrix}
\Delta_2 \\
0
\end{pmatrix},
\]
where
\[
W = n^{-1/2} I_2 \otimes{\Sigma_{SS}^{-1}}\Omega_{I}X^\top\epsilon_1 +
\begin{pmatrix}
n^{-1/2} \Sigma_{SS}^{-1}S^\top\epsilon_2 \\
0
\end{pmatrix}
,\] and 
\begin{eqnarray*}
\Delta_0 &=& n^{1/2} I_2 \otimes \hat \Sigma_{SS}^{-1}(\hat D-\hat\Omega_{I}\hat\Sigma_{XX})(\hat\alpha-\alpha), \\
\Delta_1 &=& n^{-1/2} I_2 \otimes({\hat\Sigma_{SS}^{-1}}\hat\Omega_{I}-{\Sigma_{SS}^{-1}}\Omega_{I})X^\top\epsilon_1, \\
\Delta_2 &=& n^{-1/2} (\hat\Sigma_{SS}^{-1}-\Sigma_{SS}^{-1})S^\top\epsilon_2.
\end{eqnarray*}

Showing that $\Vert\Delta_2\Vert_\infty=o_P(1)$ is exactly the same as in the proof of Theorem \ref{thm:complete}. By fitting a new $\ell_1$ optimization problem as in (3) instead of (4), 
$\Vert\Delta_0\Vert_\infty=o_P(1)$ can be directly extended from 
$\Vert n^{1/2}\Sigma_{SS}^{-1}(\hat\Sigma_{SG}-\hat\Omega_C\hat\Sigma_{GG})(\hat\alpha-\alpha_0)\Vert_\infty=o_P(1)$ 
and $\Vert\Delta_1\Vert_\infty=o_P(1)$ can be extended from $\Vert n^{-1/2}(\hat\Sigma_{SS}^{-1}\hat\Omega_C-\Sigma_{SS}^{-1}\Omega_C){G}^\top\epsilon_{1}\Vert_\infty=o_P(1)$, which has already been proved in Theorem \ref{thm:complete}.
So it remains to prove the asymptotic normality of W:
\[
W
=
n^{-1/2}I_2\otimes\Sigma_{SS}^{-1}\sum_{i=1}^{n} \{ \Omega_{I}X_i^\top\epsilon_{1i}+
\begin{pmatrix}
S_i^\top\epsilon_{2i} \\
0
\end{pmatrix} \}.
\]
For each $i$
let \[
W_i=W_{1i}+W_{2i}=
\Omega_{I}X_i^\top\epsilon_{1i}
+
\begin{pmatrix}
S_i^\top\epsilon_{2i} \\
0
\end{pmatrix},
\]
as before we have $\text{E}(W_{1i})=\text{E}(W_{2i})=0$, and 
\[
\text{var}(W_{1i}) = \text{E}(W_{1i}W_{1i}^\top)=\sigma_1^2\Omega_{I}\text{E}(X_i^\top X_i)\Omega_{I}^\top
=\sigma_1^2\Omega_{I}\Sigma_{XX}\Omega_{I}^\top
=\sigma_1^2D\Sigma_{XX}^{-1}D^\top,
\]
where $D=
\begin{pmatrix}
\Sigma_{SG} & 0\\
0 & \Sigma_{SS}
\end{pmatrix}
$.
By inversion of block matrix we have 
\begin{align*}
\Sigma_{XX}^{-1}=
\begin{pmatrix}
\Sigma_{GG} &\Sigma_{GS}\\
\Sigma_{SG} & \Sigma_{SS}
\end{pmatrix}^{-1}
=
\begin{pmatrix}
J^{-1} & -J^{-1}\Sigma_{GS}\Sigma_{SS}^{-1}\\
-\Sigma_{SS}^{-1}\Sigma_{SG}J^{-1} & \Sigma_{SS}^{-1} + \Sigma_{SS}^{-1}\Sigma_{SG}J^{-1}\Sigma_{GS}\Sigma_{SS}^{-1}
\end{pmatrix},
\end{align*}
where $J=\Sigma_{GG}-\Sigma_{GS}\Sigma_{SS}^{-1}\Sigma_{SG}$. Thus 
\[
\text{var}(W_{1i}) = \sigma_1^2
\begin{pmatrix}
\Sigma_{SG}J^{-1}\Sigma_{GS} & -\Sigma_{SG}J^{-1}\Sigma_{GS}\\
-\Sigma_{SG}J^{-1}\Sigma_{GS} & \Sigma_{SS}+\Sigma_{SG}J^{-1}\Sigma_{GS}
\end{pmatrix}
\]
\[
\text{var}(W_{2i}) = \text{E}(W_{2i}W_{2i}^\top)=\sigma_2^2
\begin{pmatrix}
\Sigma_{SS} & 0\\
0 & 0
\end{pmatrix};
\]
and 
\begin{eqnarray*}
\quad\text{cov}(W_{1i},W_{2i})
&=&
\text{E} \{ \Omega_{I}X_i^\top\epsilon_{1i}
(
\epsilon_{2i}S_i \quad 0
)
\} \\
&=&
\text{E} \{ \text{E} (
\Omega_{I}X_i^\top\epsilon_{1i}\epsilon_{2i}S_i \quad 0
)
\mid S_i,X_i \}
=
0.
\end{eqnarray*}
So
\begin{eqnarray*}
\text{var}(W_{i})
&=&
\text{var}(W_{1i})+\text{var}(W_{2i}) \\
&=&
\sigma_1^2
\begin{pmatrix}
\Sigma_{SG}J^{-1}\Sigma_{GS} & -\Sigma_{SG}J^{-1}\Sigma_{GS}\\
-\Sigma_{SG}J^{-1}\Sigma_{GS} & \Sigma_{SS}+\Sigma_{SG}J^{-1}\Sigma_{GS}
\end{pmatrix}
+\sigma_2^2
\begin{pmatrix}
\Sigma_{SS}^{-1} & 0\\
0 & 0
\end{pmatrix}.
\end{eqnarray*}

Since $W=n^{-1/2}I_2\otimes\Sigma_{SS}^{-1}\sum_{i=1}^{n}W_i$, by CLT, when $n\rightarrow\infty$, we have 
$$W\rightarrow{N}_{2q} \{ 0,
\begin{pmatrix}
\sigma_1^2\Gamma & -\sigma_1^2\Gamma\\
-\sigma_1^2\Gamma & \sigma_1^2(\Gamma+\Sigma_{SS}^{-1})
\end{pmatrix} 
+\sigma_2^2
\begin{pmatrix}
\Sigma_{SS}^{-1} & 0\\
0 & 0
\end{pmatrix}
\},$$
where $\Gamma=\Sigma_{SS}^{-1}\Sigma_{SG}J^{-1}\Sigma_{SG}\Sigma_{SS}^{-1}$.
\end{proof}

\begin{proof}[of Proposition~\ref{p:var} and ~\ref{p:var'}]
When $n \rightarrow\infty$,
$\text{var}(\hat\beta_s) \rightarrow (\sigma_1^2+\sigma_2^2)\Sigma_{SS}^{-1}$ and $\text{var}(\hat\beta)\rightarrow\sigma_1^2\Sigma_{SS}^{-1}U\Sigma_{SS}^{-1}+\sigma_2^2\Sigma_{SS}^{-1}$ 
where $U =\Omega_C\Sigma_{GG}\Omega_C^\top$ and $\Omega_C\Sigma_{GG}=\Sigma_{SG}.$
So 
\[
\text{var}(\hat\beta_s)-\text{var}(\hat\beta) \rightarrow \sigma_1^2\Sigma_{SS}^{-1}(\Sigma_{SS}-U)\Sigma_{SS}^{-1}
\]
and it suffices to show that $\Sigma_{SS}-U$ is positive semi-definite.
For any $p$, the Schur complement of $\Sigma_{GG}$ in $D$ is $\Sigma_{SS}-\Sigma_{SG}\Sigma_{GG}^{-1}\Sigma_{GS}$. Since D is positive definite and symmetric, its Schur complement of $\Sigma_{GG}$ in $D$ is also positive definite. Let $p\rightarrow\infty$ and we prove that $\Sigma_{SS}-U=\Sigma_{SS}-\Sigma_{SG}\Sigma_{GG}^{-1}\Sigma_{GS}$ is positive semi-definite.

Similarly, $\text{var} \{ n^{1/2}(\hat b-\beta_0) - n^{1/2}(\tilde b - \beta_0) \} \rightarrow PQP^\top$, 
where $Q = (\Sigma_{SS} - \Sigma_{SG} \Sigma_{GG}^{-1} \Sigma_{GS} )^{-1}$ is positive semi-definite, 
and $P = \Sigma_{SS}^{-1} \Sigma_{SG} \Sigma_{GG}^{-1} \Sigma_{GS} $.
Thus the $P Q P^\top$ is positive semi-definite.
\end{proof}

\end{document}